\def\eqref#1{equation~\ref{#1}}
\def\1{\bm{1}}
\def\ra{{\textnormal{a}}}
\def\rx{{\textnormal{x}}}
\def\rva{{\mathbf{a}}}
\def\erva{{\textnormal{a}}}
\def\ervx{{\textnormal{x}}}
\def\rmA{{\mathbf{A}}}
\def\vmu{{\bm{\mu}}}
\def\vtheta{{\bm{\theta}}}
\def\va{{\bm{a}}}
\def\ve{{\bm{e}}}
\def\vx{{\bm{x}}}
\def\eva{{a}}
\def\mA{{\bm{A}}}
\def\mH{{\bm{H}}}
\def\mI{{\bm{I}}}
\def\mJ{{\bm{J}}}
\def\mX{{\bm{X}}}
\def\mSigma{{\bm{\Sigma}}}
\DeclareMathAlphabet{\mathsfit}{\encodingdefault}{\sfdefault}{m}{sl}
\SetMathAlphabet{\mathsfit}{bold}{\encodingdefault}{\sfdefault}{bx}{n}
\newcommand{\tens}[1]{\bm{\mathsfit{#1}}}
\def\tA{{\tens{A}}}
\def\tX{{\tens{X}}}
\def\gG{{\mathcal{G}}}
\def\sA{{\mathbb{A}}}
\def\sB{{\mathbb{B}}}
\def\sS{{\mathbb{S}}}
\def\emA{{A}}
\newcommand{\etens}[1]{\mathsfit{#1}}
\def\etA{{\etens{A}}}
\newcommand{\E}{\mathbb{E}}
\newcommand{\R}{\mathbb{R}}
\newcommand{\KL}{D_{\mathrm{KL}}}
\newcommand{\Var}{\mathrm{Var}}
\newcommand{\Cov}{\mathrm{Cov}}
\newcommand{\normltwo}{L^2}
\newcommand{\normlp}{L^p}
\newcommand{\parents}{Pa} 
\title{Networked Communication for Decentralised Cooperative Agents in Mean-Field Control}
\author{\name Patrick Benjamin \email phlbenjamin@gmail.com \\
      \addr Department of Computer Science\\
      University of Oxford
      \AND
      \\\;\\
      \name Alessandro Abate \email alessandro.abate@cs.ox.ac.uk \\
      \addr Department of Computer Science\\
      University of Oxford
      }
\theoremstyle{plain}
\newtheorem{theorem}{Theorem}[section]
\newtheorem{corollary}[theorem]{Corollary}
\theoremstyle{definition}  
\newtheorem{definition}[theorem]{Definition}
\newtheorem{assumption}[theorem]{Assumption}
\newtheorem{remark}[theorem]{Remark}
\let\AND\relax
\begin{document}

\maketitle

\begin{abstract} 
The mean-field framework has been used to find approximate solutions to problems involving very large populations of symmetric, anonymous agents, which may be intractable by other methods. The cooperative mean-field control (MFC) problem has received less attention than the non-cooperative mean-field game (MFG), despite the former potentially being more useful as a tool for engineering large-scale collective behaviours. Decentralised communication algorithms have recently been introduced to MFGs, giving benefits to learning speed and robustness. Inspired by this, we introduce networked communication to MFC - where populations arguably have broader incentive to communicate - and in particular to the setting where decentralised agents learn online from a single, non-episodic run of the empirical system. We adapt recent MFG algorithms to this new setting, as well as contributing a novel sub-routine allowing networked agents to estimate the global average reward from their local neighbourhood. Previous theoretical analysis of decentralised communication in MFGs does not extend trivially to MFC. We therefore contribute new theory proving that in MFC the networked communication scheme allows agents to increase social welfare faster than under \textit{both} of the two typical alternative architectures, namely independent and centralised learning. We provide experiments that support this new result across different classes of cooperative game, and also give numerous ablation studies and additional experiments concerning numbers of communication round and robustness to communication failures.
\end{abstract}

\section{Introduction}\label{introduction_section}

The mean-field framework  \citep{lasry2007mean,huangMFG} models a representative agent as interacting not with the rest of the population on a per-agent basis, but instead with a distribution over the other agents, known as the \textit{mean field}. The framework analyses the limiting case when the population consists of an infinite number of symmetric and anonymous agents, that is, they have identical reward and transition functions which depend on the mean-field distribution rather than on the actions of specific other players. The mean-field \textit{control} (MFC) problem is a cooperative scenario where the population seeks to maximise a social welfare criterion such as the average return received by the agents. Alternatively we can consider a non-cooperative scenario called a mean-field \textit{game} (MFG), where each agent seeks to maximise its individual return, to which the solution is a MFG-Nash equilibrium (MFG-NE). 

The MFC social optimum and the MFG-NE can respectively be used as approximate solutions to the associated finite-agent problem/game, with the error in the solution reducing as the number of agents \textit{N} tends to infinity \citep{finite_proofs, doi:10.1137/20M1360700,10.5555/3586589.3586718,major_minor_control,cui2023learningdecentralisedcontrol,Anahtarci2020QLearningIR,yardim2024mean, TOUMI2024111420, hu2024mfoml,chen2024periodic,bayraktar2024learning}
. MFC and MFGs have therefore been used to address the difficulty faced by multi-agent reinforcement learning (MARL), which can struggle to scale computationally as \textit{N} increases \citep{yardim2024exploiting,zeng2024single}. While MFGs have been well-studied and applied to a wide variety of real-world problems \citep{survey_learningMFGs}, MFC has received less attention, despite possibly being more useful for engineering collective behaviours to achieve global objectives, such as in consensus, synchronisation, rendezvous, exploration, coverage or task allocation problems \citep{cui2023learningdecentralisedcontrol}. This paper seek to redress some of this imbalance by adapting recent developments in MFGs for MFC.

Since MFC problems can be interpreted as optimisation problems from the perspective of a social planner, classical approaches involve centralised methods (they also do so for reasons of simplicity, as in MFGs). In this context `centralised' does not necessarily imply global observability of the whole population's actions - which could make computation infeasible given the complexity of the problem - but rather that learning is conducted from the samples of a single representative agent, whose policy updates are assumed to be automatically pushed to the rest of the population by the central node \citep{Fornasier_Solombrino_2014,carmona2019linear,doi:10.1073/pnas.1922204117,survey_learningMFGs,angiuli2021unified,angiuli2023convergence,cui2023multiagent,lee2024mean,denkert2024randomisation}
. For this reason, whilst `centralised learning' is the term used in prior works, we generally refer to `central-agent learning' to reduce confusion. Often the empirical mean field of the actual population is not even used to compute rewards or transitions, with the central learner instead updating an estimate of the mean field based only on its own policy, which is in turn used as input to its reward and transition functions \citep{carmona2019linear,angiuli2021unified,angiuli2023convergence}.

Recent works on MFGs, as in other areas of multi-agent research, have recognised that the existence of a central learner is a strong assumption in complex, real-world settings, as well as representing 
a bottleneck for computation and communication, and a vulnerable single point of failure of the system 
\citep{mainone,decentralised_review,marlreview,distributed_review,policy_mirror_independent, benjamin2024networked, benjamin2024networkedapproximation,jiang2024fully,xu2025targets,AGYEMAN2025106183,10971233}. They advocate instead for the individual agents in the empirical population to learn policies for themselves without relying on a central node. Such works also argue that other strong classical assumptions should similarly be loosened in order to make MFGs applicable to real-world, embodied problems such as swarm robotics. They therefore contend that, aside from decentralised learning, desirable qualities for mean-field algorithms include: learning from the population's empirical mean field (i.e. the mean-field distribution is generated only by the agents' policies, rather than being manipulated 
by the algorithm itself or by an external oracle/simulator); learning online from a single, non-episodic system run (i.e. similar to above, the population is not arbitrarily reset by an external controller); learning without reliance on a model of the system; and using function approximation to allow scalability to high-dimensional observations (including the option to include the mean field in the input to policies). 

Until now, no work on MFC has met all these criteria. Some recent works have considered decentralisation in MFC, but \citet{bayraktar2024learning} requires that decentralised agents optimise for learnt models of the system dynamics (and learning is only fully independent when the population is large but finite rather than infinite), while \citet{cui2023learningdecentralisedcontrol} presents a model-free deep learning algorithm that gives decentralised execution but requires centralised, episodic training. This latter work stipulates that decentralised training can be achieved if all agents can directly observe the mean-field distribution and use the same seed to correlate their actions, though they only provide empirical results for the centralised scenario, while \citet{bayraktar2024learning} provides no empirical results at all. However, assuming decentralised agents have access to this global information is unrealistic, and in the non-cooperative MFG setting \citet{benjamin2024networkedapproximation} have shown that networked communication between decentralised agents allows agents to estimate the global mean field from a local neighbourhood. They also show that proliferating high-performing policies through the population via decentralised communication (in a manner reminiscent of distributed embodied evolutionary algorithms \citep{survivability, Maintaining_Diversity, 10.1007/978-3-030-16692-2_38, cazenille2025signalling, sissodia2025evolutionary}) improves training time and avoidance of local optima, particularly over the case of agents learning entirely independently, but often also over populations with a single central learner. 



Inspired by this non-cooperative MFG work, we introduce networked communication to MFC for the first time, where populations arguably have even more incentive to communicate. This allows us to present a model-free deep learning algorithm that fulfils all of the proposed desiderata, including learning online from a single non-episodic run of the empirical system, and decentralised training without needing to observe global information: we contribute a novel sub-routine for estimating the global average reward from local communication, in addition to the existing sub-routine for estimating the global mean field from \citet{benjamin2024networkedapproximation}.
Previous theoretical analysis of networked communication in the non-cooperative MFG setting does not extend trivially to MFC, so we contribute new theoretical proofs showing that decentralised policy exchange allows networked populations to learn faster than both the independent \textit{and} the central-agent alternatives in the MFC setting, across different classes of cooperative game (coordination and anti-coordination). 
We also demonstrate this finding empirically in numerous games, as well as contributing an empirical study of the algorithms' robustness to communication failures, along with several ablation studies. 
In summary, our contributions include:

\begin{itemize}
\item We provide the first algorithms in MFC for model-free training without any central provision of information or coordination, as well as the first MFC algorithms for online learning from a single, non-episodic run of the empirical system. 
\begin{itemize}
    \item We contribute a novel sub-routine allowing decentralised agents to estimate the global average reward via networked communication, and incorporate an existing sub-routine used in MFGs for estimating the global mean field via local communication.
\end{itemize}
\item We prove theoretically that in this context, decentralised networked communication can improve learning speed over the independent \textit{and} central-agent architectures. 
\item We provide extensive experiments supporting our theoretical results in numerous games, and give ablation studies of various parts of our algorithms, as well as a study of robustness to communication failures. 

\end{itemize}

We provide  further comparison with related work in Sec. \ref{related_work_section}, give preliminaries in Sec. \ref{background_section}, and our algorithms in Sec. \ref{algorithm_section}. We present theoretical results in Sec. \ref{theory_section} and experiments in Sec. \ref{experiments_section}, before suggesting future work in Sec. \ref{conclusion_section}.

\section{Related work}\label{related_work_section}

We discuss here the research most closely related to our present work, focusing on decentralisation and networked communication, and clarifying the differences with prior methods and settings. We refer the reader to \citet{survey_learningMFGs} for a broader survey of MFC.

Numerous works claiming to study decentralisation in MFC take this to mean only that agents do not have access to the specific states of all other agents, and have policies depending on their local state and possibly the mean field, all of which we take as a given in our work. They nevertheless rely on a central learner or coordinator that provides global information to all agents, a dependence that we remove in our work. This applies, for example, to \citet{7368131}, where a `central population coordinator' broadcasts a common signal to all agents, and to \citet{tajeddini2017robust}, which presents a leader-follower setting where a `central population coordinator' estimates the mean-field trajectory. \citet{8879544} similarly requires a central coordinator, and also presents a non-cooperative scenario so does not actually fall under MFC despite being referred to as such.


In \citet{cui2023learningdecentralisedcontrol}, decentralisation applies only during execution, and they offer a centralised-training decentralised-execution method (as also in \citet{cui2023multiagent}). They say that decentralised training could be achieved if the global mean field is observable and all agents use the same seed to correlate their actions, whilst we do not require either assumption for our decentralised training algorithm. They also train episodically whereas we learn online from a single run of the system. Finally, their experiments focus only on coordination games, whereas we additionally explore empirical effects resulting from decentralised training in anti-coordination games, where agents can gain higher rewards by diversifying their behaviour.

\citet{bayraktar2024learning} considers independent, `online' learning for MFC in a setting that is different from ours. Crucially, their method involves agents first estimating a model (reward and transition functions) of the system by conducting `online' updates using samples collected while following exploration policies. Only once having done so do they compute execution policies that are optimal with respect to the estimated model. We argue that having a dedicated exploration phase is infeasible for many real-world applications, and instead present a fully model-free online learning algorithm. Moreover, their setting only permits independent learning if $N$ is large but finite. For infinite populations, a central coordinator is required to supply common noise to aid exploration during the initial phase, and if the optimal policy for the estimated model is not unique, centralised coordination is required to allow the agents to agree on which policy to execute. Our algorithm requires no such special considerations. Finally, their work is purely theoretical, whereas we provide extensive empirical results.


\citet{angiuli2021unified} and \citet{angiuli2023convergence} provide algorithms for MFC learning from a single run, but there it is a single run only of a `representative' player that is used to simulate the mean field, rather than a single run of the empirical population as in our work. Their algorithms are thus inherently centralised, as well as involving two timescales for updating the mean-field approximation, which we argue is unlikely to be a practical paradigm for training in complex real-world systems such as robotic swarms.

Our work is also closely related to \citet{benjamin2024networked} and \citet{benjamin2024networkedapproximation}, which introduce networked communication to the non-cooperative MFG setting. By adapting their communication scheme and learning algorithm, we introduce networked communication to the cooperative MFC setting, where it is arguably more applicable due to broader incentives for agents to communicate policies. Their works focus on coordination games to justify the sharing of policies (though \citet{benjamin2024networkedapproximation} does demonstrate empirically that networked agents outperform independent agents in a non-cooperative anti-coordination game, indicating that self-interested agents do nevertheless have incentive to communicate), whilst we provide extensive theoretical and empirical results on the benefits of policy sharing in MFC for both coordination and anti-coordination games. We leverage Alg. \ref{alg:mean_field_estimation_specific} from \citet{benjamin2024networkedapproximation} for estimating the global mean field from a local neighbourhood, but additionally contribute the novel Alg. \ref{average_reward_alg} for estimating the global average reward from a local neighbourhood for the MFC setting.


\section{Preliminaries}\label{background_section}






\subsection{Mean-field control}\label{prelim_mfc}

We use the following notation. $N$ is the number of agents in a population, with $\mathcal{S}$ and $\mathcal{A}$ representing the finite state and common action spaces. The set of probability measures on a finite set $\mathcal{X}$ is denoted $\Delta_\mathcal{X}$, and $\mathbf{e}_x \in \Delta_\mathcal{X}$ for $x \in \mathcal{X}$ is a one-hot vector with only the entry corresponding to $x$ set to 1, and all others set to 0. For time $t \geq 0$, $\hat{\mu}_t$ = $\frac{1}{N}\sum^N_{i=1}\sum_{s\in\mathcal{S}}$ $\mathds{1}_{s^i_t=s}\mathbf{e}_s$ $\in$ $\Delta_\mathcal{S}$ is a vector of length $|\mathcal{S}|$ denoting the empirical categorical state distribution of the $N$ agents at time $t$. For agent $i\in\{1\dots N\}$, $i$'s policy $\pi^i\in\Pi$ depends on its observation $o^i_t$. We give different forms that this observation can take, and relatedly a more formal definition of the policy, after the following.

\begin{definition}[\textit{N}-player stochastic cooperative control problem with symmetric, anonymous agents] 
This is given by the tuple $\langle$$N$, $\mathcal{S}$, $\mathcal{A}$, $P$, $R$, $\gamma$$\rangle$, where $\mathcal{A}$ is the action space, identical  for each agent, $\mathcal{S}$ is the identical  state space of each agent, such that their initial states are \{$s^i_0$\}$_{i=1}^N \in \mathcal{S}^N$ sampled from some initial distribution $\mu_0\in\Delta_{\mathcal{S}}$, and their policies are \{$\pi^i$\}$_{i=1}^N \in \Pi^N$. $P$ : $\mathcal{S}$ $\times$ $\mathcal{A}$ $\times$ $\Delta_{\mathcal{S}}$ $\rightarrow$ $\Delta_{\mathcal{S}}$ is the transition function and $R$ : $\mathcal{S}$ $\times$ $\mathcal{A}$ $\times$ $\Delta_{\mathcal{S}}$ $\rightarrow$ [0,1] is the reward function, both identical to all agents, and which map each agent's local state and action and the population's empirical distribution to transition probabilities and bounded rewards, respectively, i.e. $\forall i \in \{1,\dots,N\}$: $s^i_{t+1} \sim P(\cdot|s^i_{t},a^i_{t},\hat{\mu}_t)$ and $r^i_{t} = R(s^i_{t},a^i_{t},\hat{\mu}_t)$.
\end{definition} 

For the joint policy $\boldsymbol\pi$ := ($\pi^1,\dots,\pi^N$) $\in \Pi^N$, an individual agent's discounted return is given by:

\begin{definition}[Individual expected discounted return]\label{Discounted_individual_expected_return} For all $i,j \in \{1,\dots,N\}$, $i$'s return is \[V^i(\boldsymbol{\pi},\mu_{\bar{t}}) =  \mathbb{E}\left[\sum^{\infty}_{t={\bar{t}}}\gamma^{t}R(s^i_t,a^i_t,\hat{\mu}_t) \bigg|\substack{s^j_{\bar{t}}\sim\mu_{\bar{t}} \\ a^j_t\sim \pi^j(o^j_t)\\ s^j_{t+1} \sim P(\cdot|s^j_{t},a^j_{t},\hat{\mu}_t)}\right].\]\end{definition}

However, the maximisation objective for this \textit{cooperative} problem is:

\begin{definition}[Population-average expected discounted  return]\label{Discounted_population_average_expected_return}
For $i,j \in \{1,\dots,N\}$ the return is
\[V^{pop}(\boldsymbol{\pi},\mu_{\bar{t}}) = \frac{1}{N}\sum_i^N V^i(\boldsymbol{\pi},\mu_{\bar{t}}) = \mathbb{E}\left[\frac{1}{N}\sum^{\infty}_{t={\bar{t}}}\sum_i^N\gamma^{t}R(s^i_t,a^i_t,\hat{\mu}_t) \bigg|\substack{s^j_{\bar{t}}\sim\mu_{\bar{t}} \\ a^j_t\sim \pi^j(o^j_t)\\ s^j_{t+1} \sim P(\cdot|s^j_{t},a^j_{t},\hat{\mu}_t)}\right].\]
\end{definition}

That is, the solution to the control problem is $\boldsymbol{\pi}^* = \arg\max_{\boldsymbol{\pi}\in\Pi^N} V^{pop}(\boldsymbol{\pi},\mu_{\bar{t}})$. 

At the limit as $N \rightarrow \infty$, the infinite population of agents can be characterised as a limit distribution $\mu \in \Delta_\mathcal{S}$; the infinite-agent setting is termed a MFC problem. The \textit{mean-field flow} $\boldsymbol\mu$ is given by the infinite sequence of mean-field distributions s.t.  $\boldsymbol\mu = (\mu_t)_{t\geq0}$. 

\begin{definition}[Induced mean-field flow] 

We denote by $I(\pi)$ the mean-field flow $\boldsymbol\mu$ induced when all the agents follow $\pi$, where this is generated from $\pi$ by \[\mu_{t+1} (s') = \sum_{s,a}\mu_t (s)\pi (a|o_t)P(s'|s,a,\mu_t).\] The snapshot of this induced flow at $t$ is given by $I(\pi)_t$.
\end{definition}

\begin{definition}[Social welfare]\label{social_welfare_function}
When all agents follow policy $\pi$ giving mean-field flow $\boldsymbol\mu = I(\pi)$, $\pi$'s social welfare is \[W(\pi;I(\pi)) = \mathbb{E}\left[
\sum^{\infty}_{t={\bar{t}}}\gamma^{t}(R(s_t,a_t,I(\pi)_t))\bigg|
\substack{s_{\bar{t}}\sim\mu_{\bar{t}}\\ 
a_t\sim \pi(\cdot|o_t)\\ s_{t+1} \sim P(\cdot|s_{t},a_{t},{I(\pi)_t})}\right].\]
\end{definition}

\begin{definition}[Social optimum]\label{social_optimum} The solution to the MFC problem is a social optimum policy $\pi^*\in\Pi$ that maximises the social welfare function in Def. \ref{social_welfare_function}, i.e. $\pi^* = \arg\max_{\pi\in\Pi} W(\pi; I(\pi))$.\end{definition}

\begin{remark}
Previous works showed that the MFC social optimum $\pi^*$ gives a good approximation for the harder-to-solve finite-agent problem (i.e. if $\boldsymbol\pi=(\pi^*,\dots,\pi^*)$), with the error characterised by $\mathcal{O}(\frac{1}{\sqrt{N}})$ \citep{doi:10.1137/20M1360700,10.5555/3586589.3586718,major_minor_control,cui2023learningdecentralisedcontrol,bayraktar2024learning}.
\end{remark}

When the distribution is the same for all $t$, i.e. $\mu_t=\mu_{t+1}$ $\forall t \geq 0$, we say the mean-field flow is \textit{stationary}, giving a stationary MFC problem. \textit{Non-stationary} problems may require the policy to depend on the mean field such that $o^i_t = (s^i_t, \hat{\mu}_t)$, whereas the observation in the stationary case can be simplified to $o^i_t = s^i_t$. However, since classical approaches to the MFC problem often conceive of a central planner trying to guide the population to a distribution that maximises the expected return, they sometimes have policies that depend on the mean field even in the stationary case \citep{survey_learningMFGs,carmona2021modelfree,cui2023learningdecentralisedcontrol}. 
Therefore \textit{we permit mean field-dependent policies for the sake of generality, but show through our ablation studies that in practice our algorithms require only $\pi^i(a|o^i_t) = \pi^i(a|s^i_t)$} in our experimental tasks, which have \textit{stationary} solutions.

Furthermore, it is unrealistic to assume that decentralised agents with a possibly limited communication radius would have perfect observability of the global mean field $\hat{\mu}_t$. Therefore we allow agents to form a local estimate $\tilde{\hat{\mu}}^i_t$ which can be improved by communication with neighbours, using Alg. \ref{alg:mean_field_estimation_specific} (from Alg. 3 in \citet{benjamin2024networkedapproximation} for the MFG setting). We thus have $o^i_t = (s^i_t, \tilde{\hat{\mu}}^i_t)$. Formally we can now say that when  $o^i_t = (s^i_t, \hat{\mu}_t)$ or $(s^i_t, \tilde{\hat{\mu}}^i_t)$, we have the set of policies defined as $\Pi$ = \{$\pi$ : $\mathcal{S} \times \Delta_\mathcal{S} \rightarrow \Delta_\mathcal{A}$\}, {and the set of Q-functions denoted $\mathcal{Q} = \{q : \mathcal{S} \times \Delta_\mathcal{S} \times \mathcal{A} \rightarrow \mathbb{R}\}$}. (N.b. when $o^i_t = s^i_t$, we instead have $\Pi$ = \{$\pi$ : $\mathcal{S}  \rightarrow \Delta_\mathcal{A}$\} and $\mathcal{Q} = \{q : \mathcal{S}  \times \mathcal{A} \rightarrow \mathbb{R}\}$.)

\begin{figure*}[t]
    \centering
    \includegraphics[width=\textwidth]{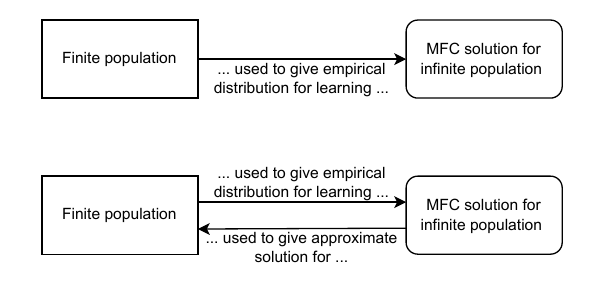}
    \caption{Two possible ways to conceive of our work regarding the relationship between the  infinite- and finite-population control problems, described in Rem. \ref{conceptions_remark}. 
    Note that using the finite empirical population to learn the single-policy MFC social optimum $\boldsymbol\pi=(\pi^*,\dots,\pi^*)$ for the infinite population (Def. \ref{social_optimum}) is \textit{not} the same as directly finding $\boldsymbol{\pi}^* = \arg\max_{\boldsymbol{\pi}\in\Pi^N} V^{pop}(\boldsymbol{\pi},\mu_{\bar{t}}) = (\pi^1,\dots,\pi^N)$, i.e. the tuple of \textit{individual} policies that maximises the expected finite population-average return in Def. \ref{Discounted_population_average_expected_return}, a problem known to be hard \citep{cui2023learningdecentralisedcontrol,b57d2987-f484-384d-ab52-5bac787e11f2}.}
    \label{conceptions_diagram}
\end{figure*}
\vspace{0.3cm}
\begin{remark}\label{conceptions_remark}
We have just seen that an optimal solution to the theoretical MFC problem is a single policy that, when followed by all agents in the infinite population, maximises the population's expected return. We give two ways to conceive of our work, illustrated in Fig. \ref{conceptions_diagram}, which mirror and make more explicit the motivations underpinning other MFC works \citep{cui2023learningdecentralisedcontrol, dayanikli2024deep,zaman2024robust,bayraktar2024learning,10928821}. 
\begin{enumerate}
    \item Firstly, while previous works might make unrealistic assumptions about access to an oracle for the infinite population, we contribute algorithms that allow the solution to a MFC problem to be learnt using the empirical distribution of a decentralised finite population. Note that it is unnecessary (and may be impractical) to assume the decentralised agents always follow a single identical policy throughout training. 
    \item Alternatively, we may have originally been interested in solving a cooperative problem for a large, finite population, but, due to the scalability issues of learning approaches like MARL, were forced to turn to the MFC framework to find a policy that gives an approximate solution to the finite-population problem. We contribute algorithms that allow the deployed finite population to find the MFC solution that in turn approximately solves the original problem, without unrealistic assumptions about centralised training. Under this framing, it may matter less whether all agents follow a single policy in practice (\citet{policy_mirror_independent} and \citet{benjamin2024networked,benjamin2024networkedapproximation} follow a similar logic in MFGs). 
\end{enumerate}

\end{remark}

\subsection{Munchausen Online Mirror Descent}\label{omd_appx}

Recent works have solved MFGs from non-episodic runs of the finite-population empirical system using a form of policy iteration called Online Mirror Descent (OMD) \citep{benjamin2024networkedapproximation}; we adapt this to learn a social optimum in the MFC setting. OMD involves beginning with an initial policy $\pi_0$, and then at each iteration $k$, evaluating the current policy $\pi_k$ with respect to its induced mean-field flow $\boldsymbol{\mu} = I(\pi_k)$ to compute its Q-function $Q_{k+1}$. To stabilise the learning process, we then use a weighted sum over this and past Q-functions, and set $\pi_{k+1}$ to be the softmax over this weighted sum, i.e.  $\pi_{k+1}(\cdot|o) = softmax\left(\frac{1}{\tau_q}\sum^{k+1}_{\kappa=0}{Q_\kappa}(o,\cdot)\right)$. $\tau_{q}$ is a temperature parameter that scales the entropy in Munchausen RL (see Sec. \ref{q_update}) \citep{NEURIPS2020_2c6a0bae}; this is a different temperature to the one agents use when communicating policies, denoted $\tau^{comm}_k$ and discussed in Sec. \ref{policy_adoption_and_communication}.

If the Q-function is approximated non-linearly, it is difficult to compute this weighted sum. The \textit{Munchausen trick} addresses this by computing a single Q-function that mimics the weighted sum using implicit regularisation based on the Kullback-Leibler (KL) divergence between $\pi_{k}$ and $\pi_{k+1}$ \citep{NEURIPS2020_2c6a0bae}. Using this reparametrisation gives Munchausen OMD (MOMD), detailed in Sec. \ref{q_update} \citep{scalable_deep, wu2024populationaware}. MOMD does not bias policies, and has the same convergence guarantees as OMD \citep{hadikhanloo2017learninganonymousnonatomicgames, perolat2021scaling, wu2024populationaware}.

\subsection{Networks}\label{preliminaries_networks}
Our decentralised population exhibits two time-varying graphs. The first is a communication network, by which agents can exchange information:

\begin{definition}[Time-varying communication network] 
The time-varying graph ($\mathcal{G}^{comm}_t$)$_{t\geq0}$ is given by $\mathcal{G}^{comm}_t$ = ($\mathcal{N}, \mathcal{E}^{comm}_t$), where $\mathcal{N}$ is the set of vertices each representing an agent $i \in \{1,\dots,N\}$, and the edge set $\mathcal{E}_{t}$ $\subseteq$ \{(\textit{i},\textit{j}) : \textit{i},\textit{j} $\in$ $\mathcal{N}$, \textit{i} $\neq$ \textit{j}\} is the set of undirected links present at time \textit{t}. A network's \textit{diameter} $d_{\mathcal{G}^{comm}_t}$ is the maximum of the shortest path lengths between any pair of nodes.\end{definition}

In principle, agents can use this same communication network to receive information about others' state in order to estimate the mean field. However, we also define an alternative observation graph that is useful in a specific subclass of environments, which can most intuitively be thought of as those where agents' states are positions in physical space, which include those in our experiments. When this is the case, we usually think of agents' ability to observe each other as depending more abstractly on whether states are visible to each other. This visibility graph is:

\begin{definition}[Time-varying state-visibility graph] 
The time-varying state visibility graph ($\mathcal{G}^{vis}_t$)$_{t\geq0}$ is given by $\mathcal{G}^{vis}_t$ = ($\mathcal{S'}, \mathcal{E}^{vis}_t$), where $\mathcal{S'}$ is the set of vertices representing the environment states $\mathcal{S}$, and the edge set $\mathcal{E}^{vis}_{t}$ $\subseteq$ \{(\textit{m},\textit{n}) : \textit{m},\textit{n} $\in$ $\mathcal{S'}$\} is the set of undirected links present at time \textit{t}, indicating which states are visible to each other.
\end{definition} 

In Sec. \ref{mean_field_estimate_section} we present Alg. \ref{alg:mean_field_estimation_specific}, which forms an initial estimate of the global empirical mean field (to serve as an observation input for agents' Q-/policy-networks) via the visibility graph $\mathcal{G}^{vis}_t$, before refining this estimate via the communication graph $\mathcal{G}^{comm}_t$. \citet{benjamin2024networkedapproximation} discusses an algorithm for more general settings where the visibility graph $\mathcal{G}^{vis}_t$ does not apply.

\section{Learning and estimation algorithms}\label{algorithm_section}

We adapt recent algorithms for the MFG setting, where networked communication is used 1) to form local estimates of the global empirical mean field, and 2) to allow agents to adopt better-performing policies from neighbours to accelerate learning \citep{benjamin2024networkedapproximation}. We adapt these algorithms for cooperative MFC, where decentralised agents must optimise the population-average return instead of their individual one (the decentralised agents may not always follow a common policy while training unless we make strong assumptions on the communication network as in Sec. \ref{theory_section}, so we do not directly optimise social welfare from Def. \ref{social_welfare_function}). 

It is unrealistic to assume that decentralised agents have access to the global average reward, so we find a third use of the communication network in 3) allowing agents to estimate the global average reward $\hat{r}_t$ from a local neighbourhood. We contribute a novel algorithm Alg. \ref{average_reward_alg} for this purpose (Sec. \ref{reward_estimation_section}), and we describe our main learning method Alg. \ref{main_algo} in Sec. \ref{q_update}. Our policy communication algorithm Alg. \ref{policy_communication_algo}, based on that in \citet{benjamin2024networkedapproximation} for the MFG setting, is described in Sec. \ref{policy_adoption_and_communication}. Meanwhile Alg. \ref{alg:mean_field_estimation_specific} for estimating the mean field, which is taken from Alg. 3 in \citet{benjamin2024networkedapproximation} for the MFG setting, is described in Sec. \ref{mean_field_estimate_section}.

\begin{algorithm}[t]
\caption{Average reward estimation and communication}
\label{average_reward_alg}
\begin{algorithmic}[1]
\REQUIRE Time-dependent communication graph $\mathcal{G}^{comm}_t$, rewards $\{r^i_t\}_{i=1}^{N}$, number of communication rounds $C_r$
\STATE $\forall i :$ Initialise reward sets $\hat{\mathcal{R}}^i_{t,1} \leftarrow \{(ID^i, r^i_t)\}$ \label{prepare_unique_reward}
\FOR{$c_r$ in $1, \dots, C_r$}\label{begin_broadcast_reward_set}
    \STATE $\forall i :$ Broadcast $\hat{\mathcal{R}}^i_{t, c_r}$
    \STATE $\forall i : J^i_t \leftarrow \{j \in \mathcal{N} : (i, j) \in \mathcal{E}^{comm}_t\}$
    \STATE $\forall i :$ $\hat{\mathcal{R}}^i_{t, (c_r+1)} \leftarrow \hat{\mathcal{R}}^i_{t, c_r} \cup \bigcup_{j \in J^i_t} \hat{\mathcal{R}}^j_{t, c_r}$
\ENDFOR\label{end_broadcast_reward_set}
\STATE $\forall i :$ $\tilde{\hat{r}}^i_t \leftarrow \frac{1}{|\hat{\mathcal{R}}^i_{t, C_r}|} \sum_{(ID, r) \in \hat{\mathcal{R}}^i_{t, C_r}} r$\label{average_rewards_collected}
\STATE \textbf{return} Estimates of average reward $\left\{\tilde{\hat{r}}^i_t\right\}_{i=1}^N$
\end{algorithmic}
\end{algorithm}

\subsection{Sub-routine for networked estimation of global average reward}\label{reward_estimation_section}

Our novel Alg. \ref{average_reward_alg} involves agents using the communication network $\mathcal{G}^{comm}_t$ to locally estimate the global population-average reward received after a given step in the environment. Maximising the population-average reward ensures agents are solving the cooperative MFC problem instead of the non-cooperative MFG. Agents broadcast their received reward with a unique ID to ensure each reward is only counted once (Line \ref{prepare_unique_reward}). They collect those received from neighbours, and repeat the process of broadcasting and expanding their collections for a further $C_r - 1$ rounds, so as to receive rewards from agents more than one hop away on the network (Lines \ref{begin_broadcast_reward_set}-\ref{end_broadcast_reward_set}). They finally set their estimate of the global average to the average of the rewards they have collected (Line \ref{average_rewards_collected}).

\subsection{Main learning algorithm for updating Q-networks and policies
}\label{q_update}

\begin{algorithm}[t]
    \caption{Decentralised MFC learning from non-episodic system run} \label{main_algo}
    \begin{algorithmic}[1]
    \REQUIRE loop parameters $K, M, L, E, C_e,C_r,C_p$,   learning parameters
    $\gamma, \tau_q, |B|, 
    cl, \nu$, $\{\tau^{comm}_k\}_{k \in \{0,\dots,K-1\}}$
    \REQUIRE initial states $\{s^i_0\}_{i=1}^{N}$; $t \leftarrow 0$ 
    \STATE $\forall i$ : Randomly initialise parameters $\theta^i_0$ of Q-networks $\check{Q}_{\theta^i_0}(o,\cdot)$, and set $\pi^i_0(a|o)= $ softmax$\left(\frac{1}{\tau_q}\check{Q}_{\theta^i_0}(o,\cdot)\right)(a)$\label{initalise_line}
    \FOR{$k \in 0,\dots,K-1$}
        \STATE $\forall i$: Empty $i$'s buffer\label{empty_buffer}
        \FOR{$m \in 0,\dots,M_{}-1$}\label{start_m_loop}
        \STATE $\{o^i_t\}_{i=1}^{N} \gets$ \textbf{EstimateMeanFieldAlg. \ref{alg:mean_field_estimation_specific}}$\left(\mathcal{G}^{vis}_t,\mathcal{G}^{comm}_t,\{s^i_t\}_{i=1}^{N}\right)$
            \STATE Take step $\forall i : a^i_t \sim \pi^i_{k}(\cdot|o^i_t), r^i_{t} = R(s^i_{t},a^i_{t},\hat{\mu}_t),s^i_{t+1} \sim P(\cdot|s^i_{t},a^i_{t},\hat{\mu}_t)$; $t \leftarrow t + 1$\label{step1}
            \STATE $\{\tilde{\hat{r}}^i_t\}_{i=1}^N \gets$ \textbf{EstimateAverageRewardAlg. \ref{average_reward_alg}}$\left(\mathcal{G}^{comm}_t, \{r^i_t\}_{i=1}^{N}\right)$\label{get_average_reward_estimate}
            \STATE $\forall i$: Add $\left(o^i_t,a^i_t,\tilde{\hat{r}}^i_t,o^i_{t+1}\right)$ to $i$'s buffer\label{store_transition}   
        \ENDFOR\label{end_m_loop}
        \FOR{$l \in 0,\dots,L-1$}\label{beginlearningwithbuffer}
            \STATE $\forall i : $ Sample batch $B^i_{k,l}$ from $i$'s buffer\label{sample_from_buffer}
            \STATE Update $\theta$ to minimise $\hat{\mathcal{L}}(\theta,\theta')$ as in Def. \ref{loss}\label{minimise_loss_line}
            \STATE If $l\mod$ $\nu = 0$, set $\theta'\leftarrow\theta$ \label{update_target_line}
        \ENDFOR \label{endlearningwithbuffer}
        \STATE $\check{Q}_{\theta^i_{k+1}}(o,\cdot) \leftarrow \check{Q}_{\theta^i_{k,L}}(o,\cdot)$ 
        \STATE $\forall i$ : $\pi^i_{k+1}(a|o) \gets $ softmax$\left(\frac{1}{\tau_q}\check{Q}_{\theta^i_{k+1}}(o,\cdot)\right)(a)$\label{policy_update_line}
        \STATE $\left(\{\pi^i_{k+1}\}_i, \{s^i_t\}_i, t\right)$ $\gets$ \textbf{CommunicatePolicyAlg. \ref{policy_communication_algo}}$\left(\mathcal{G}^{comm}_t,\{\pi^i_{k+1}\}_i, \{s^i_t\}_i, t\right)$\label{run_CommunicatePolicyAlg_line}
    \ENDFOR
    \STATE \textbf{return}  policies \{$\pi^i_K$\}$_{i=1}^{N}$
    \end{algorithmic}
\end{algorithm}

Our novel Alg. \ref{main_algo}, adapted from non-cooperative Alg. 1 in \citet{benjamin2024networkedapproximation}, contains the core method for online MFC learning using the empirical mean field in a non-episodic system run. Our MOMD-based method (Sec. \ref{omd_appx}) works as follows. Each agent $i$ approximates its Q-function $\check{Q}_{\theta^i_k}(o,\cdot)$ with its own neural network parametrised by $\theta^i_k$. Agent $i$'s policy is determined by \[\pi_{\theta^i_k}(a|o)= \text{softmax}\left(\frac{1}{\tau_{q}}\check{Q}_{\theta^i_k}(o,\cdot)\right)\left(a\right).\] We denote this as $\pi^i_{k}(a|o)$ for simplicity when appropriate. Each agent maintains a buffer (with size $M$) of collected transitions of the form $\left(o^i_t,a^i_t,\tilde{\hat{r}}^i_t,o^i_{t+1}\right)$, where $\tilde{\hat{r}}^i_t$ is $i$'s local estimate of the global average reward obtained by running Alg. \ref{average_reward_alg} (Line \ref{get_average_reward_estimate}). At each iteration $k$, agents empty their buffer ({Line} \ref{empty_buffer}) before collecting $M$ new transitions in the environment ({Lines} \ref{start_m_loop}-\ref{end_m_loop}). Each decentralised agent then trains its Q-network $\check{Q}_{\theta^i_k}$ via $L$ updates ({Lines} \ref{beginlearningwithbuffer}-\ref{endlearningwithbuffer}) as follows. 

For stability, $i$ also maintains a target network $\check{Q}_{\theta^{i,'}_{k,l}}$ with the same architecture but parameters $\theta^{i,'}_{k,l}$ copied from $\theta^{i}_{k,l}$ less regularly than $\theta^{i}_{k,l}$ themselves are updated, i.e. only every $\nu$ learning iterations ({Line} \ref{update_target_line}).  At each iteration $l$, the agent samples a random batch $B^i_{k,l}$ of $|B|$ transitions from its buffer ({Line} \ref{sample_from_buffer}). It then trains its Q-network using stochastic gradient descent to minimise the loss in Def \ref{loss} ({Line} \ref{minimise_loss_line}). The trained Q-network determines $i$'s updated policy (Line \ref{policy_update_line}).

\begin{definition}[Q-network empirical loss]\label{loss}
The training loss to be minimised is given by 
\begin{align*}
    \hat{\mathcal{L}}(\theta,\theta') = \frac{1}{|B|}\sum_{transition \in B^i_{k,l}}\left|\check{Q}_{\theta^i_{k,l}}(o_t,a_t) - T\right|^2,
    \end{align*}
\begin{align*}
    \text{where} \;\;\;\;T = \tilde{\hat{r}}_t + 
    \left[\tau_{q}\ln\pi_{\theta^{i,'}_{k,l}}(a_t|o_t)\right]^0_{cl} + \gamma\sum_{a\in\mathcal{A}}\pi_{\theta^{i,'}_{k,l}}(a|o_{t+1})\left(\check{Q}_{\theta^{i,'}_{k,l}}(o_{t+1},a) - \tau_{q}\ln\pi_{\theta^{i,'}_{k,l}}(a|o_{t+1})\right).\end{align*}
\end{definition}

For $cl < 0$, $[\cdot]^0_{cl}$ is a clipping function used in Munchausen RL to prevent numerical issues if the policy is too close to deterministic, as the log-policy term is otherwise unbounded \citep{NEURIPS2020_2c6a0bae,wu2024populationaware}.

\subsection{Sub-routine for communicating and refining policies}\label{policy_adoption_and_communication}

\begin{algorithm}[!t]
    \caption{Policy communication and selection} \label{policy_communication_algo}
    \begin{algorithmic}[1]
    \REQUIRE Time-dependent communication graph $\mathcal{G}^{comm}_t$, loop parameters $E, C_p$,   learning parameters
    $\gamma$, $\{\tau^{comm}_k\}_{k \in \{0,\dots,K-1\}}$
    \REQUIRE policies \{$\pi^i_{k+1}$\}$_{i=1}^{N}$; states \{$s^i_t$\}$_{i=1}^{N}$; $t$ 
    
        \STATE $\forall i :$ $\sigma^i_{k+1} \gets 0$\label{start_eval_line}
        \FOR{$e \in 0,\dots,E-1$ evaluation steps}
        \STATE $\{o^i_t\}_{i=1}^{N} \gets$ \textbf{EstimateMeanFieldAlg. \ref{alg:mean_field_estimation_specific}}$\left(\mathcal{G}^{vis}_t,\mathcal{G}^{comm}_t,\{s^i_t\}_{i=1}^{N}\right)$
        \STATE Take step $\forall i : a^i_t \sim \pi^i_{k}(\cdot|o^i_t), r^i_{t} = R(s^i_{t},a^i_{t},\hat{\mu}_t),s^i_{t+1} \sim P(\cdot|s^i_{t},a^i_{t},\hat{\mu}_t)$\label{step2}
        \STATE $\forall i :$ $\sigma^i_{k+1} \gets \sigma^i_{k+1} + \gamma^{e}\cdot r^i_{t}$ 
        \STATE $t \leftarrow t + 1$
        \ENDFOR\label{end_eval_line}
        \FOR{$C_p$ rounds}\label{start_comm}
            \STATE $\forall i :$ Broadcast $\sigma^i_{k+1}, \pi^{i}_{k+1}$\label{broadcast_line}
            \STATE $\forall i : J^i_t \leftarrow i\cup\{j \in \mathcal{N} : (i,j) \in \mathcal{E}^{comm}_{t}$\}\label{receive_line}
            \STATE $\forall i:$ Select $\mathrm{adopted}^i \sim$ Pr$\left(\mathrm{adopted}^i = j\right)$ = $\frac{\exp{(\sigma^j_{k+1}}/\tau^{comm}_k)}{\sum_{x\in J^i_t}\exp{(\sigma^x_{k+1}}/\tau^{comm}_k)}$ $\forall j \in J^i_t$ \label{softmax_adoption_prob}
            \STATE $\forall i : \sigma^i_{k+1} \leftarrow \sigma^{\mathrm{adopted}^i}_{k+1}, \pi^i_{k+1} \leftarrow \pi^{\mathrm{adopted}^i}_{k+1}$\label{adopt_line}
            \STATE $\{o^i_t\}_{i=1}^{N} \gets$ \textbf{EstimateMeanFieldAlg. \ref{alg:mean_field_estimation_specific}}$\left(\mathcal{G}^{vis}_t,\mathcal{G}^{comm}_t,\{s^i_t\}_{i=1}^{N}\right)$
            \STATE Take step $\forall i : a^i_t \sim \pi^i_{k}(\cdot|o^i_t), r^i_{t} = R(s^i_{t},a^i_{t},\hat{\mu}_t),s^i_{t+1} \sim P(\cdot|s^i_{t},a^i_{t},\hat{\mu}_t)$; $t \leftarrow t + 1$\label{step3}
            \ENDFOR\label{end_comm}
     \STATE \textbf{return} (policies \{$\pi^i_{k+1}$\}$_{i=1}^{N}$, states \{$s^i_t$\}$_{i=1}^{N}$, $t$)
    \end{algorithmic}
\end{algorithm}

Alg. \ref{policy_communication_algo} (based on Alg. 1 in \citet{benjamin2024networkedapproximation} for MFGs) uses the communication network $\mathcal{G}^{comm}_t$ to spread policy updates that are estimated to be better performing through the population, allowing faster learning than in the independent and central-agent cases.

Alg. \ref{policy_communication_algo} is run after agents have independently updated their policies according to their newly trained Q-networks at each iteration $k$ of the main learning algorithm (Line \ref{run_CommunicatePolicyAlg_line}, Alg. \ref{main_algo}). In Alg. \ref{policy_communication_algo}, agents obtain an approximation of their \textit{individual} discounted expected return $\{{V}^i(\boldsymbol{\pi},\mu_t)\}_{i=1}^{N}$ (Def. \ref{Discounted_individual_expected_return}, i.e. \textit{not} the population-average return, which would not give differentiation between the different updated policies). They do so by collecting individual rewards for $E$ steps (not added to the training buffer), and calculating the discounted sum of rewards over these finite steps, setting this value to $\sigma^i_{k+1}$ ({Lines} \ref{start_eval_line}-\ref{end_eval_line}). We can characterise this approximation of the infinite-step return as \{$\sigma^i_{k+1}$\}$_{i=1}^{N}$ = $\{\hat{V}^i(\boldsymbol{\pi}_{k+1},\mu_t; E)\}_{i=1}^{N}$. 

They then broadcast their Q-network parameters along with $\sigma^i_{k+1}$ ({Line} \ref{broadcast_line}). Receiving these from their neighbours $J^i_t$ on the network, agents select which set of parameters to adopt by taking a softmax over their own and the received estimate values $\sigma^j_{k+1}$ $\forall j \in J^i_t$, defined as follows ({Lines} \ref{receive_line}-\ref{adopt_line}): \[\mathrm{adopted}^i \sim \text{Pr}\left(\mathrm{adopted}^i = j\right) = \frac{\exp{(\sigma^j_{k+1}}/\tau^{comm}_k)}{\sum_{x\in J^i_t}\exp{(\sigma^x_{k+1}}/\tau^{comm}_k)}.\] They repeat this broadcast and adoption process for $C_p$ rounds (distinct from the $C_r$/$C_e$ communication rounds for the other sub-routines). 

\subsection{Sub-routine for networked estimation of global empirical mean-field}\label{mean_field_estimate_section}

Networked agents use Alg. \ref{alg:mean_field_estimation_specific} (this is Alg. 3 from \citet{benjamin2024networkedapproximation} for the {MFG} setting) to locally estimate the global empirical mean field, to serve as an observation input for their Q-/policy-networks. Recall that we include this added observation and sub-routine for generality, especially for non-stationary problems. However it is often not necessary, particularly in stationary problems like those in our experiments, where agents can find the social optimum while only observing $o^i_t = s^i_t$, and therefore would not need to estimate the mean field.

Alg. \ref{alg:mean_field_estimation_specific} involves agents using the visibility graph $\mathcal{G}^{vis}_t$ to count the number of agents in locations that fall within the visibility radius ({Line} \ref{visible_line}). For $C_e$ communication rounds, agents can supplement this local count with those received from neighbours over the communication network $\mathcal{G}^{comm}_t$, in order to count agents that do not fall within the visibility radius ({Lines} \ref{begin_MF_count_comm}-\ref{end_MF_count_comm}). We assume agents know the population's total size $N$, and therefore can distribute the uncounted agents uniformly over the states that remain unaccounted for after the communication rounds ({Lines} \ref{get_counted}-\ref{get_uncounted_states}). Agents now have a vector containing a true or estimated count for every state; this is converted to an estimated empirical mean field by dividing all counts by $N$ ({Lines} \ref{convert_counted}-\ref{distribute_uncounted}).

\begin{algorithm}[t]
\caption{Mean-field estimation and communication for environments with $\mathcal{G}^{vis}_t$}
\label{alg:mean_field_estimation_specific}
\begin{algorithmic}[1]
\REQUIRE Time-dependent visibility graph $\mathcal{G}^{vis}_t$, time-dependent communication graph $\mathcal{G}^{comm}_t$, states $\{s^i_t\}_{i=1}^{N}$, number of communication rounds $C_e$
\STATE $\forall i,s :$ Initialise count vector $\hat{\upsilon}^i_{t,1}[s]$ with $\emptyset$
\STATE $\forall i$, $\forall s' \in \mathcal{S'} : (s^i_t, s') \in \mathcal{E}^{vis}_t$ : $\hat{\upsilon}^i_{t,1}[s']$ $\leftarrow$ $\sum_{j \in \{1,\dots,N\} : s^j_t = s'} 1$\label{visible_line}
\FOR{$c_e$ $\in$ $1,\dots,C_e$}\label{begin_MF_count_comm}
    \STATE $\forall i :$ Broadcast $\hat{\upsilon}^i_{t,c_e}$
    \STATE $\forall i : J^i_t \leftarrow i \cup \{j \in \mathcal{N} : (i,j) \in \mathcal{E}^{comm}_t\}$
    \STATE $\forall i,s :$ Initialise new count vector $\hat{\upsilon}^i_{t,(c_e+1)}[s]$ with $\emptyset$
    \STATE $\forall i,s$ and $\forall j \in J^i_t : \hat{\upsilon}^i_{t,(c_e+1)}[s] \leftarrow \hat{\upsilon}^j_{t,c_e}[s]$ if $\hat{\upsilon}^j_{t,c_e}[s] \neq \emptyset$
\ENDFOR\label{end_MF_count_comm}
\STATE $\forall i :$ $counted\_agents^i_t \leftarrow \sum_{s \in \mathcal{S} : \hat{\upsilon}^i_t[s] \neq \emptyset} \hat{\upsilon}^i_t[s]$\label{get_counted}
\STATE $\forall i :$ $uncounted\_agents^i_t \leftarrow N - counted\_agents^i_t$
\STATE $\forall i :$ $unseen\_states^i_t \leftarrow \sum_{s \in \mathcal{S} : \hat{\upsilon}^i_t[s] = \emptyset} 1$\label{get_uncounted_states}
\STATE \label{convert_counted} $\forall i,s$ where $\hat{\upsilon}^i_t[s]$ is not $\emptyset$ : $\tilde{\hat{\mu}}^i_t[s] \leftarrow \frac{\hat{\upsilon}^i_t[s]}{N}$ 
\STATE \label{distribute_uncounted} $\forall i,s$ where $\hat{\upsilon}^i_t[s]$ is $\emptyset$ : $\tilde{\hat{\mu}}^i_t[s] \leftarrow \frac{uncounted\_agents^i_t}{N \times unobserved\_states^i_t}$
\STATE \textbf{return}  $\{$(states $s^i_t$, mean-field estimates $\tilde{\hat{\mu}}^i_t$)$\}_{i=1}^{N}$
\end{algorithmic}
\end{algorithm}

\section{Theoretical results}\label{theory_section}

\subsection{Introduction}

We follow the definitions of the central-agent and independent-learning architectures from closely related works that learn MFGs online from a non-episodic run of the empirical system \citep{policy_mirror_independent,benjamin2024networked,benjamin2024networkedapproximation}; both architectures can each be seen as special cases of our networked algorithm:  

\begin{itemize}
    \item In the \textbf{central-agent} case, only arbitrary central agent $i=1$ updates a Q-network and automatically pushes this to all other agents in place of the decentralised policy communication in Line \ref{run_CommunicatePolicyAlg_line} of Alg. \ref{main_algo}. Additionally, the true global mean-field distribution and average reward are always used in place of the local estimates, i.e. $\tilde{\hat{\mu}}^i_t$ = ${\hat{\mu}}_t$ and $\tilde{\hat{r}}^i_t = \hat{r}$.
    \item In the \textbf{independent} case, there are never any links in $\mathcal{G}^{comm}_t$ or $\mathcal{G}^{vis}_t$, i.e. $\mathcal{E}^{comm}_t = \mathcal{E}^{vis}_t = \emptyset$. 
\end{itemize}

We prove theoretically that the policy communication and adoption scheme allows networked agents to increase their returns faster than these alternatives (with the central-agent paradigm being potentially unrealistic and vulnerable in any case). Rem. \ref{intuitive_explanation} suggests informal reasons for our formal results to aid intuitive understanding. 

\vspace{0.1cm}

\begin{remark}\label{intuitive_explanation}
Like many cooperative learning paradigms, both the independent {and} central-agent alternatives to our networked architecture may suffer from the credit-assignment problem, in that it is not clear how agents' local state $s^i_t$ and local action $a^i_t$ contributed to the (locally estimated) \textit{average} reward $\tilde{\hat{r}}^i_t$ \citep{10777009,cazenille2025signalling}. Agents may receive low individual reward $r^i_t$ by taking action $a^i_t$ given $o^i_t$, but would nevertheless learn that doing so was `good' if the rest of the population took highly rewarded actions at the same step giving high average reward $\tilde{\hat{r}}^i_t$. By drawing spurious relations, an agent's updated policy $\pi^i_{k+1}(a|o)$ may negatively impact (or simply not advance) the goal of maximising social welfare. Including the (estimated) empirical mean field in the observation $o^i_t = (s^i_t, \tilde{\hat{\mu}}^i_t)$ might mitigate this slightly by indicating which mean fields gave high average rewards. However, this does not solve the issue of allowing learners to distinguish between helpful or unhelpful local actions $a^i_t$, whether those learners are centralised or not, since actions can affect rewards in ways other than simply by helping to reach a certain mean field. By updating policies with respect to average return but then  spreading updates through the population which are estimated to give a higher \textit{individual} return, despite this being a cooperative problem, we reduce the credit-assignment problem by replicating updated policies that should contribute positively to the population-average return, and filtering out those that do not.

Moreover, even if we assumed credit assignment were not a problem, there is randomness in the Q-network update: agents have stochastic policies and thus may collect a wide variety of transitions to add to their individual buffers, from which they sample randomly when training Q-networks. There may therefore be considerable variance in the quality of their estimated Q-functions, leading in turn to variance in the quality of policy updates. At each iteration of the central-agent algorithm, in \textit{expectation} the central learner will by definition have an average-quality update, and its updated policy will be pushed to the entire population whether or not it performs well, also giving large variance in the quality of updates. Our decentralised networked approach permits beneficial parallelisation in place of this single-learner method, by generating a whole population of possible updates, from which the one(s) estimated to be best-performing can be selected via a process akin to the comparison of fitness functions in evolutionary algorithms. These are then spread around the population, biasing networked populations towards better performing updates.
\end{remark}
\vspace{0.25cm}


We give the theoretical analysis separately for two important subclasses of cooperative game usually found in MFC, which have different reward structures and therefore can incentivise different population behaviour: 
\begin{enumerate}
    \item \textit{coordination games}, where the social welfare is increased by agents aligning their strategies, such as in consensus/synchronisation/rendezvous tasks;
    \item \textit{anti-coordination games}, where the social welfare is increased by the population exhibiting diverse strategies, such as in exploration, coverage or task allocation games.
\end{enumerate}

These subclasses cover a large proportion of cooperative objectives in anonymous, symmetric settings with large populations. We emphasise that the fact that agents would in principle benefit from having diverse policies in anti-coordination games does not contradict the classical MFC framework that simplifies the infinite population problem by finding the single policy to be shared by all agents. In the symmetric (i.e. identical reward and transition functions) MFC limit, an optimal solution can be realised by having the infinite agents all follow the single socially optimal policy, even for reward functions that favour diversity. A very large number of works on both MFC and MFGs conduct experiments on anti-coordination games, particularly dispersal and exploration tasks, despite assuming that the population follows a shared single policy learnt by a central node \citep{doi:10.1073/pnas.1922204117, survey_learningMFGs, lee2024mean}. 
We make the distinction between coordination and anti-coordination games to aid theoretical analysis of our decentralised policy adoption scheme compared with entirely independent learning: while it is intuitive that adopting independently-updated policies from neighbours via the communication scheme would be beneficial in coordination games, we also show theoretically and empirically that the adoption scheme provides a benefit in anti-coordination games, though this requires separate analysis.


To define the two types of game, we first introduce the following functions. $\mathbb{I}[\cdot]$ is the indicator function, which equals 1 if the condition inside is true and 0 otherwise. $b: \Pi \to \mathbb{R}_{\geq 0}$ is a \textit{base return function} that quantifies a policy’s inherent ability to receive rewards regardless of how many other agents follow the same strategy. For example, if agents are rewarded for agreeing on one of a number of targets at which to meet, then policies that visit none of the designated targets will have lower returns than those that do, whether agents are aligned or not.  $f_c: \mathbb{N} \to \mathbb{R}_{> 0}$ (resp. $f_d: \mathbb{N} \to \mathbb{R}_{> 0}$) is a \textit{coordination (resp. anti-coordination) scaling function}. It has minimum $f_c(1) > 0$ (resp. $f_d(0) > 0$), and increases monotonically with the number of agents whose policies match (resp. are different from) $i$'s. 
 

\begin{definition}[Coordination game]\label{coordination_game_definiton}
The agents' return can be decomposed as follows, $\forall i,j \in \{1,\dots,N\}$:
$V^i(\boldsymbol{\pi},\mu_{\bar{t}}) = h\left(b(\pi^i),f_c\left(\sum_{j \in \{1,\dots,N\}}\mathbb{I}\left[\pi^i = \pi^j\right]\right)\right)$, where $h: \mathbb{R}_{\geq 0} \times \mathbb{R}_{> 0} \to \mathbb{R}_{\geq 0}$ is a function that composes $b(\cdot)$ and $f_c(\cdot)$ and is monotonic in both arguments, i.e. an increase in either the policy's intrinsic ability to attain rewards, or the extent to which it is aligned with other agents' policies, gives a higher return. 
\end{definition}

\begin{definition}[Anti-coordination game]\label{anti-coordination_game_definiton}
The agents' return can be decomposed as follows, $\forall i,j \in \{1,\dots,N\}$:
$V^i(\boldsymbol{\pi},\mu_{\bar{t}}) = h\left(b(\pi^i),f_d\left(N - \sum_{j \in \{1,\dots,N\}}\mathbb{I}\left[\pi^i = \pi^j\right]\right)\right)$, where $h: \mathbb{R}_{\geq 0} \times \mathbb{R}_{> 0} \to \mathbb{R}_{\geq 0}$ is a function that composes $b(\cdot)$ and $f_d(\cdot)$ and is monotonic in both arguments, i.e. an increase in either the policy's intrinsic ability to attain rewards, or the extent to which it is different from other agents' policies, gives a higher return. 
\end{definition}

Note that in our setting, where policy parameters are directly communicated and adopted among the population, we focus on exact equality of policies for simplicity of the theory. However, these definitions could be made more general and inclusive by instead considering similarity kernels or label mappings of strategically relevant parts of policies.

\subsection{Analysis}

Our sub-routines involve time-varying networks sharing different types of information at different points in the algorithm, meaning that theoretical analysis can potentially grow complicated. We seek to simplify this analysis to make it more intuitive and useful by focusing on the benefit of the decentralised policy exchange scheme in Alg. \ref{policy_communication_algo}. This is because our ablation studies of Algs. \ref{average_reward_alg} (average reward estimation) and \ref{alg:mean_field_estimation_specific} (mean field estimation) in Sec. \ref{results_and_discussion} indicate that the policy exchange scheme is the dominant factor in driving the benefit of the networked paradigm in our experimental settings. Moreover, recall that Alg. \ref{alg:mean_field_estimation_specific} is only necessary when we allow population-dependent policies i.e. $o^i_t = (s^i_t, \tilde{\hat{\mu}}^i_t)$, whereas for stationary problems, including all those in our experiments and many others, using a mean field observation or estimation is not actually required for finding the optimal policy.

For simplicity of the theory, we make several assumptions. We explore the conditions under which these assumptions apply in practice, and discuss how even when loosening the assumptions, they still provide useful heuristic insight as to how our networked communication scheme affords benefits over the central-agent and independent-learning architectures. We do not enforce the assumptions in our experiments, and our empirical results nevertheless follow our theoretical theorems in all but some specific instances that we discuss. 







The first assumption simplifies the theory by presuming that it is only the decentralised policy communication scheme that creates a difference in learning between the networked and central-agent cases, by assuming that the estimated mean fields and average rewards are equivalent to the true ones used in the central-agent case. Note that populations with fully connected networks will in any case always be able to accurately estimate $\hat{r}$ and ${\hat{\mu}}_t$ by Algs. \ref{average_reward_alg} and \ref{alg:mean_field_estimation_specific}, even for $C_r = 1$ and $C_e=0$. This is may apply reasonably commonly in practice depending on the scenario; for example, if the network is defined by a broadcast radius (as in our experiments), then the network will be fully connected whenever that radius is at least large enough to cover the area that all the agents fall within. Moreover, as just mentioned, our ablation studies suggest that the policy communication scheme is the dominant factor in our experimental settings anyway, with the estimated mean field not required at all in the broad class of stationary problems. We leave analysis of the theoretical impact of worsening mean-field and average-reward estimations for future work.

\begin{assumption}\label{assume_all_other_estimations_being_equal}
Assume that Algs. \ref{average_reward_alg}  and \ref{alg:mean_field_estimation_specific} allow networked agents to obtain accurate estimations of the true population-average rewards and global empirical mean field respectively, i.e. $\forall i \;\tilde{\hat{\mu}}^i_t$ = ${\hat{\mu}}_t$ and $\tilde{\hat{r}}^i_t = \hat{r}$. 
\end{assumption}

Recall that at each iteration $k$ of Alg. \ref{main_algo}, after individually updating their policies in Line \ref{policy_update_line}, the population has the policies \{$\pi^i_{k+1}$\}$_{i=1}^{N}$. There is randomness in these individual policy updates, stemming from the random sampling of each agent's individually collected buffer. In Lines \ref{start_eval_line}-\ref{end_eval_line} of Alg. \ref{policy_communication_algo}, agents estimate the individual infinite-step discounted returns $\{V^i(\boldsymbol{\pi},\mu_0)\}_{i=1}^{N}$ (Def. \ref{Discounted_individual_expected_return}) of their updated policies by computing \{$\sigma^i_{k+1}$\}$_{i=1}^{N}$: the $E$-step discounted return with respect to the {empirical} mean field generated when agents follow policies \{$\pi^i_{k+1}$\}$_{i=1}^{N}$
. 

We next assume that the populations' policies are all pair-wise distinct after the updates in Line \ref{policy_update_line} and before the policy communication. This ensures that policies that are estimated to receive higher returns (and are thus adopted) are being evaluated as higher-performing due to receiving higher base returns, rather than simply because of how aligned or distinct they already happen to be with regard to other policies. This avoids scenarios where, for example, significantly suboptimal policies that are shared across multiple agents after the update (in the case of a coordination game) end up spreading through the population by communication at the expense of a more promising but less common policy, decelerating rather than accelerating improvement. In practice, this assumption is highly likely to apply in most situations in any case. Even if agents start a given iteration with identical policies, their different random seeds are likely to mean that they collect different sample transitions to add to their reinitialised buffers. Even if their buffers end up containing identical transitions, their different random seeds are likely to mean that they sample differently from their buffers, leading to slightly different updates to their policy networks.

\begin{assumption}\label{no_alignment_after_update}
Assume that directly after the policy updates in Line \ref{policy_update_line} (Alg. \ref{main_algo}), before any policy transfer as in the networked or central-agent algorithms, all policies are pair-wise distinct due to the randomness in these updates, i.e. $\forall i,j \in \{1,\dots,N\}\; \pi^i_{k+1} \neq \pi^j_{k+1}$. This means the function $f_c$ attains its minimum $f_c(1)$, and $f_d$ attains its maximum $f_d(N-1)$. 
\end{assumption}

We now assume that the finite-step estimations of the returns give sufficiently accurate comparisons between policies, so that better policies are indeed the ones that get adopted in expectation.

\begin{assumption}\label{approximation_ordering_assumption}
Assume that \{$\sigma^i_{k+1}$\}$_{i=1}^{N}$ are sufficiently good estimations so as to respect the ordering of the true infinite discounted individual returns $\{V^i(\boldsymbol{\pi}_{k+1},\mu_0)\}_{i=1}^{N}$, 
i.e. \[V^i(\boldsymbol{\pi}_{k+1},\mu_0) > V^j(\boldsymbol{\pi}_{k+1},\mu_0) \;\;\iff\;\; \sigma^i_{k+1} > \sigma^j_{k+1} \;\;\;\;\;\;\;\;\;\;\forall i,j \in \{1,\dots,N\}.\]
\end{assumption}

In practice, even if Assumption \ref{approximation_ordering_assumption} does not strictly hold, the softmax parameter $\tau^{comm}_{k}$ allows a smooth degradation as the ordering of the approximations worsens with respect to the ordering of the true values. That is, if instead of the exact correct policy ordering we have that better policies are simply \textit{more likely} to be given higher estimated evaluations, then the softmax means that these policies remain \textit{more likely} to spread, and a better policy may still be adopted even if it is not evaluated as being better.


The next assumption presumes that the networked population reaches consensus on a single policy within each $k$ iteration. We use it only in Thm. \ref{faster_learning_theorem_centralised}, and we do so to give general and intuitive comparison with the central-agent population which always shares a single policy. Incomplete consensus would give different levels of alignment/diversity, such that the relative performance of the central-agent and networked architectures might otherwise depend on the specific reward function of the task, and whether base return or alignment/diversity is more important in that reward function.

\begin{assumption}\label{single_policy_assumption}
Assume that after the $C_p$ rounds in Lines \ref{start_comm}-\ref{end_comm} (Alg. \ref{policy_communication_algo}), in which agents exchange and adopt policies from neighbours, the networked population is left with a single policy such that $\forall i,j \in \{1,\dots,N\}$ $\pi^i_{k+1} = \pi^j_{k+1}$. 
\end{assumption}

While this may sound like a strong assumption, we phrase it like this so as not to make overly strong restrictions on the communication network instead - we intentionally leave it so that Assumption \ref{single_policy_assumption} can be fulfilled in numerous ways. Most simply we can think of Assumption \ref{single_policy_assumption} holding if: 
\begin{enumerate}
    \item we set $\tau^{comm}_{k}$ close to 0 for all $k$, such that the softmax essentially becomes a max function; and 
     \item the communication network $\mathcal{G}^{comm}_t$ is static and connected during the $C_p$ communication rounds, where $C_p$ is at least as large as the network diameter $d_{\mathcal{G}^{comm}_t}$. 
\end{enumerate}

Under these conditions, previous results on max-consensus algorithms show that all agents in the network will converge on the highest value $\sigma^{max}_{k+1}$ (and hence the unique associated $\pi^{max}_{k+1}$) within a number of rounds equal to the diameter $d_{\mathcal{G}^{comm}_t}$ \citep{5348437,benjamin2024networked}. If we assumed more strongly that the network was always \textit{fully} connected, policy consensus would be achieved within a single communication round.
 
Policy consensus can be achieved even outside of these conditions, including if the network is dynamic and not connected at every step. The \textit{union} of a collection of graphs \{$\mathcal{G}_t, \mathcal{G}_{t+1}, \cdots, \mathcal{G}_{t+\omega}\}$ ($\omega \in \mathbb{N}$) is the graph with vertices and edge set equalling the union of the vertices and edge sets of the graphs in the collection \citep{1205192}. A collection is \textit{jointly connected} if its members' union is connected. Now, instead of assuming that the communication network is static and connected, we assume instead only that the sequence of networks contains one or more jointly connected collections. Then max-consensus is reached within $C_p$ if $C_p$ is large enough that the number of jointly connected collections occurring within $C_p$ is equal to the largest diameter of the union of any such collection.
  
Thus Assumption \ref{single_policy_assumption} may not hold if $C_p$ is not large enough or if parts of the population remain isolated. However, we do not enforce this assumption in our experiments, where we use $C_p = 1$ to show the benefit of even just one communication round, yet we still see networked populations significantly outperforming central-agent populations across anti-coordination games. In coordination games, while networked populations that are more connected (due to having larger communication radii) usually perform similarly to or better than central-agent populations, those that are less connected occasionally perform less well than the central-agent populations. This is probably due to Assumption \ref{single_policy_assumption} being empirically more likely to be violated in less connected populations, which in turn is more of an issue in coordination games (where consensus is more likely to be beneficial) than in anti-coordination games (where some lack of consensus does not prevent, or even helps, networked populations to outperform central-agent ones in practice).

The next assumption presumes that if a certain policy, when followed by all members of a finite population, is better than another policy when the latter is followed by all members of a finite population, then the same quality ordering will apply when members of infinite populations follow each policy. We require this in order to relate our analysis of learning in the empirical finite population back to the mean field limit when comparing with central-agent learning. Since the finite population can be arbitrarily large, and in many environments when all agents follow the same policy the finite population-average return  will converge smoothly to the infinite population social welfare, this assumption will naturally hold in many scenarios. For example, a policy that is better than another at getting a population of 500 or 5,000,000 agents to cluster in a particular location will also be better than the other policy at getting an infinite population to gather at the location. Nevertheless this order preservation is not a completely general phenomenon, and strict inequalities can vanish or reverse in the limit, especially in models with thresholds or discontinuities in the dependence of rewards or transitions on the mean field, so we state it as an explicit condition. 

\begin{assumption}\label{finite_to_infinite_ordering_assumption}
Say we have two different policies that could be shared by the whole population such that $\boldsymbol\pi^x$ = ($\pi^x,\dots,\pi^x$) and $\boldsymbol\pi^y$ = ($\pi^y,\dots,\pi^y$). We assume that:
\begin{align*}
    V^{pop}(\boldsymbol{\pi}^x,\mu_0)>V^{pop}(\boldsymbol{\pi}^y,\mu_0) \iff W(\pi^x,I(\pi^x))>W(\pi^y,I(\pi^y)).\end{align*}
\end{assumption}

We have now given all the assumptions for our first theorem. Assumption \ref{single_policy_assumption} assumes that after the $C_p$ policy exchange rounds in Lines \ref{start_comm}-\ref{end_comm} of Alg. \ref{policy_communication_algo}, the networked population is left with a single policy. Call this consensus policy $\pi^{\mathrm{net}}_{k+1}$, and its associated finitely approximated return $\sigma^{\mathrm{net}}_{k+1}$. Recall that the central-agent case is where the Q-network update of arbitrary agent $i=1$ is automatically pushed to all the others instead of the policy evaluation and exchange in Line \ref{run_CommunicatePolicyAlg_line} of Alg. \ref{main_algo}; this is equivalent to a networked case where policy consensus is reached on a \textit{random} one of the policies \{$\pi^i_{k+1}$\}$_{i=1}^{N}$. Call this policy \textit{arbitrarily} given to the whole population $\pi^{\mathrm{cent}}_{k+1}$, and its associated finitely approximated return $\sigma^{\mathrm{cent}}_{k+1}$.


We can now give our first theorem, namely that in expectation networked populations will increase their returns at least as fast as central-agent ones.

\begin{theorem}\label{faster_learning_theorem_centralised}
Let us set $\tau^{comm}_{k} \in \mathbb{R}_{> 0}$
. In coordination and anti-coordination games where Assumptions \ref{assume_all_other_estimations_being_equal}-
    \ref{finite_to_infinite_ordering_assumption} apply, we have $\mathbb{E}[W(\pi^{\mathrm{net}}_{k+1},I(\pi^{\mathrm{net}}_{k+1}))] \geq \mathbb{E}[W(\pi^{\mathrm{cent}}_{k+1},I(\pi^{\mathrm{cent}}_{k+1}))]$. 
\end{theorem}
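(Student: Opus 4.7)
The plan is to establish a chain of monotonicities linking the estimated finite-step returns $\{\sigma^i_{k+1}\}_i$ to the infinite-population social welfare that results when a single consensus policy is adopted by the whole population, and then to apply a Chebyshev-type rearrangement inequality to compare the networked and central-agent selection distributions.

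First I would exploit Assumption \ref{no_alignment_after_update} to collapse the decompositions in Defs. \ref{coordination_game_definiton} and \ref{anti-coordination_game_definiton}: since all $\{\pi^i_{k+1}\}_{i=1}^N$ are pair-wise distinct, the alignment count $\sum_j\mathbb{I}[\pi^i_{k+1}=\pi^j_{k+1}]$ equals $1$ for every $i$, so the scaling factor $f_c(1)$ (resp.\ $f_d(N-1)$) is identical across agents. By monotonicity of $h$ in its first argument, the ordering of $\{V^i(\boldsymbol\pi_{k+1},\mu_0)\}_i$ coincides with the ordering of $\{b(\pi^i_{k+1})\}_i$, and by Assumption \ref{approximation_ordering_assumption} the $\sigma$-ordering matches. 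Next, for any candidate $\pi^x\in\{\pi^i_{k+1}\}_{i=1}^N$ adopted by the entire finite population, symmetry reduces Def. \ref{Discounted_population_average_expected_return} to $V^{pop}(\boldsymbol\pi^x,\mu_0)=h(b(\pi^x),f_c(N))$ in the coordination case (resp.\ $h(b(\pi^x),f_d(0))$ in the anti-coordination case), again a single common scaling factor, so $V^{pop}$ over candidates is also ordered by $b$. Assumption \ref{finite_to_infinite_ordering_assumption} transports this ordering to the infinite-population social welfare, yielding the key chain: higher $\sigma^i_{k+1}$ $\iff$ higher $W(\pi^i_{k+1},I(\pi^i_{k+1}))$.

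With this common ordering fixed, by symmetry of the per-agent updates (identical learning procedure with independent seeds), the central-agent scheme — which pushes arbitrary agent $1$'s update — is distributionally equivalent to uniform random selection from $\{\pi^i_{k+1}\}_{i=1}^N$, so $\mathbb{E}[W(\pi^{\mathrm{cent}}_{k+1},I(\pi^{\mathrm{cent}}_{k+1}))]=\mathbb{E}\bigl[\tfrac{1}{N}\sum_i W(\pi^i_{k+1},I(\pi^i_{k+1}))\bigr]$. Under Assumption \ref{single_policy_assumption}, the networked scheme returns a consensus policy drawn from the same candidate set with probabilities $\{p_i^{\mathrm{net}}\}_i$ induced by the iterated softmax adoption in Alg. \ref{policy_communication_algo}, and these probabilities are monotonically non-decreasing in $\sigma^i_{k+1}$. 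Conditioning on the realisation of the $\{\pi^i_{k+1}\}$ and relabelling so $\sigma^i_{k+1}$ is increasing in $i$, both $\{W(\pi^i_{k+1},I(\pi^i_{k+1}))\}_i$ and $\{p_i^{\mathrm{net}}\}_i$ are monotone increasing sequences in $i$. Since $\sum_i p_i^{\mathrm{net}}=1$, Chebyshev's sum inequality then gives $\sum_i p_i^{\mathrm{net}} W(\pi^i_{k+1},I(\pi^i_{k+1})) \geq \tfrac{1}{N}\sum_i W(\pi^i_{k+1},I(\pi^i_{k+1}))$; taking expectations over the pre-communication randomness completes the proof.

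The main obstacle is formalising monotonicity of $p_i^{\mathrm{net}}$ in $\sigma^i_{k+1}$ under the general multi-round adoption dynamics for arbitrary $\tau^{comm}_k>0$ and any network topology satisfying Assumption \ref{single_policy_assumption}. The one-shot softmax over a complete graph is immediate, and the $\tau^{comm}_k\!\to\!0$ max-consensus limit follows from the convergence results of \citet{5348437,benjamin2024networked}; for the general case I would proceed by induction on the $C_p$ communication rounds, coupling the dynamics for any pair $(\sigma^i,\sigma^j)$ with $\sigma^i\leq\sigma^j$ to show that swapping their values can only weakly shift consensus probability toward the larger one, thus preserving stochastic dominance along the $\sigma$-ordering.
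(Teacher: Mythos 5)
Your proposal is correct and follows the same skeleton as the paper's proof: use Assumption \ref{no_alignment_after_update} to reduce the pre-consensus return ordering to an ordering of base returns $b(\cdot)$, note that post-consensus both architectures attain the same (maximal) alignment so the $b$-ordering carries through to $V^{pop}$, and then invoke Assumptions \ref{approximation_ordering_assumption} and \ref{finite_to_infinite_ordering_assumption} to obtain the chain ``higher $\sigma^i_{k+1}$ $\iff$ higher $W(\pi^i_{k+1},I(\pi^i_{k+1}))$'', with the central-agent case treated as uniform selection from the candidate set. Where you genuinely depart from the paper is in the final comparison step. The paper argues informally that the softmax adoption gives $\mathbb{E}[\sigma^{\mathrm{net}}_{k+1}] \geq \sigma^{\mathrm{mean}}_{k+1} = \mathbb{E}[\sigma^{\mathrm{cent}}_{k+1}]$ and then transports this inequality through the chain; you instead condition on the realised candidate set, observe that the selection probabilities $\{p_i^{\mathrm{net}}\}_i$ and the welfares $\{W(\pi^i_{k+1},I(\pi^i_{k+1}))\}_i$ are similarly ordered in $\sigma$, and apply Chebyshev's sum inequality to conclude $\sum_i p_i^{\mathrm{net}} W_i \geq \tfrac{1}{N}\sum_i W_i$ before integrating out the update randomness. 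This is a cleaner and more rigorous formalisation: the paper's statement that $\mathbb{E}[\sigma^{\mathrm{cent}}_{k+1}] = \sigma^{\mathrm{mean}}_{k+1}$ quietly mixes a conditional expectation with a random quantity, whereas your conditioning makes the probabilistic bookkeeping explicit. You are also right to flag that the monotonicity of $p_i^{\mathrm{net}}$ in $\sigma^i_{k+1}$ under general multi-round adoption on an arbitrary connected topology is the step requiring real work (your proposed coupling/induction over the $C_p$ rounds is a sensible route); the paper simply asserts this monotonicity without proof, so your version exposes, rather than creates, the one genuine gap in the argument. One minor omission: you never explicitly invoke Assumption \ref{assume_all_other_estimations_being_equal}, which is what licenses treating the networked and central-agent candidate policies as exchangeable (same inputs to the Q-updates); your appeal to ``symmetry of the per-agent updates'' implicitly relies on it and should cite it.
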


\begin{remark}
    Assumption \ref{no_alignment_after_update} presumes that all policies are pairwise distinct after the updates, but does not restrict their returns in the same way. If we additionally make the very weak assumption that at least one of these distinct policies in each $k$ iteration has a base return that is distinct from the others (which is likely to hold in all but the most trivial environments), the inequality in the theorem above will be strict, i.e. \textit{networked learning will always be faster in expectation}.
\end{remark}


\begin{proof}
Recall that before the communication rounds in Line \ref{start_comm} (Alg. \ref{policy_communication_algo}), the  randomly updated policies \{$\pi^i_{k+1}$\}$_{i=1}^{N}$ have  associated estimated returns \{$\sigma^i_{k+1}$\}$_{i=1}^{N}$. Denote the mean and maximum of this set $\sigma^{\mathrm{mean}}_{k+1}$ and $\sigma^{\mathrm{max}}_{k+1}$ respectively. Since $\pi^{\mathrm{cent}}_{k+1}$ is chosen arbitrarily from \{$\pi^i_{k+1}$\}$_{i=1}^{N}$, it will obey $\mathbb{E}[\sigma^{\mathrm{cent}}_{k+1}] = \sigma^{\mathrm{mean}}_{k+1}$ $\forall k$, though there will be high variance. Conversely, for the networked case the softmax adoption scheme (Line \ref{softmax_adoption_prob}, Alg. \ref{policy_communication_algo}), which for $\tau^{comm}_{k} \in \mathbb{R}_{> 0}$ gives non-uniform adoption probabilities for distinct $\sigma$ values, means by definition that some communicated policies are more likely to be adopted than others if they have distinct finitely estimated returns (those with higher $\sigma^i_{k+1}$ are more likely to be adopted at each communication round). Thus the consensus $\pi^{\mathrm{net}}_{k+1}$ that gets adopted by the whole networked population will obey $\mathbb{E}[\sigma^{\mathrm{net}}_{k+1}] > \sigma^{\mathrm{mean}}_{k+1}$ if at least one policy receives a distinct return from the others, or $\mathbb{E}[\sigma^{\mathrm{net}}_{k+1}] \geq \sigma^{\mathrm{mean}}_{k+1}$ in the rare circumstance that all policies receive the same return. If $\tau^{comm}_{k+1} \rightarrow 0$, it will obey $\mathbb{E}[\sigma^{\mathrm{net}}_{k+1}] = \sigma^{\mathrm{max}}_{k+1}$ $\forall k$. As such: 
\begin{equation}\label{sigma_ordering}
    \mathbb{E}[\sigma^{\mathrm{net}}_{k+1}] \geq \mathbb{E}[\sigma^{\mathrm{cent}}_{k+1}].
\end{equation}

In Eq. \ref{sigma_ordering} and the remaining equations of the proof, bear in mind that the equality will be strict if at least one policy receives a distinct return from the others. 

Refer to the agent whose update originally gave rise to $\pi^{\mathrm{net}}_{k+1}$ and $\sigma^{\mathrm{net}}_{k+1}$ as agent $(i, \mathrm{net})$; we equivalently also have the arbitrary agent $(j, \mathrm{cent})$. Prior to consensus being attained in each case, the joint policy can be written as $\boldsymbol{\pi}^{(i, \mathrm{net}; j, \mathrm{cent})}$ $:= (\pi^1, \dots, \pi^{i-1}, \pi^{(i, \mathrm{net})}, \pi^{i+1}, \dots, \pi^{j-1}, \pi^{(j, \mathrm{cent})}, \pi^{j+1}, \dots,$ $\pi^N)$.

Given Eq. \ref{sigma_ordering}, and by Assumption \ref{approximation_ordering_assumption} on the quality of finite-step estimations, we know that directly after the policy update in Line \ref{policy_update_line} (Alg. \ref{main_algo}), \textit{prior to the consensus being reached}, we have:
\begin{equation}\label{finite_varied_ordering_equation}
\begin{aligned}
    \mathbb{E}\left[V^{(i,\mathrm{net})}(\boldsymbol{\pi}^{(i, \mathrm{net}; j, \mathrm{cent})}_{k+1},\mu_t)\right] \geq \mathbb{E}\left[V^{(j, \mathrm{cent})}(\boldsymbol{\pi}^{(i, \mathrm{net}; j, \mathrm{cent})}_{k+1},\mu_t)\right].
\end{aligned}
\end{equation}
   
We now need to show that this ordering is maintained in the case that each policy is given to the whole population. 

By  Assumption \ref{no_alignment_after_update} we know that straight after the random policy updates there is no alignment among policies, i.e. in a coordination game we have $f_c^{(i,\mathrm{net})}$ = $f_c^{(j,\mathrm{cent})} = \min f_c$, and in an anti-coordination game we have $f_d^{(i,\mathrm{net})}$ = $f_d^{(j,\mathrm{cent})} = \max f_d$. Therefore if Eq. \ref{finite_varied_ordering_equation} pertains, by Def. \ref{coordination_game_definiton} it must be because:
\begin{equation}\label{better_base_policy}
    \mathbb{E}[b(\pi^{(i,\mathrm{net})})]\geq\mathbb{E}[b(\pi^{(j,\mathrm{cent})})],
\end{equation}
i.e. because the base policy quality is higher for $\pi^{(i,\mathrm{net})}$ than for $\pi^{(j,\mathrm{cent})}$.

By Assumption \ref{single_policy_assumption} on policy consensus, we know that in the networked and central-agent cases the joint policies respectively become $\boldsymbol{\pi}^\mathrm{net} := (\pi^{\mathrm{net}}, \pi^{\mathrm{net}}, \pi^{\mathrm{net}},\dots)$ and $\boldsymbol{\pi}^\mathrm{cent} := (\pi^{\mathrm{cent}}, \pi^{\mathrm{cent}}, \pi^{\mathrm{cent}},\dots)$. We therefore end up with maximum alignment in both cases, such that $f_c^{\mathrm{net}}$ = $f_c^{\mathrm{cent}} = \max f_c$ in a coordination game, and  $f_d^{\mathrm{net}}$ = $f_d^{\mathrm{cent}} = \min f_d$ in an anti-coordination game. Due to this, along with Eqs. \ref{finite_varied_ordering_equation} and \ref{better_base_policy}, we have 
\begin{equation}\label{finite_consenus_ordering_equation}
    \mathbb{E}\left[V^i(\boldsymbol{\pi}^{\mathrm{net}}_{k+1},\mu_t)\right] \geq \mathbb{E}\left[V^j(\boldsymbol{\pi}^{\mathrm{cent}}_{k+1},\mu_t)\right].
\end{equation}
In turn we have:
\begin{equation}\label{average_consenus_ordering_equation}
    \mathbb{E}\left[V^{pop}(\boldsymbol{\pi}^{\mathrm{net}}_{k+1},\mu_t)\right] \geq \mathbb{E}\left[V^{pop}(\boldsymbol{\pi}^{\mathrm{cent}}_{k+1},\mu_t)\right],
\end{equation}
which by Assumption \ref{finite_to_infinite_ordering_assumption} gives 
\[
\mathbb{E}[W(\pi^{\mathrm{net}}_{k+1},I(\pi^{\mathrm{net}}_{k+1}))] \geq \mathbb{E}[W(\pi^{\mathrm{cent}}_{k+1},I(\pi^{\mathrm{cent}}_{k+1}))],
\]
namely the result. 
\end{proof}

We now give results showing that learning is at least as fast in the networked case than in the independent case - empirically we find networked learning always to be strictly faster. We give separate theorems for coordination and anti-coordination games. Since we cannot necessarily expect the independent agents to share a single policy $\pi_{k+1}$ after the update in each iteration of learning, we give these results in terms of the population-average return (Def. \ref{Discounted_population_average_expected_return})
instead of the single-policy social welfare (Def. \ref{social_welfare_function}) as before. 

Again, we assume for simplicity of the theory that it is only the policy communication scheme that creates a difference in learning between the networked and independent cases, i.e. we assume that networked agents receive the same estimates of the mean field and average reward as independent agents. As mentioned above, our ablation studies suggest this is the dominant factor in our experimental settings anyway, with the estimated mean field not required at all in the broad class of stationary problems. Nevertheless, in practice the networked estimates of the (mean field and) average reward will be better than the independent ones, giving an additional performance increase over the independent case. Thus loosening this assumption is likely to actually enhance the effects identified in the theorems.

\begin{assumption}\label{assume_estimations_as_bad_as_independent}
Assume that the estimated global mean field and average reward in the networked case are the same as the independent case, i.e. $\forall i,j \;\tilde{\hat{\mu}}^{(i,net)}_t$ = $\tilde{\hat{\mu}}^{(j,ind)}_t$ and $\tilde{\hat{r}}^{(i,net)}_t = r^i_t$. 
\end{assumption}

We refer to the joint policy in the networked case after communication round $c$ as $\boldsymbol\pi^{\mathrm{net}}_{k+1,c} = \left(\pi^{(1,\mathrm{net})}_{k+1,c},\dots,\pi^{(N,\mathrm{net})}_{k+1,c}\right)$, and the joint policy in the independent case as $\boldsymbol\pi^{\mathrm{ind}}_{k+1} = \left(\pi^{(1,\mathrm{ind})}_{k+1},\dots,\pi^{(N,\mathrm{ind})}_{k+1}\right)$. 

We can now give our second theorem, namely that in expectation networked populations will increase their returns at least as fast as independent ones in coordination games with only a single round of communication in each iteration.

\begin{theorem}\label{theorem_independent_coordination}
Let us again set $\tau^{comm}_{k} \in \mathbb{R}_{> 0}$
. In a coordination game, given Assumptions  
\ref{no_alignment_after_update},
\ref{approximation_ordering_assumption}
 and \ref{assume_estimations_as_bad_as_independent}, for $c=0$, $\mathbb{E}\left[V^{pop}(\boldsymbol{\pi}^{\mathrm{net}}_{k+1,c+1},\mu_t)\right] \geq \mathbb{E}\left[V^{pop}(\boldsymbol{\pi}^{\mathrm{ind}}_{k+1},\mu_t)\right]$. 
\end{theorem}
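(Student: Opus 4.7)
The plan is to parallel the proof of Theorem~\ref{faster_learning_theorem_centralised} but without Assumption~\ref{single_policy_assumption} (no consensus is enforced after a single round), instead tracking the heterogeneous post-communication population directly. Two mechanisms drive the improvement in the coordination setting: \emph{(i)} the softmax rule in Line~\ref{softmax_adoption_prob} of Alg.~\ref{policy_communication_algo} biases each agent's adoption toward neighbours with higher estimated returns $\sigma^j_{k+1}$, and hence toward policies of higher base quality $b$; and \emph{(ii)} whenever several agents independently adopt the same original policy, those agents become aligned, so $f_c$ rises above its minimum $f_c(1)$. Abbreviate $p_{ij} := \Pr(\mathrm{adopted}^i = j)$ for $j \in J^i_t$, and $g(\cdot) := h(\cdot, f_c(1))$.

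First I would use Assumption~\ref{no_alignment_after_update} to fix $f_c = f_c(1)$ for every agent immediately after the random policy updates. In the independent case this persists, giving $V^{pop}(\boldsymbol{\pi}^{\mathrm{ind}}_{k+1},\mu_t) = \tfrac{1}{N}\sum_i g(b(\pi^i_{k+1}))$. By Assumption~\ref{approximation_ordering_assumption} and Def.~\ref{coordination_game_definiton}, coincidence of all $f_c$ values at this moment forces $\sigma^i_{k+1} \geq \sigma^j_{k+1} \iff b(\pi^i_{k+1}) \geq b(\pi^j_{k+1})$, so $g(b(\cdot))$ and $\sigma$ are comonotone across the post-update population. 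For the networked case, define the random basin $A_j := \{i : \mathrm{adopted}^i = j\}$ so that all agents in $A_j$ share $\pi^j_{k+1}$ after the single round, yielding $f_c^{(i,\mathrm{net})} = f_c(|A_j|) \geq f_c(1)$ for every $i \in A_j$. Monotonicity of $h$ in its second argument then gives
\[
V^{pop}(\boldsymbol{\pi}^{\mathrm{net}}_{k+1,1},\mu_t) = \frac{1}{N}\sum_j |A_j|\,h(b(\pi^j_{k+1}), f_c(|A_j|)) \;\geq\; \frac{1}{N}\sum_j |A_j|\,g(b(\pi^j_{k+1})).
\]
Taking expectations and writing $w_j := \mathbb{E}[|A_j|] = \sum_{i : j \in J^i_t} p_{ij}$, which satisfies $\sum_j w_j = N$, it remains to establish
\[
\sum_j w_j\, g(b(\pi^j_{k+1})) \;\geq\; \sum_j g(b(\pi^j_{k+1})),
\]
which combined with the alignment bound yields $\mathbb{E}[V^{pop}(\boldsymbol{\pi}^{\mathrm{net}}_{k+1,1},\mu_t)] \geq \mathbb{E}[V^{pop}(\boldsymbol{\pi}^{\mathrm{ind}}_{k+1},\mu_t)]$.

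The hard part is precisely this last inequality, which amounts to showing that the softmax-induced mass $\{w_j\}$ is non-negatively correlated with $\{g(b(\pi^j_{k+1}))\}$. For each fixed $i$ the probabilities $\{p_{ij}\}_{j \in J^i_t}$ and the values $\{g(b(\pi^j_{k+1}))\}_{j \in J^i_t}$ are comonotone in $\sigma^j$, so Chebyshev's sum inequality gives the per-neighbourhood bound $\sum_{j \in J^i_t} p_{ij}\, g(b(\pi^j_{k+1})) \geq \tfrac{1}{|J^i_t|}\sum_{j \in J^i_t} g(b(\pi^j_{k+1}))$. Aggregating these per-$i$ bounds into the desired population-level statement is the main obstacle: for regular or fully connected networks the telescoping is immediate, but for irregular topologies one must verify that $w_j$, viewed as a function of $\sigma^j$ with the remaining $\sigma$ values held fixed, is non-decreasing -- a direct computation from the softmax gives $\partial w_j/\partial \sigma^j = (\tau^{comm}_k)^{-1}\sum_{i : j \in J^i_t} p_{ij}(1-p_{ij}) \geq 0$ -- and then combine this monotonicity with the undirected symmetry $i \in J^j_t \iff j \in J^i_t$ to conclude that $\{w_j\}$ is comonotone with $\{g(b(\pi^j_{k+1}))\}$ across $j$, so that a second application of Chebyshev's sum inequality closes the argument. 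The resulting inequality becomes strict whenever any two neighbouring agents have distinct $\sigma$ values, matching the empirical observation that the networked improvement over independent learning in coordination games is typically strict.
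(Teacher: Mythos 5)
Your overall strategy parallels the paper's: use Assumption \ref{no_alignment_after_update} to pin $f_c$ at $f_c(1)$ so that the pre-communication ordering of $\sigma^j_{k+1}$ reflects the ordering of base returns $b(\pi^j_{k+1})$, treat the alignment gain $f_c(|A_j|)\geq f_c(1)$ as a non-negative correction, and argue that the softmax reallocates adoption mass toward higher-$b$ policies. The paper does this with an informal two-scenario case split (either the round permutes policies, or a policy is replaced by one with strictly higher $\sigma$); your expected-basin formulation $w_j=\mathbb{E}[|A_j|]$ is a more honest attempt to make that rigorous, and your per-neighbourhood Chebyshev step is correct as far as it goes: it yields $\sum_j w_j\,g(b(\pi^j_{k+1}))\geq\sum_j v_j\,g(b(\pi^j_{k+1}))$ with $v_j:=\sum_{i:\,j\in J^i_t}|J^i_t|^{-1}$, which equals $\sum_j g(b(\pi^j_{k+1}))$ only when all $v_j=1$, i.e.\ for regular or fully connected $\mathcal{G}^{comm}_t$.

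The step you flag as the main obstacle is, however, a genuine gap, and it does not close the way you propose. The computation $\partial w_j/\partial\sigma^j\geq 0$ shows $w_j$ is monotone in its \emph{own} score with everything else fixed; comonotonicity of $\{w_j\}$ with $\{\sigma^j\}$ \emph{across} $j$ requires comparing agents with different neighbourhoods, and this fails on irregular graphs. Take $\mathcal{G}^{comm}_t$ a star with centre $c$ and $n$ leaves, with $\sigma^c<\sigma^L$ for every leaf. Each leaf adopts $\pi^c$ with a fixed probability $1-q\in(0,\tfrac12)$ independent of $n$, so $w_c=p_{cc}+n(1-q)>1$ for $n$ large while each leaf has $w_\ell<1$: the lowest-$\sigma$ policy acquires the largest expected basin, so comonotonicity fails. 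Worse, the target inequality itself fails there, since
\begin{equation*}
\sum_j w_j\,g(b(\pi^j_{k+1}))-\sum_j g(b(\pi^j_{k+1}))=(w_c-1)\bigl(g(b(\pi^c_{k+1}))-g(b(\pi^L_{k+1}))\bigr)<0.
\end{equation*}
Thus the pure base-return bookkeeping can go the wrong way, and the only remaining source of improvement is the alignment bonus, which your argument (like the paper's) uses only qualitatively as ``$\geq f_c(1)$'' rather than quantitatively. The paper's own proof sidesteps this by asserting that whenever one policy replaces another the replacing policy had the higher $\sigma$ --- precisely the cross-neighbourhood claim your counterexample refutes --- so your attempt exposes a real weakness in the argument rather than introducing a new one; but as a standalone proof it is incomplete. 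To repair it you would need either a regularity hypothesis on $\mathcal{G}^{comm}_t$ (all $|J^i_t|$ equal, in which case $v_j=1$ and your first Chebyshev application already finishes), or an explicit assumption lower-bounding the coordination gain $h(b,f_c(|A_j|))-h(b,f_c(1))$ against the possible base-return loss.
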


\begin{remark}
    Assumption \ref{no_alignment_after_update} presumes that all policies are pairwise distinct after the updates, but does not restrict their returns in the same way. If we additionally make the very weak assumption that at least one of the distinct policies in each $k$ iteration has a distinct base return from the others (as is generally likely to be the case), the inequality in the theorem above will be strict, i.e. \textit{networked learning will always be faster in expectation}.
\end{remark}

\begin{proof}
Let us consider two scenarios. Firstly let us imagine that within the communication round, agents swap policies, but no policy drops out of the population, such that if agent $i$ adopts policy $\pi^j_{k+1}$, there exists an agent $i'$ that adopts policy $\pi^i_{k+1}$, and so on. That way we end up with the same policies in the population as before the change, but with each one possibly carried by different arbitrary agents. This is equivalent to if no communication had taken place, meaning that in this scenario $V^{pop}(\boldsymbol{\pi}^{\mathrm{net}}_{k+1,c+1},\mu_t) = V^{pop}(\boldsymbol{\pi}^{\mathrm{ind}}_{k+1},\mu_t)$.

Let us now consider an alternative scenario. The softmax adoption scheme (Line \ref{softmax_adoption_prob}, Alg. \ref{policy_communication_algo}), which for $\tau^{comm}_{k} \in \mathbb{R}_{> 0}$ gives non-uniform adoption probabilities for distinct $\sigma$ values, means by definition that some communicated policies are more likely to be adopted than others if they have distinct finitely estimated returns. Thus in expectation the number of distinct policies in the population will decrease if at least one policy has a distinct return from the others (of course, there is a possibility of this still happening even if no policy has a distinct return from the others). Let us start by saying for simplicity that during the first communication round a single $\pi^{(j,\mathrm{net})}_{k+1,c}$ is replaced by $\pi^{(i,\mathrm{net})}_{k+1,c}$, such that for $c=0$
\begin{align*}
\boldsymbol\pi^{\mathrm{net}}_{k+1,c} &= \left(\pi^{(1,\mathrm{net})}_{k+1,c},\dots,\pi^{(\textbf{i},\mathrm{net})}_{k+1,c},\dots, \pi^{(\textbf{j},\mathrm{net})}_{k+1,c}, \dots \pi^{(N,\mathrm{net})}_{k+1,c}\right),\\
    \text{and} \quad \boldsymbol\pi^{\mathrm{net}}_{k+1,c+1} &= \left(\pi^{(1,\mathrm{net})}_{k+1,c+1},\dots,\pi^{(\textbf{i},\mathrm{net})}_{k+1,c+1},\dots, \pi^{(\textbf{i},\mathrm{net})}_{k+1,c+1}, \dots \pi^{(N,\mathrm{net})}_{k+1,c+1}\right).\end{align*}

For this to have occurred, we know that \[\mathbb{E}[\sigma^{(i,\mathrm{net})}_{k+1,c}] > \mathbb{E}[\sigma^{(j,\mathrm{net})}_{k+1,c}],\] and therefore by Assumption \ref{approximation_ordering_assumption} that 
\begin{equation}\label{return_order_coord_ind}
    \mathbb{E}\left[V^{(i,\mathrm{net})}(\boldsymbol\pi^{\mathrm{net}}_{k+1,c},\mu_t)\right] > \mathbb{E}\left[V^{(j, \mathrm{net})}(\boldsymbol\pi^{\mathrm{net}}_{k+1,c},\mu_t)\right].
\end{equation}

By  Assumption \ref{no_alignment_after_update} we know that straight after the random policy updates there is no alignment among policies, i.e. in a coordination game we have $f_c^{(i,\mathrm{net})}$ = $f_c^{(j,\mathrm{net})} = \min f_c$. Therefore if Eq. \ref{return_order_coord_ind} pertains, by Def. \ref{coordination_game_definiton} it must be because:
\begin{equation}\label{better_base_policy_coord_ind}
    \mathbb{E}[b(\pi^{(i,\mathrm{net})})]>\mathbb{E}[b(\pi^{(j,\mathrm{net})})],
\end{equation}
i.e. because the base policy quality is higher for $\pi^{(i,\mathrm{net})}$ than for $\pi^{(j,\mathrm{net})}$. For this reason we have, for $c=0$:
\begin{equation}\label{better_vpop}\mathbb{E}\left[V^{pop}(\boldsymbol{\pi}^{\mathrm{net}}_{k+1,c+1},\mu_t)\right] > \mathbb{E}\left[V^{pop}(\boldsymbol{\pi}^{\mathrm{net}}_{k+1,c},\mu_t)\right].\end{equation}

Additionally, replacing $\pi^{(j,\mathrm{net})}_{k+1,c}$ with a second copy of $\pi^{(i,\mathrm{net})}_{k+1,c}$ will increase the alignment ($f_c$) of $\pi^{(i,\mathrm{net})}_{k+1,c}$ such that $\mathbb{E}\left[V^{(i,\mathrm{net})}(\boldsymbol\pi^{\mathrm{net}}_{k+1,c+1},\mu_t)\right] > \mathbb{E}\left[V^{(i, \mathrm{net})}(\boldsymbol\pi^{\mathrm{net}}_{k+1,c},\mu_t)\right]$,  increasing the improvement even further. This effect is even greater if more than one policy is replaced. 

Since the independent case is equivalent to the networked case when $C_p = 0$, we can say that $\boldsymbol\pi^{\mathrm{ind}}_{k+1} = \boldsymbol\pi^{\mathrm{net}}_{k+1,0}$. This gives the result, i.e. \[\mathbb{E}\left[V^{pop}(\boldsymbol{\pi}^{\mathrm{net}}_{k+1,c+1},\mu_t)\right] \geq \mathbb{E}\left[V^{pop}(\boldsymbol{\pi}^{\mathrm{ind}}_{k+1},\mu_t)\right],\]
where this inequality is only not strict if the first scenario always applies rather than the second.
\end{proof}

To prove the benefit of the networked case over the independent case in anti-coordination games, we use a final additional assumption. 
This presumes that the base return is not yet fully maximised, and that the benefit to an agent's overall return of increasing its base return by adopting a neighbour's better-performing policy outweighs the resulting decrease in diversity. This establishes the conditions under which our policy adoption scheme is able to advantage networked agents over those whose policies are always independent. This assumption applies in most non-trivial scenarios (at least at the beginning of training), namely where the goal of the task is not simply for agents to have distinct policies that are otherwise inconsequential, and thus where the benefit of diverse behaviour can only be fully felt once agents have a certain level of aptitude at accomplishing the given task. For example, in all of the anti-coordination games in our experiments, agents are always penalised for moving, and only start to receive higher rewards if they are stationary. Therefore in these anti-coordination games agents will receive higher returns by \textit{aligning} on policies that prioritise stationarity, than by maintaining diverse policies that have high levels of movement. Of course once base return is maximised and the assumption no longer holds, one can consider terminating policy communication and adoption to avoid decreases in diversity (one may also be ready to stop training entirely at this point, as the population is likely to be reaching the optimal average return). Please see Sec. \ref{results_and_discussion} for further discussion of the applicability of this assumption in practice.

\begin{assumption}\label{base_outweighs_anti_coordination_assumption}
Assume that an increase in the base return function outweighs a decrease in the policy diversity, namely $h(b + \Delta b, f_d - \Delta f_d) > h(b, f_d), \; \forall \Delta b > 0, \Delta f_d > 0$, and that the agents have not yet maximised their base return function i.e. $b(\pi^{i}_{k+1}) < \sup_{\pi\in\Pi}b(\pi)\;\; \forall i\in\{1,\dots,N\}$.
\end{assumption}

We now give our final theorem, namely that in anti-coordination games, in expectation networked populations will increase their returns at least as fast as independent ones with only a single round of communication in each iteration.

\begin{theorem}\label{theorem_independent_anti-coordination}
Let us once again set $\tau^{comm}_{k} \in \mathbb{R}_{> 0}$
. In an anti-coordination game, given Assumptions  
\ref{no_alignment_after_update},
\ref{approximation_ordering_assumption},
\ref{assume_estimations_as_bad_as_independent}
and \ref{base_outweighs_anti_coordination_assumption}, for $c=0$, $\mathbb{E}\left[V^{pop}(\boldsymbol{\pi}^{\mathrm{net}}_{k+1,c+1},\mu_t)\right] \geq \mathbb{E}\left[V^{pop}(\boldsymbol{\pi}^{\mathrm{ind}}_{k+1},\mu_t)\right]$.
\end{theorem}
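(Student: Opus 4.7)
The plan is to mirror the two-scenario decomposition used in the proof of Theorem~\ref{theorem_independent_coordination}, but carefully track how a single softmax-replacement reshapes the diversity counts that drive an anti-coordination reward. First I would dispose of the trivial sub-case: if the $c=0\to c=1$ communication round only permutes policies among agents (so that the multiset of active policies is unchanged), then $V^{pop}(\boldsymbol{\pi}^{\mathrm{net}}_{k+1,1},\mu_t) = V^{pop}(\boldsymbol{\pi}^{\mathrm{ind}}_{k+1},\mu_t)$ deterministically, and the stated inequality holds with equality. Assume therefore the non-trivial case, in which at least one agent $j$ abandons its own policy and adopts a copy of some other agent's policy $\pi^{(i,\mathrm{net})}_{k+1,0}$.

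Second, I would argue that in expectation the adopted policy has a weakly higher \emph{base} return than the replaced one. Under $\tau^{comm}_k>0$, Line~\ref{softmax_adoption_prob} of Alg.~\ref{policy_communication_algo} gives $\mathbb{E}[\sigma^{(i,\mathrm{net})}_{k+1,0}]\geq \mathbb{E}[\sigma^{(j,\mathrm{net})}_{k+1,0}]$, and by Assumption~\ref{approximation_ordering_assumption} this ordering carries over to the true individual returns $V^{(i,\mathrm{net})}$ and $V^{(j,\mathrm{net})}$ evaluated on $\boldsymbol{\pi}^{\mathrm{net}}_{k+1,0}$. Because Assumption~\ref{no_alignment_after_update} forces both policies to attain the maximum diversity $f_d(N-1)$ prior to the adoption, Def.~\ref{anti-coordination_game_definiton} localises the ordering entirely to the base-return component, giving $\mathbb{E}[b(\pi^{(i,\mathrm{net})})]\geq \mathbb{E}[b(\pi^{(j,\mathrm{net})})]$, consistent with the second clause of Assumption~\ref{base_outweighs_anti_coordination_assumption} which keeps $b$ strictly below its supremum and hence leaves room for strict improvement.

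Third, I would itemise the per-agent changes in return caused by the replacement. The crucial observation is that only the two agents $i$ and $j$ see their diversity counts move from $f_d(N-1)$ to $f_d(N-2)$, while every other agent's $f_d$ is unchanged (its $N-1$ neighbour policies remain pairwise distinct from its own, since a copy of $\pi^{(i)}$ in slot $j$ is still different from $\pi^{(k)}$). Agent $j$'s return moves from $h(b(\pi^{(j)}),f_d(N-1))$ to $h(b(\pi^{(i)}),f_d(N-2))$, which is an expected increase by Assumption~\ref{base_outweighs_anti_coordination_assumption} applied with $\Delta b = b(\pi^{(i)}) - b(\pi^{(j)})$ and $\Delta f_d = f_d(N-1)-f_d(N-2)$. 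Combining with $\boldsymbol{\pi}^{\mathrm{ind}}_{k+1} = \boldsymbol{\pi}^{\mathrm{net}}_{k+1,0}$ (the independent case corresponds to skipping Alg.~\ref{policy_communication_algo}), summing the per-agent effects, and taking expectations would yield the claimed population-average comparison.

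The principal obstacle is the asymmetric loss to agent $i$: although Assumption~\ref{base_outweighs_anti_coordination_assumption} handles agent $j$ directly, agent $i$'s base return does not change while its diversity strictly decreases, so its individual return falls from $h(b(\pi^{(i)}),f_d(N-1))$ to $h(b(\pi^{(i)}),f_d(N-2))$. To close the argument I would read Assumption~\ref{base_outweighs_anti_coordination_assumption} as a pairwise dominance condition on the sum $h(b(\pi^{(i)}),f_d(N-2))+h(b(\pi^{(i)}),f_d(N-2))$ versus $h(b(\pi^{(i)}),f_d(N-1))+h(b(\pi^{(j)}),f_d(N-1))$, which follows because the softmax biases replacements toward pairs with large $\mathbb{E}[\Delta b]$ and because the second clause of the assumption keeps base returns strictly suboptimal, so the expected $b$-gap can absorb the single $\Delta f_d$ increment shared by agents $i$ and $j$. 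Any residual slack can be made arbitrarily favourable by shrinking $\tau^{comm}_k$, which sharpens the softmax toward the maximiser and amplifies $\mathbb{E}[\Delta b]$; this is precisely the dial the theorem hands us to guarantee $\geq$ rather than a tighter strict bound.
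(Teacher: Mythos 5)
Your proposal follows essentially the same route as the paper's proof: the same two-scenario decomposition (pure permutation versus genuine replacement), the same use of the softmax adoption rule plus Assumption~\ref{approximation_ordering_assumption} and Assumption~\ref{no_alignment_after_update} to localise the return ordering to the base-return component, and the same final appeal to Assumption~\ref{base_outweighs_anti_coordination_assumption} and to $\boldsymbol{\pi}^{\mathrm{ind}}_{k+1}=\boldsymbol{\pi}^{\mathrm{net}}_{k+1,0}$. Where you differ is in the per-agent accounting: you explicitly flag that agent $i$ (whose policy is copied) suffers a drop from $f_d(N-1)$ to $f_d(N-2)$ with no compensating base-return gain, so that Assumption~\ref{base_outweighs_anti_coordination_assumption}, read literally as a statement with $\Delta b>0$, only covers the adopting agent $j$. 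This is a real observation: the paper's proof does not address it and simply asserts that the assumption lets the base-return increase outweigh ``the decrease in diversity that will come from having more than one agent following $\pi^{(i,\mathrm{net})}_{k+1,c+1}$,'' i.e.\ it implicitly reads the assumption as an aggregate (population-level) dominance condition. Your attempted patch is the weaker part of your argument: the pairwise-sum dominance you want does not follow from the assumption as written, and you cannot ``shrink $\tau^{comm}_k$'' to create slack, since the theorem fixes an arbitrary $\tau^{comm}_k\in\mathbb{R}_{>0}$ and the claim must hold for that value. The honest resolution is either to adopt the paper's liberal reading of Assumption~\ref{base_outweighs_anti_coordination_assumption} (as bounding the net effect of one additional copy of $\pi^{(i,\mathrm{net})}$ on the population average, which is evidently how the authors intend it) or to note that the assumption should be strengthened to the pairwise form you describe; either way your proof then closes by the same steps as the paper's.
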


\begin{proof}
The proof begins similarly to that for a coordination game. Let us consider two scenarios. Firstly let us imagine that within the communication round, agents swap policies, but no policy drops out of the population, such that if agent $i$ adopts policy $\pi^j_{k+1}$, there exists an agent $i'$ that adopts policy $\pi^i_{k+1}$, and so on. That way we end up with the same policies in the population as before the change, but with each one possibly carried by different arbitrary agents. This is equivalent to if no communication had taken place, meaning that in this scenario $V^{pop}(\boldsymbol{\pi}^{\mathrm{net}}_{k+1,c+1},\mu_t) = V^{pop}(\boldsymbol{\pi}^{\mathrm{ind}}_{k+1},\mu_t)$.

Let us now consider an alternative scenario. The softmax adoption scheme (Line \ref{softmax_adoption_prob}, Alg. \ref{policy_communication_algo}), which for $\tau^{comm}_{k} \in \mathbb{R}_{> 0}$ gives non-uniform adoption probabilities for distinct $\sigma$ values, means by definition that some communicated policies are more likely to be adopted than others if they have distinct finitely estimated returns. Thus in expectation the number of distinct policies in the population will decrease if at least one policy has a distinct return from the others. Say for simplicity that during the first communication round a $\pi^{(j,\mathrm{net})}_{k+1,c}$ is replaced by $\pi^{(i,\mathrm{net})}_{k+1,c}$, such that for $c=0$
\begin{align*}\boldsymbol\pi^{\mathrm{net}}_{k+1,c} &= \left(\pi^{(1,\mathrm{net})}_{k+1,c},\dots,\pi^{(\textbf{i},\mathrm{net})}_{k+1,c},\dots, \pi^{(\textbf{j},\mathrm{net})}_{k+1,c}, \dots \pi^{(N,\mathrm{net})}_{k+1,c}\right),\\
    \text{and} \quad \boldsymbol\pi^{\mathrm{net}}_{k+1,c+1} &= \left(\pi^{(1,\mathrm{net})}_{k+1,c+1},\dots,\pi^{(\textbf{i},\mathrm{net})}_{k+1,c+1},\dots, \pi^{(\textbf{i},\mathrm{net})}_{k+1,c+1}, \dots \pi^{(N,\mathrm{net})}_{k+1,c+1}\right).\end{align*}
For this to have occurred, we know that \[\mathbb{E}[\sigma^{(i,\mathrm{net})}_{k+1,c}] > \mathbb{E}[\sigma^{(j,\mathrm{net})}_{k+1,c}],\] and therefore by Assumption \ref{approximation_ordering_assumption} that 
\begin{equation}\label{return_order_anticoord_ind}
    \mathbb{E}\left[V^{(i,\mathrm{net})}(\boldsymbol\pi^{\mathrm{net}}_{k+1,c},\mu_t)\right] > \mathbb{E}\left[V^{(j, \mathrm{net})}(\boldsymbol\pi^{\mathrm{net}}_{k+1,c},\mu_t)\right].
\end{equation}

By  Assumption \ref{no_alignment_after_update} we know that straight after the random policy updates there is no alignment among policies, i.e. in the anti-coordination game we have $f_d^{(i,\mathrm{net})}$ = $f_d^{(j,\mathrm{net})} = \max f_d$, while by Assumption \ref{base_outweighs_anti_coordination_assumption} we know that the agents have not yet maximised their base return function.  Therefore if Eq. \ref{return_order_anticoord_ind} pertains, by Def. \ref{coordination_game_definiton} it must be because:
\begin{equation}\label{better_base_policy_anticoord_ind}
    \mathbb{E}[b(\pi^{(i,\mathrm{net})})]>\mathbb{E}[b(\pi^{(j,\mathrm{net})})],
\end{equation}
i.e. because the base policy quality is higher for $\pi^{(i,\mathrm{net})}$ than for $\pi^{(j,\mathrm{net})}$. 

Assumption \ref{base_outweighs_anti_coordination_assumption} assumes that any increase in the base quality of the policy will outweigh the decrease in diversity that will come from having more than one agent following $\pi^{(i,\mathrm{net})}_{k+1,c+1}$. Therefore we have, for $c=0$:
\[\mathbb{E}\left[V^{pop}(\boldsymbol{\pi}^{\mathrm{net}}_{k+1,c+1},\mu_t)\right] > \mathbb{E}\left[V^{pop}(\boldsymbol{\pi}^{\mathrm{net}}_{k+1,c},\mu_t)\right].\] These steps apply similarly if more than one policy is replaced. 

Since the independent case is equivalent to the networked case when $C_p = 0$, we can say that $\boldsymbol\pi^{\mathrm{ind}}_{k+1} = \boldsymbol\pi^{\mathrm{net}}_{k+1,0}$. This gives the result, i.e. \[\mathbb{E}\left[V^{pop}(\boldsymbol{\pi}^{\mathrm{net}}_{k+1,c+1},\mu_t)\right] \geq \mathbb{E}\left[V^{pop}(\boldsymbol{\pi}^{\mathrm{ind}}_{k+1},\mu_t)\right],\]where this inequality is only not strict if the first scenario always applies rather than the second.
\end{proof}

\section{Experiments}\label{experiments_section}

\subsection{Experimental setup}\label{}

We present experiments from grid worlds, following the gold standard in similar works on MFGs and MFC \citep{survey_learningMFGs}. We give results from six tasks similar to those found in prior works, defined by the agents' reward/transition functions and relating to agents' positions relative to other agents. Two 
are coordination games and four 
are anti-coordination games, where in each case the reward function reflects a coordination/anti-coordination ($f_c/f_d$) element alongside other elements that may be crucial for receiving reward, reflected in the policies' base quality $b(\pi)$ (Sec. \ref{theory_section}). In all cases, rewards are normalised in [0,1] after they are computed. 

The two coordination games are:

\begin{itemize}
    \item \textbf{Cluster.} This game is also used in \citet{benjamin2024networked, benjamin2024networkedapproximation}. Agents are encouraged to gather together by the reward function $R(s^i_{t},a^i_{t},\hat{\mu}_t) =$ log$(\hat{\mu}_t(s^i_{t}))$. That is, agent $i$ receives a reward that is logarithmically proportional to the fraction of the population that is co-located with it at time $t$. We give the population no indication where they should cluster, agreeing this themselves over time.

    \item \textbf{Target selection.} This game is also used in \citet{benjamin2024networked, benjamin2024networkedapproximation}. Unlike in the above `cluster' game, the agents are given options of locations at which to gather, and they must reach consensus among themselves. If the agents are co-located with one of a number of specified targets $\phi \in \Phi$ (in our experiments we place one target in each of the four corners of the grid), and other agents are also at that target, they get a reward proportional to the fraction of the population found there; otherwise they receive a penalty of -1. In other words, the agents must coordinate on which of a number of mutually beneficial points will be their single gathering place. Define the magnitude of the distances between $x,y$ at $t$ as $dist_t(x,y)$. The reward function is given by $R(s^i_{t},a^i_{t},\hat{\mu}_t) = r_{targ}(r_{coord}(\hat{\mu}_t(s^i_{t})))$, where \[r_{targ}(x)=\begin{cases}
          x \quad &\text{if} \, \exists \phi \in \Phi \text{ s.t. } dist_t(s^i_{t},\phi) = 0 \\
          -1 \quad &  \text{otherwise,} \\
     \end{cases} \]
\[r_{coord}(x)=\begin{cases}
          x \quad &\text{if} \, \hat{\mu}_t(s^i_{t}) > 1/N \\
          -1 \quad &  \text{otherwise.} \\
     \end{cases}
\]
\end{itemize}

The anti-coordination games are:

\begin{itemize}
    \item \textbf{Disperse.} This game is also used in \citet{benjamin2024networkedapproximation} and is similar to the `exploration' tasks in \citet{scalable_deep,wu2024populationaware} and other MFG works. In our version agents are rewarded for being located in more sparsely populated areas but only if they are stationary, to avoid trivial random policies. The reward function is given by $R(s^i_{t},a^i_{t},\hat{\mu}_t) = r_{stationary}(-\log(\hat{\mu}_t(s^i_{t})))$, where \[r_{stationary}(x)=\begin{cases}
          x \quad &\text{if} \, a^i_{t} \text{ is `remain stationary'}  \\
          -1 \quad &  \text{otherwise.} \\
     \end{cases} \]

    \item \textbf{Target coverage.} The population is rewarded for spreading across a certain number of targets, as long as agents are stationary at the target. As in the `target selection' game, we have targets $\phi \in \Phi$, where in our experiments we place one target in each of the four corners of the grid. Again define the magnitude of the distances between $x,y$ at $t$ as $dist_t(x,y)$. The reward function is given by \[R(s^i_{t},a^i_{t},\hat{\mu}_t) = r_{stationary}\left(r_{targ}\left(-\log(\hat{\mu}_t(s^i_{t}))\right)\right),\] where $r_{stationary}$ and $r_{targ}$ are as defined above.
    
    \item \textbf{Beach bar.} Such games are very common in MFG works \citep{continuous_fictitious_play,survey_learningMFGs,cui2023multiagent,wu2024populationaware}. In our version agents are rewarded for being stationary in sparsely populated locations as close as possible to a target $\phi_b$, located in the centre of the grid. The maximum possible distance from the target is denoted \textit{maxDist}. The reward is given by  \begin{align*}
    R(s^i_{t},a^i_{t},\hat{\mu}_t) = r_{stationary}\left(maxDist - dist_t(s^i_{t},\phi_b) - \log(\hat{\mu}_t(s^i_{t}))\right),\end{align*} where $r_{stationary}$ is as defined above.
    
    \item \textbf{Shape formation.} The population is rewarded for spreading around a ring shape, accomplished by encouraging agents to be a distance of 3 (chosen arbitrarily to fit the grid) from a centre point $\phi_c$. The reward is given by  \[R(s^i_{t},a^i_{t},\hat{\mu}_t) = r_{stationary}\left(r_{ring}\left( -\log(\hat{\mu}_t(s^i_{t}))\right)\right),\] where $r_{stationary}$ is as defined above, and 

\[r_{ring}(x)=\begin{cases}
          x \quad &\text{if} \; dist_t(s^i_{t},\phi_c) = 3 \\
          -1 \quad &  \text{otherwise.} \\
     \end{cases} \]
\end{itemize}

In these spatial environments, we define both the communication network $\mathcal{G}^{comm}_t$ and the visibility graph $\mathcal{G}^{vis}_t$ by the physical distance from $i$. We show plots for various transmission radii, given as fractions of the maximum distance in the grid. Note that the networked population with the largest radius is always fully connected, and therefore these agents are always able to accurately estimate $\hat{r}$ and ${\hat{\mu}}_t$ even for $C_r = 1$ and $C_e=0$. That is, when we set $C_r = C_e>0$ their observations are equivalent to those that the central-agent population would receive, albeit that policies are updated and spread differently.

We evaluate our experiments according to a finite-step estimation of the  population-average discounted return (Def. \ref{Discounted_population_average_expected_return}) over $M$ steps within each outer $k$ loop, i.e. $\hat{V}^{pop}(\boldsymbol{\pi}_k,\mu_t; M)$. Experiments were conducted on a Linux-based machine with 2 x Intel Xeon Gold 6248 CPUs (40 physical cores, 80 threads total, 55 MiB L3 cache). We use the JAX framework to accelerate and vectorise our code. We run five trials with different random seeds for each experiment, and plot the mean and standard deviation of the mean across the seeds. Random seeds are set in our code in a fixed way dependent on the trial number to allow easy replication of experiments. Our code is included in the publicly available supplementary material for reproducibility.

\subsection{Hyperparameters}\label{hyperparameters_section}

\begin{table*}
    \centering \vspace*{-0.9cm}
    \caption{Hyperparameters}\label{Hyperparameters}
    \vspace*{-0.4cm}
    \hspace*{-1.2cm}
    \begin{tabular}{p{1.9cm}|p{1.0cm}|p{14.5cm}}\\ 
  Hyperparam.  & Value     & Comment \\ \hline
  Trials & 5  & We run 5 trials with different random seeds for each experiment. {We plot the mean and standard deviation of the mean for each metric across the seeds.}   \\ \hline
Gridsize     & 20x20 &  -\\ \hline

    Population     & 500 & We chose 500 for our demonstrations to show that our algorithm can handle large populations, indeed often larger than those demonstrated in other mean-field works, especially for grid-world environments, while also being feasible to simulate with respect to time and computation constraints \citep{mfrl_yang,rl_stationary,subramanian2022multi,pomfrl,approximately_entropy,subjective_equilibria,DMFG,cui2023multiagent,general_framework,benjamin2024networked, benjamin2024networkedapproximation,wu2024populationaware}. For example, the MFC work in \citet{carmona2019linear} uses 10 agents; the work on decentralised execution for MFC by \citet{cui2023learningdecentralisedcontrol} uses 200 agents. \\ \hline
    Number of neurons in input layer & 440 & The agent's position is represented by two concatenated one-hot vectors, indicating the agent's row and column. The mean-field distribution is a flattened vector of the same size as the grid. As such, the input size is $[(2 \times \mathrm{dimension}) + (\mathrm{dimension}^2)]$.
    \\ \hline
    Neurons per hidden layer & 256 & We draw inspiration from common rules of thumb when selecting the number of neurons in hidden layers, e.g. it should be between the number of input neurons and output neurons / it should be 2/3 the size of the input layer plus the size of the output layer / it should be a power of 2 for computational efficiency. Using these rules of thumb as rough heuristics, we select the number of neurons per hidden layer by rounding the size of the input layer down to the nearest power of 2. The layers are all fully connected.
    \\ \hline
    Hidden layers & 2 & We achieved sufficient learning speed with just 2 hidden layers, but further optimising the number of layers may lead to better results.
     \\ \hline
     Activation function &   ReLU     & This is a common choice in deep RL.

     \\ \hline
    $K$     & 150       & $K$ is chosen to be large enough to see convergence in most networked cases.\\ \hline 
    $M$     & 20       &  We tested $M$ in \{20,50,100\} and found that the lowest value was sufficient to achieve convergence while minimising training time. It may be possible to converge with even smaller choices of $M$. 
    \\ \hline 
    
     $L$     &   20    &  We tested $L$ in \{20,50,100\} and found that the lowest value was sufficient to achieve convergence while minimising training time. It may be possible to converge with even smaller choices of $L$.
     \\ \hline 
    $E$     &   20    & We tested $E$ in \{20,50,100\}, and choose the lowest value to show the benefit to convergence even from very few evaluation steps. It may be possible to reduce this value further and still achieve similar results. 
     \\\hline $C_p$     &   1 (10/50)   & As in \citet{benjamin2024networked,benjamin2024networkedapproximation}, we choose a value of 1 for most experiments to show the convergence benefits brought by even a single communication round, even in networks that may have limited connectivity. We also conduct additional studies to show the effect of additional rounds in Figs. \ref{10commfig} and \ref{50commfig}.
    \\\hline 
    $C_r$     &   1 (10/50)   & Similar to $C_p$, we choose this value to show our algorithm's ability to appropriately estimate the average reward even with only a single round, even in networks that may have limited connectivity. We conduct additional studies to show the effect of additional rounds in Figs. \ref{10commfig} and \ref{50commfig}.
    \\\hline 
    $C_e$     &   1 (10/50)   & Similar to $C_p$, we choose this value to show the ability of our algorithm to appropriately estimate the mean field even with only a single communication round, even in networks that may have limited connectivity. We also conduct additional studies to show the effect of additional rounds in Figs. \ref{10commfig} and \ref{50commfig}.
    \\\hline $\gamma$     & 0.9 & Standard choice across RL literature.     \\\hline 
    $\tau_q$     &   0.03   & We follow \citet{NEURIPS2020_2c6a0bae} and \citet{benjamin2024networkedapproximation}, which tested a range of values.\\\hline 

    $|B|$   &  32    & This is a common choice of batch size that trades off noisy updates and computational efficiency. \\\hline 

    $cl$     &   -1   & We use the same value as in \citet{NEURIPS2020_2c6a0bae} and \citet{benjamin2024networkedapproximation}.\\\hline 

    $\nu$     &  $L-1$    & We follow \citet{benjamin2024networkedapproximation}, which is similar to \citet{scalable_deep}. \\\hline

    Optimiser     &   Adam   & As in \citet{NEURIPS2020_2c6a0bae}, we use the Adam optimiser with initial learning rate 0.01. \\\hline

    $\tau^{comm}_k$    &   cf. comment    & We follow \citet{benjamin2024networkedapproximation}, where $\tau^{comm}_k$ increases linearly from 0.001 to 1 across the $K$ iterations. Further optimising this inverse annealing process may lead to better results; we provide an ablation study in Fig. \ref{tau_ablation_fig}.
    \\ \hline \hline 
    \end{tabular}    
    \end{table*}

See Table \ref{Hyperparameters} for our hyperparameter choices. We can group our hyperparameters into those controlling the size of the experiment, those controlling the size of the Q-network, those controlling the number of iterations of each loop in the algorithms and those affecting the learning/policy updates or policy adoption.

In our experiments we generally want to demonstrate that our communication-based algorithm learns faster than the central-agent and independent architectures
, even when the Q-function / mean field / average reward are poorly estimated as is likely to be the case in complex real-world scenarios. There is a similar motivation in the related works on networked communication in the MFG setting by \citet{benjamin2024networked,benjamin2024networkedapproximation}. Moreover we want to show that there is a benefit even to a small amount of communication, so that communication rounds themselves do not excessively add to time complexity. As such, we generally select hyperparameters at the lowest end of those we tested during development, to show that our algorithms are particularly successful and robust given what might otherwise be considered `undesirable' hyperparameter choices. 

\subsection{Results and discussion}\label{results_and_discussion}

\begin{figure*}[t]
    \centering
    \begin{subfigure}[b]{0.49\textwidth}
        \centering
        \includegraphics[width=\textwidth]{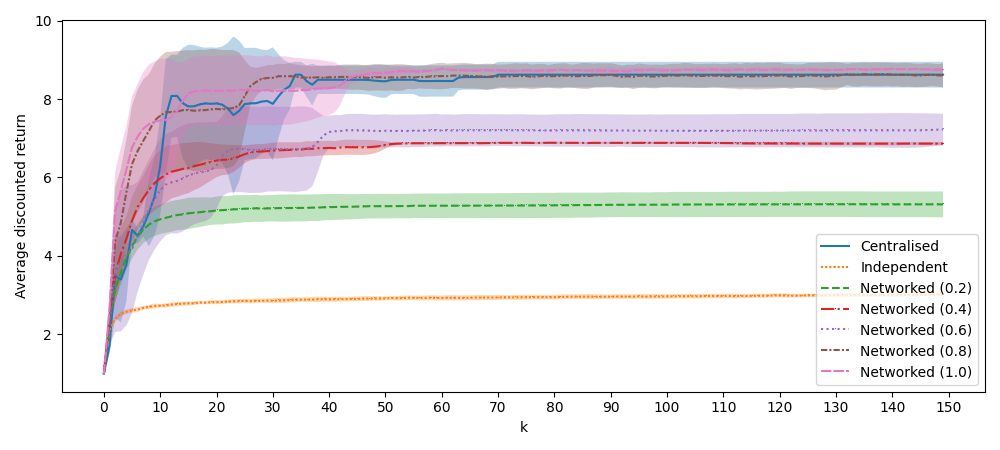}
        \caption{`Cluster' game.}
        \label{}
    \end{subfigure}
    \begin{subfigure}[b]{0.49\textwidth}
        \centering
        \includegraphics[width=\textwidth]{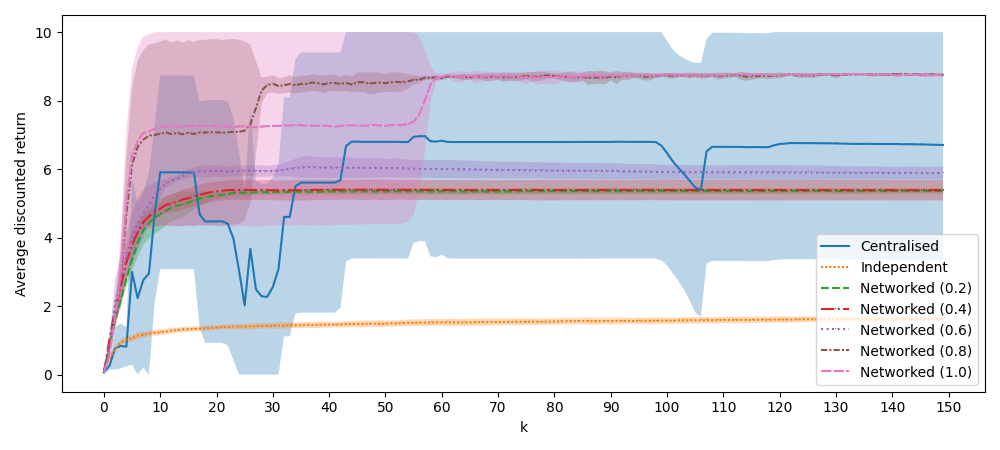}
        \caption{`Target selection' game.}
        \label{}
    \end{subfigure}
    \begin{subfigure}[b]{0.49\textwidth}
        \centering
        \includegraphics[width=\textwidth]{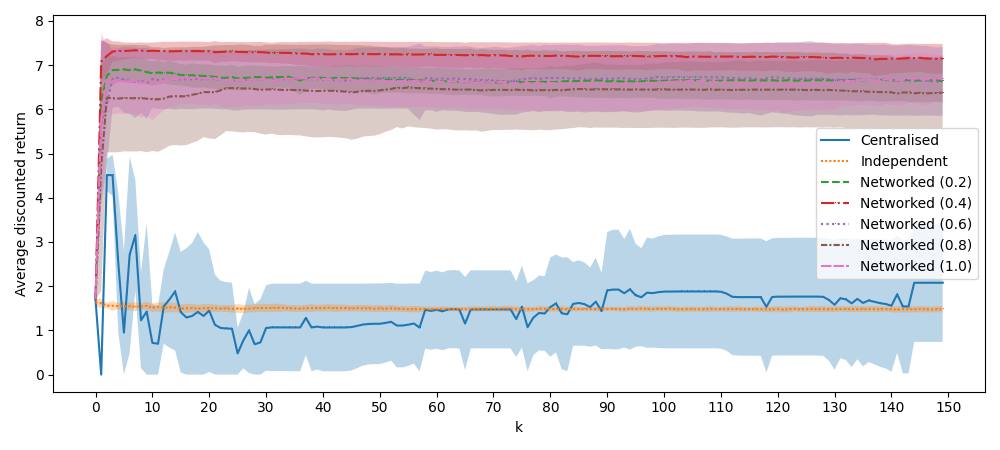}
        \caption{`Disperse' game.}
        \label{}
    \end{subfigure}
    \begin{subfigure}[b]{0.49\textwidth}
        \centering
        \includegraphics[width=\textwidth]{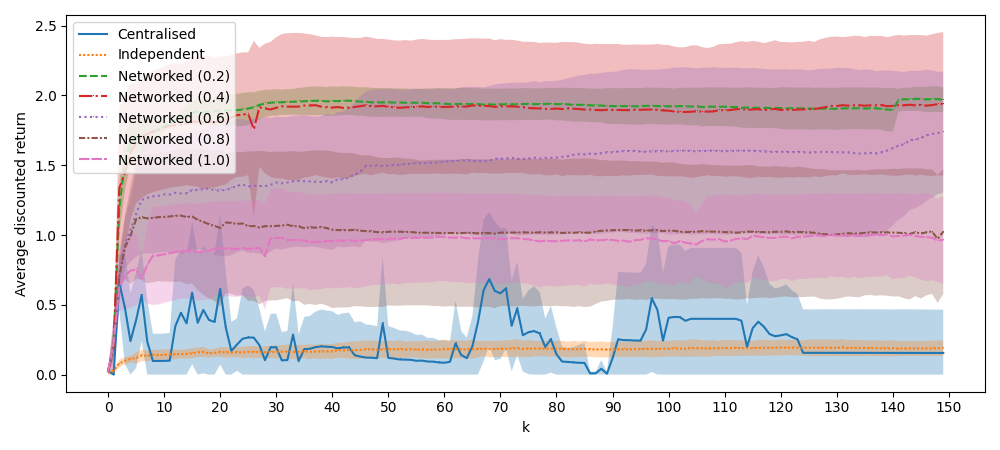}
        \caption{`Target coverage' game.}
        \label{}
    \end{subfigure}
    
    \begin{subfigure}[b]{0.49\textwidth}
        \centering
        \includegraphics[width=\textwidth]{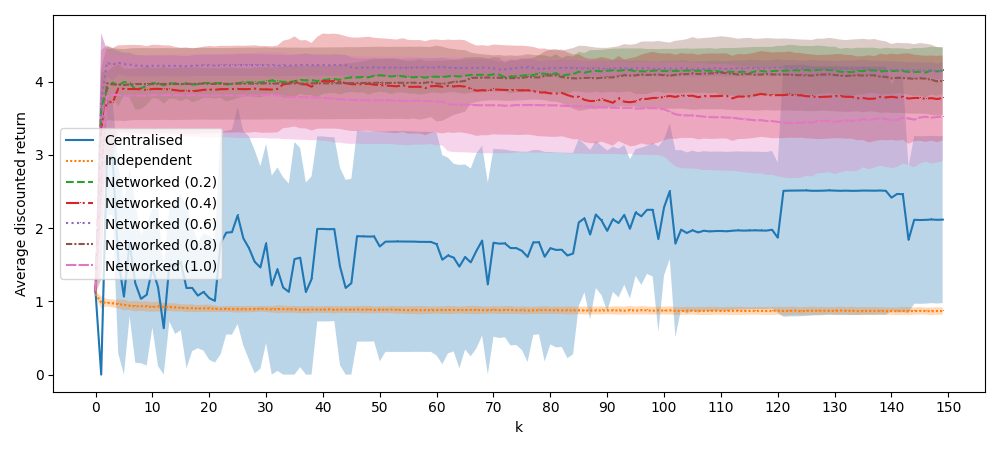}
        \caption{`Beach bar' game.}
        \label{}
    \end{subfigure}
    \begin{subfigure}[b]{0.49\textwidth}
        \centering
        \includegraphics[width=\textwidth]{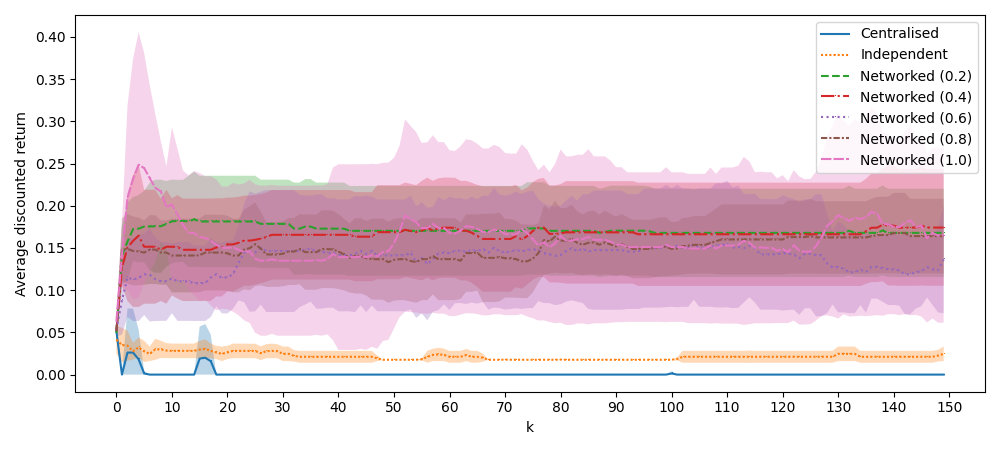}
        \caption{`Shape formation' game.}
        \label{}
    \end{subfigure}
    \caption{Standard settings with $C_e=C_r=C_p=1$. In all games networked agents of all broadcast radii significantly outperform the independent (orange) populations, and in most games they also outperform the central-agent (blue) populations, reflecting our theoretical results. The central-agent populations also have markedly higher variance than networked ones in several games, since the central learner pushes an arbitrary updated policy to the whole population regardless of its quality, leading to large fluctuations in performance, whereas our communication scheme biases networked populations towards better performing updates.} 
    \label{standard}
\end{figure*}

Fig. \ref{standard} gives results for our standard experimental settings involving 500 agents, each with their own Q-network. When networked agents communicate, they have only a \textit{single} communication round. Fig. \ref{standard} shows that in all of our games, networked populations of {all} broadcast radii significantly outperform independent (orange) agents, which hardly appear to increase their returns, if at all. Networked populations of {all} broadcast radii also significantly outperform the central-agent (blue) populations in all but the two coordination games, where only networked agents of the smaller radii (green, 0.2; red, 0.4; purple, 0.6) underperform them (due to these less connected populations being more likely to experience violations of Assumption \ref{single_policy_assumption} on policy consensus, which is a disadvantage in scenarios where alignment is beneficial). Indeed, in the anti-coordination games the central-agent populations perform similarly to purely independent ones in hardly appearing to increase their returns, performing even worse than independent agents in the `shape formation' game. The central-agent populations also have markedly higher variance than networked ones in several games (`target selection', `disperse', `beach bar'). This reflects our theoretical analysis in Sec. \ref{theory_section} that the central learner pushes an arbitrary updated policy to the whole population regardless of its quality, leading to large fluctuations in performance, whereas our communication scheme biases networked populations towards better performing updates. 




In the `target coverage' game, and sometimes the other anti-coordination games to a lesser extent, networked agents of smaller broadcast radii appear to outperform those of larger radii, i.e. the ordering is reversed from that of the coordination games, albeit not necessarily significantly so. This reflects the point up to which our Assumption \ref{base_outweighs_anti_coordination_assumption} (increase in base return outweighs decrease in diversity in anti-coordination games, and base return is not yet maximised), holds in practice, which we discuss in the following. 

The first part of Assumption \ref{base_outweighs_anti_coordination_assumption} strictly holds throughout the `disperse', `target coverage' and `shape formation' anti-coordination games: agents get no reward for diversity unless they are stationary (and also unless they are in one of the correct locations in the latter two cases). This means that any increase in base return (likelihood of being stationary or in the right location) achieved by policy adoption does indeed outweigh the loss of diversity. The first part of Assumption \ref{base_outweighs_anti_coordination_assumption} mostly holds in the `beach bar' game, apart from in a small window for agents that are stationary close to the bar target, with the window defined by the size of the population and hence the potential magnitude of the $\log(\hat{\mu}_t(s^i_{t})$ term in the reward function. Inside this window, increasing base return by moving even closer to the target, at the cost of being in a more crowded area, would not necessarily be beneficial. Regardless, in all these games the networked populations of all broadcast radii significantly outperform the independent agents, which do not appear to be able to learn at all without the helpful bias towards policies with better base returns enabled by the communication scheme. 

However, among these networked populations, the base return quickly reaches its capacity, i.e. agents learn to be primarily stationary in one of the right locations, such that the second part of Assumption \ref{base_outweighs_anti_coordination_assumption} no longer holds. This is not an issue when comparing with the independent populations, which have not maximised their base returns and therefore perform worse, but it does give rise to the reverse ordering of returns which we see among networked populations of different radii and hence connectivities. Once base return is maximised, policies that are estimated to receive higher returns may be less aligned with other policies than those other policies are with each other (at least regarding the strategically relevant parts of policies which are rewarded for greater diversity, e.g. these policies visit the less congested locations), or they simply visited the less congested locations by chance during the finite evaluation steps. Either way, more adoption of policies now becomes a disadvantage, since it reduces diversity without an additional positive impact on base return. Therefore architectures that give less communication now perform better by preserving diversity. Populations with lower broadcast radii usually have less connected networks, especially if sub-populations become isolated from each other, which is more likely in our `target coverage' game than the others since the target locations are as far apart as possible from each other. Therefore these populations have less communication than those with larger broadcast radii and so may perform better, even while all networked populations outperform the independent agents that have not maximised their base returns.

This intuition also helps to understand why networked populations outperform central-agent populations in these anti-coordination games, especially when policy consensus is not enforced for the networked populations
. The ultimate choice of consensus level might depend on whether one is using the empirical population as a practical way of learning the social optimum for a MFC problem (Def. \ref{social_optimum}), where a single policy $\pi^*$ is desired to be given to an infinite population, or whether one is solving the MFC problem to approximate the solution to a finite-agent control problem (Def. \ref{Discounted_population_average_expected_return}) involving the same number of agents as the empirical population from which one is learning. In the latter case some policy diversity may be accepted/desired if it affords a better approximation to the $N$-agent solution.

We provide numerous additional experiments and ablation studies. We list these below, but please find the full discussion of results in the caption for each figure. Of particular note, the ablation studies of Algs. \ref{average_reward_alg} (estimating global average reward) and \ref{alg:mean_field_estimation_specific} (estimating global empirical mean field) suggest that in our experimental settings the policy communication scheme (Alg. \ref{policy_communication_algo}) is the dominant factor in the better performance of networked populations over the other architectures.


\begin{itemize}
    \item Robustness to communication failure - Fig. \ref{comm_failure_fig}.
    \item Increased communication rounds - Figs. \ref{10commfig} and \ref{50commfig}.
    \item Ablation study with population-independent policies - Fig. \ref{population_independent_policies_fig}.
    \item Ablation study of Alg. \ref{alg:mean_field_estimation_specific} for estimating the empirical mean field - Fig. \ref{global_mean_field_fig}.
    \item Ablation study for observation of true/estimated average reward (agents only see their individual reward) - Fig. \ref{no_average_reward_fig}.
    \item Ablation study for Alg. \ref{average_reward_alg} for estimating the true global average reward (all agents receive true global average reward) - Fig. \ref{all_global_average_fig}.
    \item Ablation study of the choice of $\tau^{comm}_k$ - Fig. \ref{tau_ablation_fig}.
\end{itemize}

\begin{figure*}[h]
    \centering
    \begin{subfigure}[b]{0.49\textwidth}
        \centering
        \includegraphics[width=\textwidth]{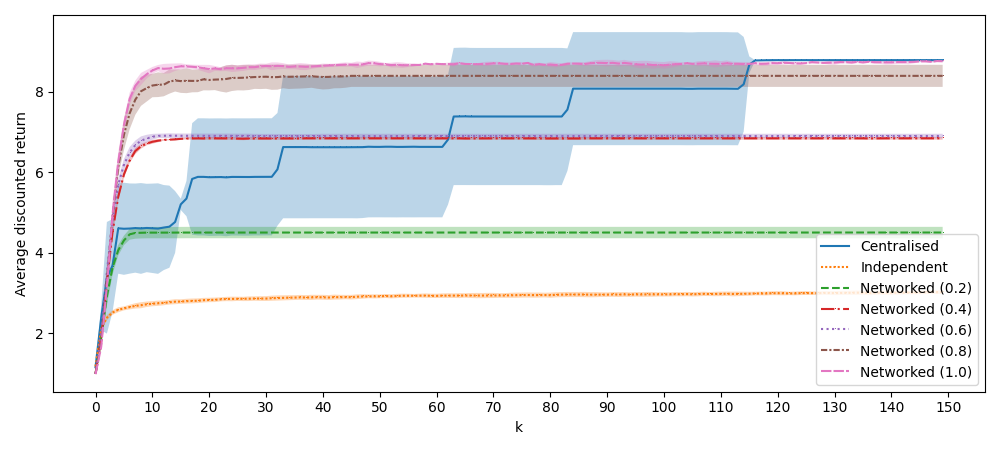}
        \caption{`Cluster' game.}
        \label{}
    \end{subfigure}
    \begin{subfigure}[b]{0.49\textwidth}
        \centering
        \includegraphics[width=\textwidth]{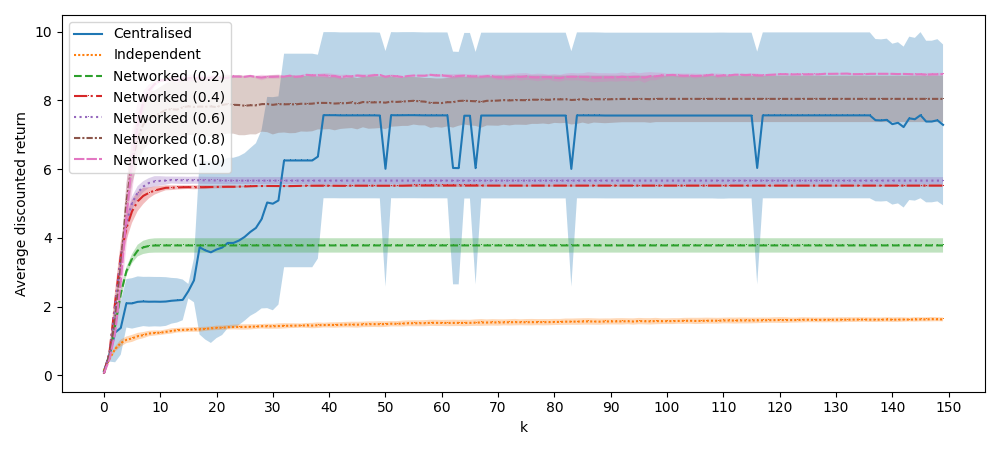}
        \caption{`Target selection' game.}
        \label{}
    \end{subfigure}
    \begin{subfigure}[b]{0.49\textwidth}
        \centering
        \includegraphics[width=\textwidth]{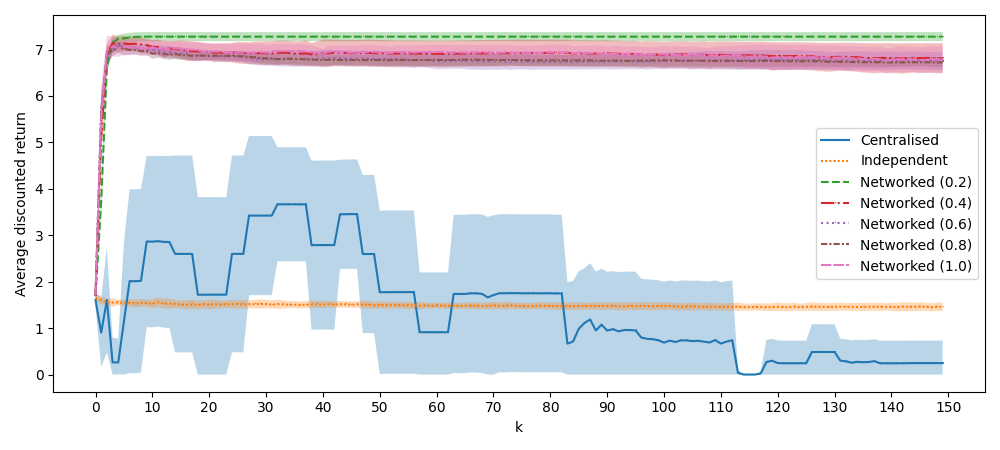}
        \caption{`Disperse' game.}
        \label{}
    \end{subfigure}
    \begin{subfigure}[b]{0.49\textwidth}
        \centering
        \includegraphics[width=\textwidth]{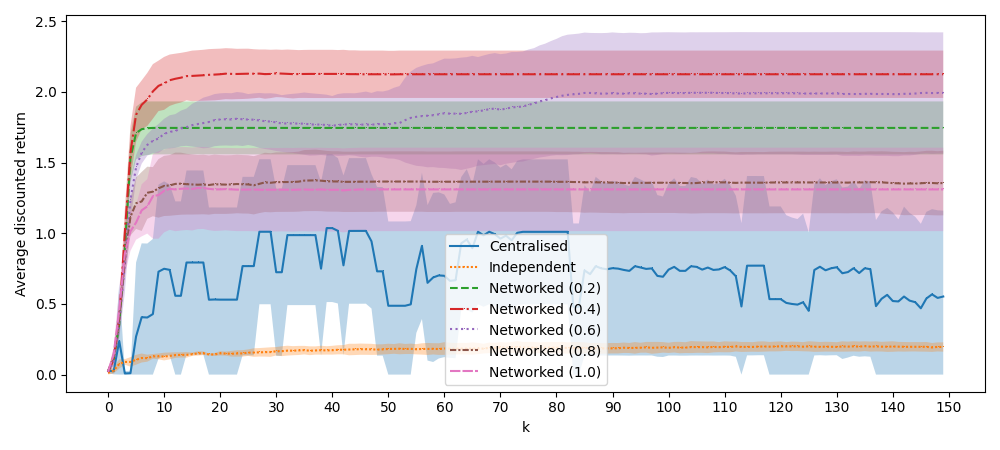}
        \caption{`Target coverage' game.}
        \label{}
    \end{subfigure}
    
    \begin{subfigure}[b]{0.49\textwidth}
        \centering
        \includegraphics[width=\textwidth]{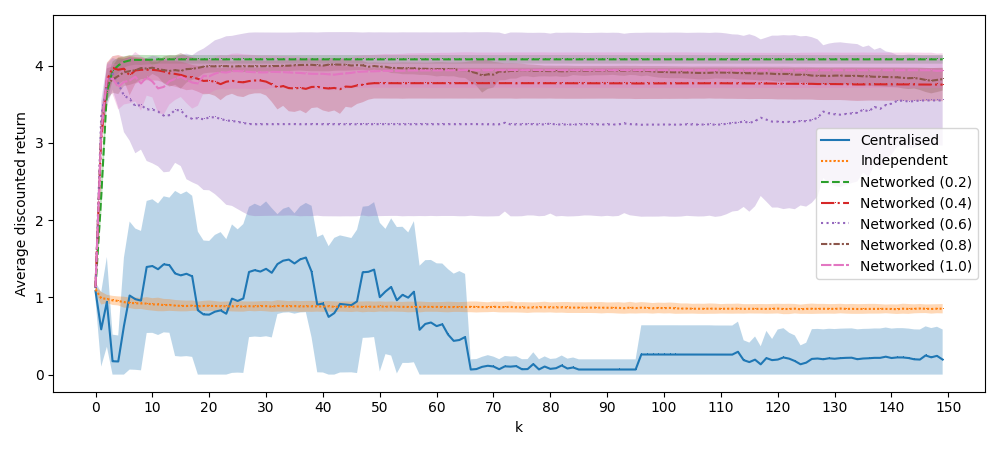}
        \caption{`Beach bar' game.}
        \label{}
    \end{subfigure}
    \begin{subfigure}[b]{0.49\textwidth}
        \centering
        \includegraphics[width=\textwidth]{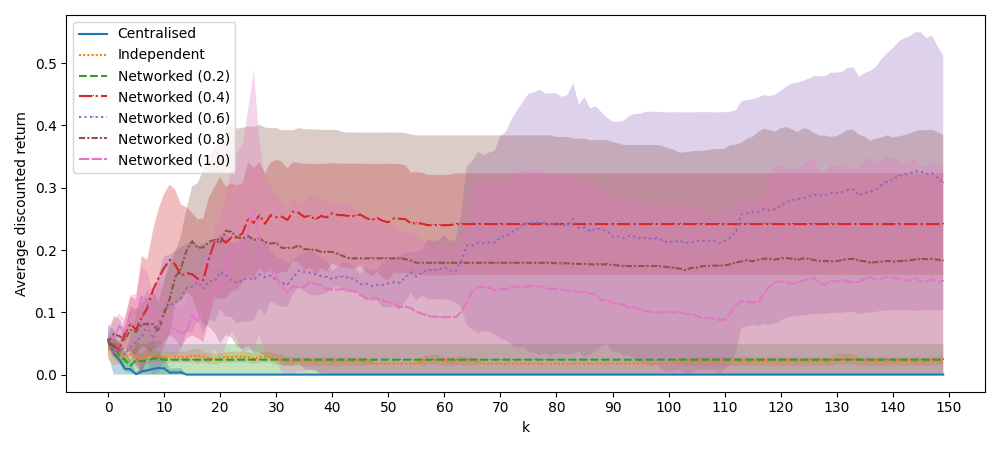}
        \caption{`Shape formation' game.}
        \label{}
    \end{subfigure}
    \caption{All communication links suffer a 90\% probability of failure, including in the central-agent case, where the link between the central learner and the rest of the population may fail. $C_e=C_r=C_p=1$. The central-agent population, which in the standard setting matched networked performance only in the `cluster' game, now learns slower even in this game, due to suffering from the single point of failure. Our networked scheme appears robust to the failures in all games, with only small differences compared to performance in the standard setting. In fact, several broadcast radii appear to perform better in the `shape formation' game with these failures than without (though not significantly so), probably because they permit greater diversity policies while still having an advantage over purely independent learners (as discussed in the body of Sec. \ref{results_and_discussion}). However, the smallest broadcast radius (green, 0.2) does drop in performance in this game, which might be expected given it now acts similarly to the independent case. Networked populations appear to have less variance in this setting than in the standard setting, at least in the first four games. This is possibly because the communication failures prevent both particularly high and particularly low performing policies from spreading fast in the population, preventing large performance fluctuations and smoothing learning progress. Meanwhile a central-agent population still has large variance even with communication failures, due to enforcing the adoption of an arbitrarily-chosen consensus policy - in some games variance is higher in this setting (though in some it may be marginally lower). This points to an additional benefit of our networked scheme over the central-agent case.}
    \label{comm_failure_fig}
\end{figure*}

\newpage

\begin{figure*}[t]
    \centering
    \begin{subfigure}[b]{0.49\textwidth}
        \centering
        \includegraphics[width=\textwidth]{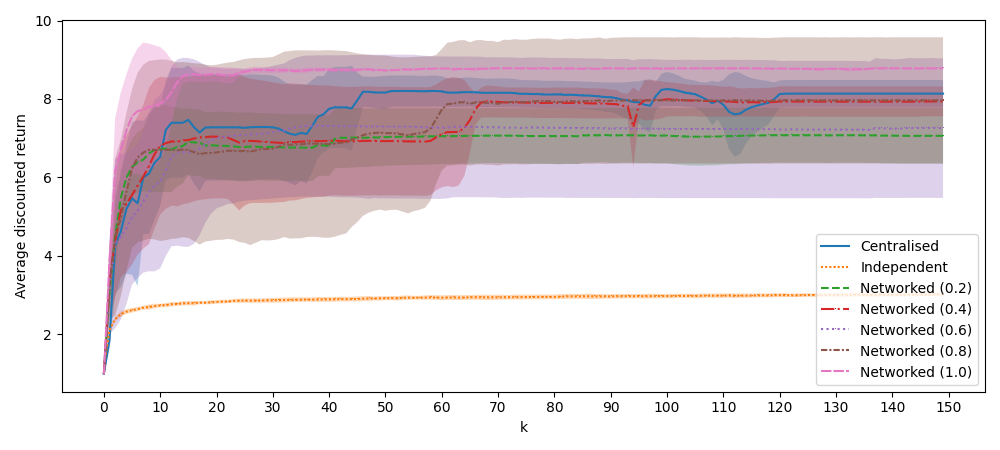}
        \caption{`Cluster' game.}
        \label{}
    \end{subfigure}
    \begin{subfigure}[b]{0.49\textwidth}
        \centering
        \includegraphics[width=\textwidth]{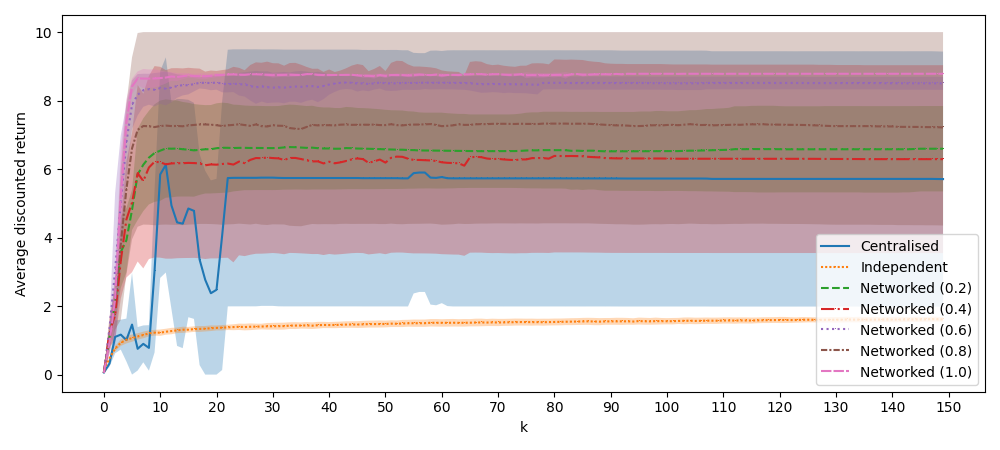}
        \caption{`Target selection' game.}
        \label{}
    \end{subfigure}
    \begin{subfigure}[b]{0.49\textwidth}
        \centering
        \includegraphics[width=\textwidth]{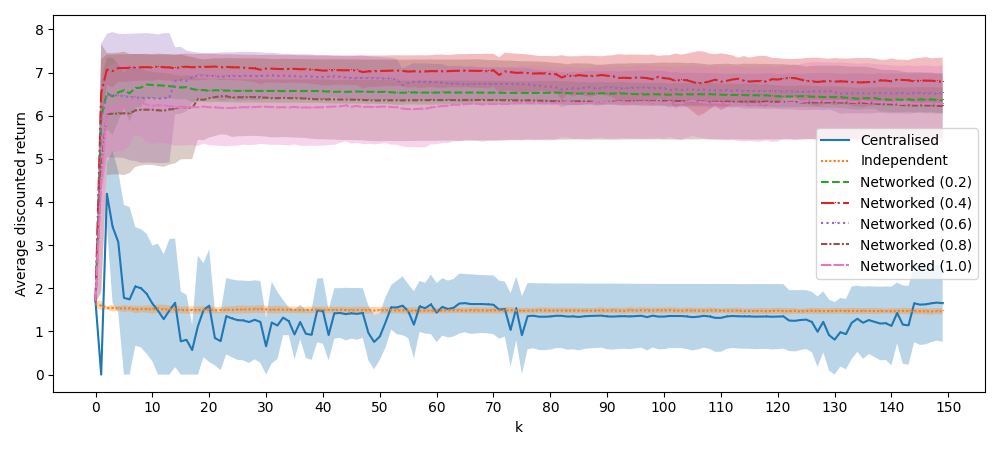}
        \caption{`Disperse' game.}
        \label{}
    \end{subfigure}
    \begin{subfigure}[b]{0.49\textwidth}
        \centering
        \includegraphics[width=\textwidth]{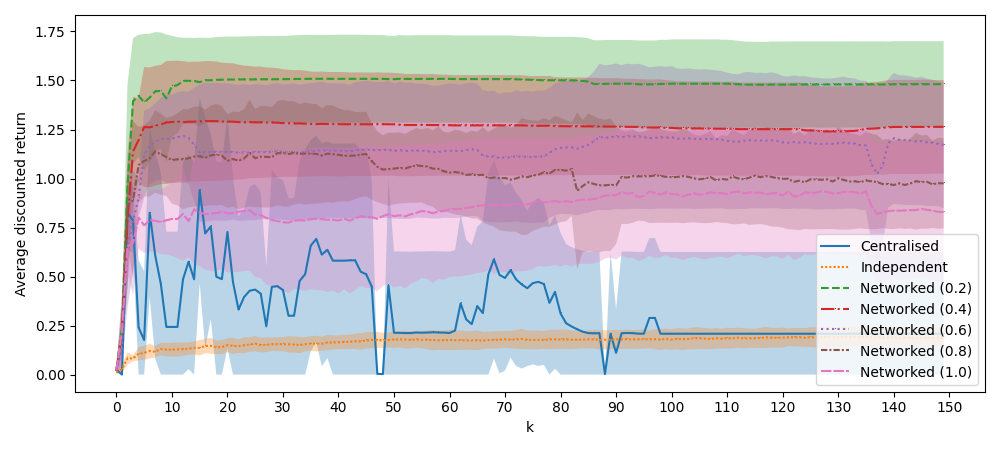}
        \caption{`Target coverage' game.}
        \label{}
    \end{subfigure}
    
    \begin{subfigure}[b]{0.49\textwidth}
        \centering
        \includegraphics[width=\textwidth]{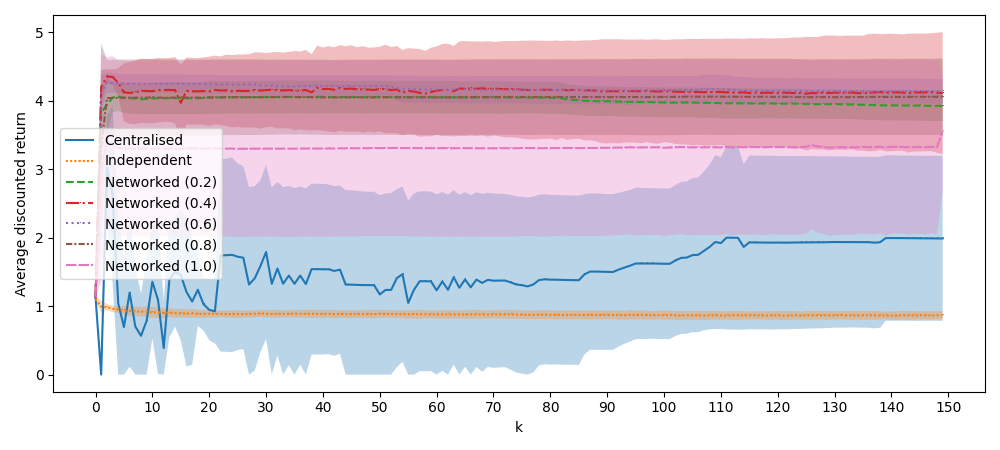}
        \caption{`Beach bar' game.}
        \label{}
    \end{subfigure}
    \begin{subfigure}[b]{0.49\textwidth}
        \centering
        \includegraphics[width=\textwidth]{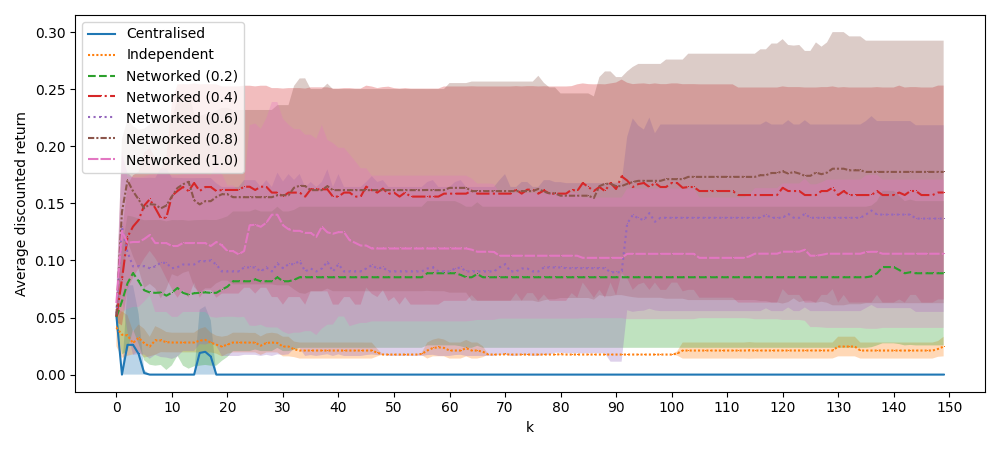}
        \caption{`Shape formation' game.}
        \label{}
    \end{subfigure}
    \caption{Standard algorithms but $C_e=C_r=C_p=$ \textbf{10}. As is expected, in the coordination games the networked agents with lower broadcast radii now receive returns almost as high as those with larger radii, albeit at the cost of greater variance (having more communication rounds leads to greater policy consensus in the population at each iteration of the outer loop, and there may be some noise in the quality of these consensus policies). In the `target selection' game, now all networked populations appear to outperform the central-agent (orange) population, though again with high variance. In the anti-coordination `target coverage' game, the smaller broadcast radii (green, 0.2; red, 0.4; purple, 0.6) receive slightly lower returns than before, since the additional communication rounds now make policy alignment more likely, reducing $f_d$ as per Def \ref{anti-coordination_game_definiton}.  The same is true of the smallest radius population (green, 0.2) in the `shape formation' game, which receives a lower return than before. This reflects the discussion in Sec. \ref{results_and_discussion} regarding the detrimental effect of additional policy adoption once the maximum base return has been achieved in anti-coordination games. Nevertheless, \textit{all} networked populations receive higher returns than the independent agents in all games, and also than the central-agent population in all but the `cluster' game. This shows that in our experimental settings there is a very large benefit to a single communication round, with limited benefit to increasing the algorithms' time complexity with additional communication rounds. 
    }
    \label{10commfig}
\end{figure*}

\newpage

\begin{figure*}[t]
    \centering
    \begin{subfigure}[b]{0.49\textwidth}
        \centering
        \includegraphics[width=\textwidth]{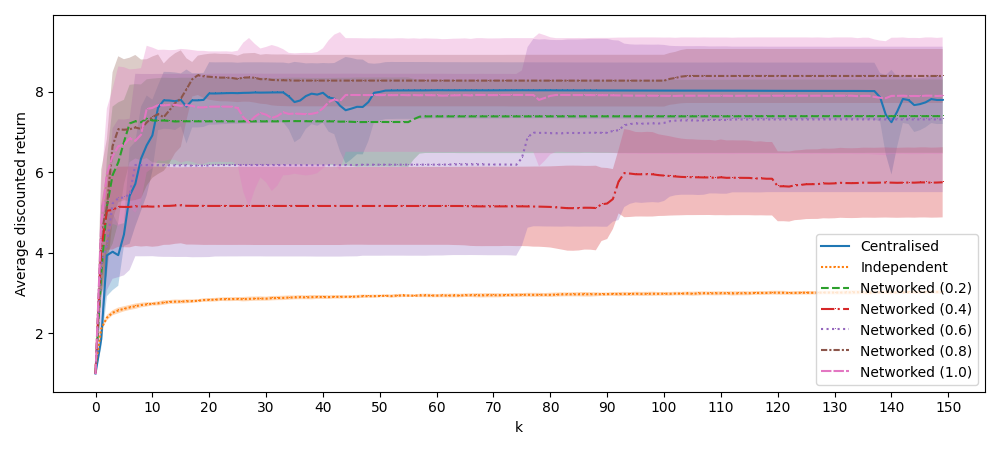}
        \caption{`Cluster' game.}
        \label{}
    \end{subfigure}
    \begin{subfigure}[b]{0.49\textwidth}
        \centering
        \includegraphics[width=\textwidth]{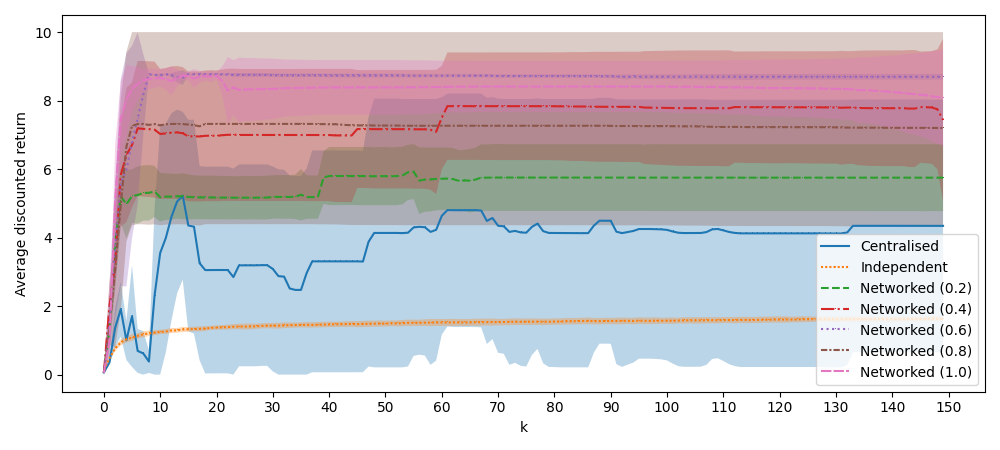}
        \caption{`Target selection' game.}
        \label{}
    \end{subfigure}
    \begin{subfigure}[b]{0.49\textwidth}
        \centering
        \includegraphics[width=\textwidth]{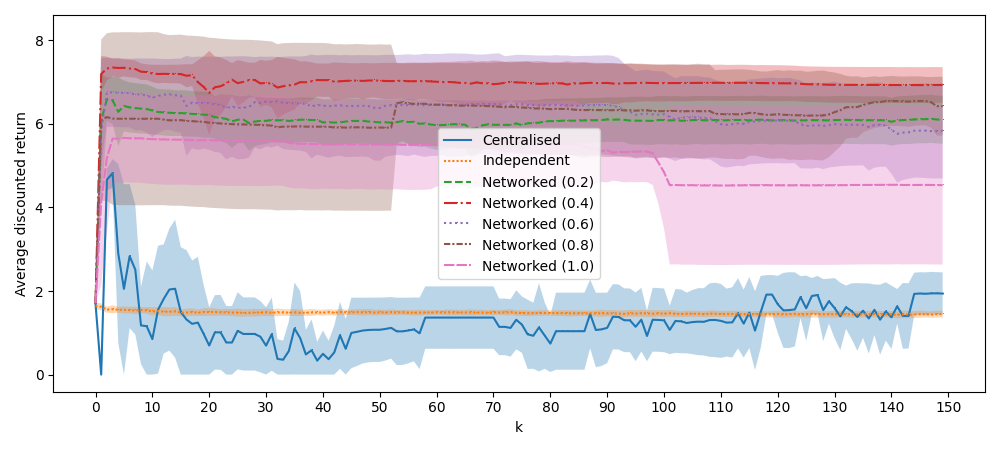}
        \caption{`Disperse' game.}
        \label{}
    \end{subfigure}
    \begin{subfigure}[b]{0.49\textwidth}
        \centering
        \includegraphics[width=\textwidth]{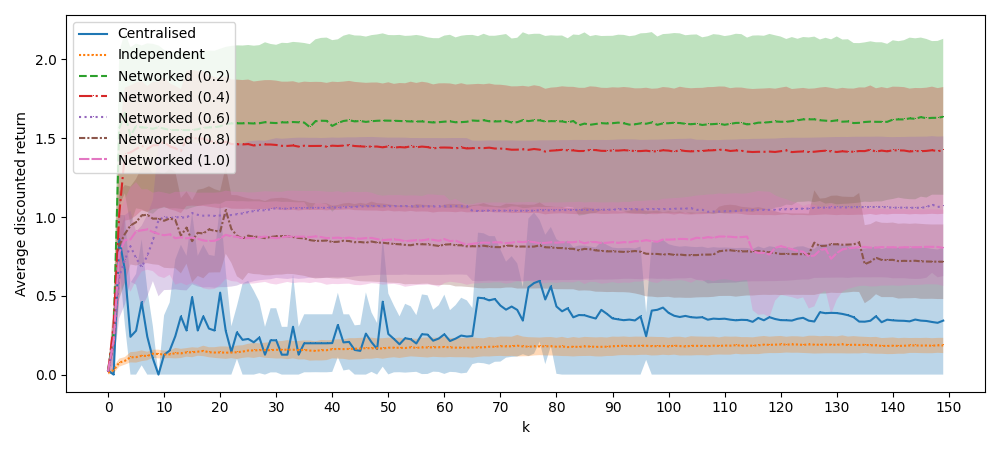}
        \caption{`Target coverage' game.}
        \label{}
    \end{subfigure}
    
    \begin{subfigure}[b]{0.49\textwidth}
        \centering
        \includegraphics[width=\textwidth]{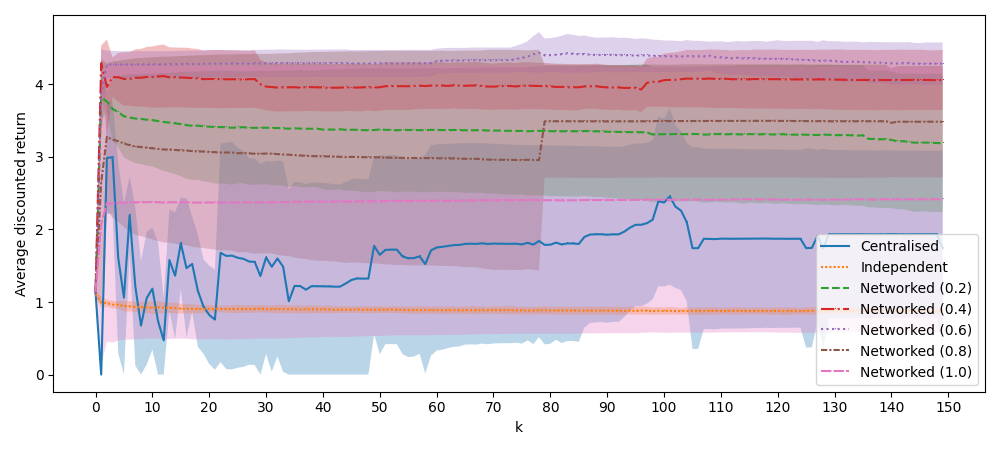}
        \caption{`Beach bar' game.}
        \label{}
    \end{subfigure}
    \begin{subfigure}[b]{0.49\textwidth}
        \centering
        \includegraphics[width=\textwidth]{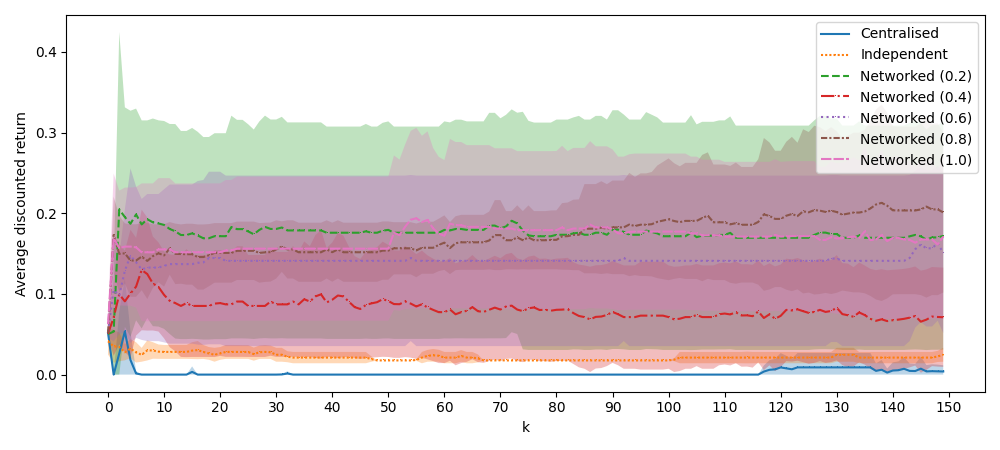}
        \caption{`Shape formation' game.}
        \label{}
    \end{subfigure}
    \caption{Standard algorithms but $C_e=C_r=C_p=$ \textbf{50}. Having 50 communication rounds does not appear to significantly change networked performance compared to 10  rounds (Fig. \ref{10commfig}), with most increases or decreases in average return appearing within the margin of error. Most notably, the largest broadcast radius (pink, 1.0) receives slightly lower return now than with 10 rounds in the `disperse' game, while pink (1.0), brown (0.8) and green (0.2) receive lower returns and have higher variance now in the `beach bar' game. As in the case of $C_e=C_r=C_p=10$, additional communication rounds make policy alignment more likely, reducing $f_d$ as per Def \ref{anti-coordination_game_definiton}. This reflects the discussion in Sec. \ref{results_and_discussion} regarding the detrimental effect of additional policy adoption once the maximum base return has been achieved in anti-coordination games. Nevertheless, \textit{all} networked populations receive higher returns than the independent agents in all games, and also than the central-agent population in all but the `cluster' game. This shows that in our experimental settings there is a very large benefit to a single communication round, with limited benefit to increasing the algorithms' time complexity with additional communication rounds. 
    }
    \label{50commfig}
\end{figure*}

\newpage

\begin{figure*}[t]
    \centering
    \begin{subfigure}[b]{0.49\textwidth}
        \centering
        \includegraphics[width=\textwidth]{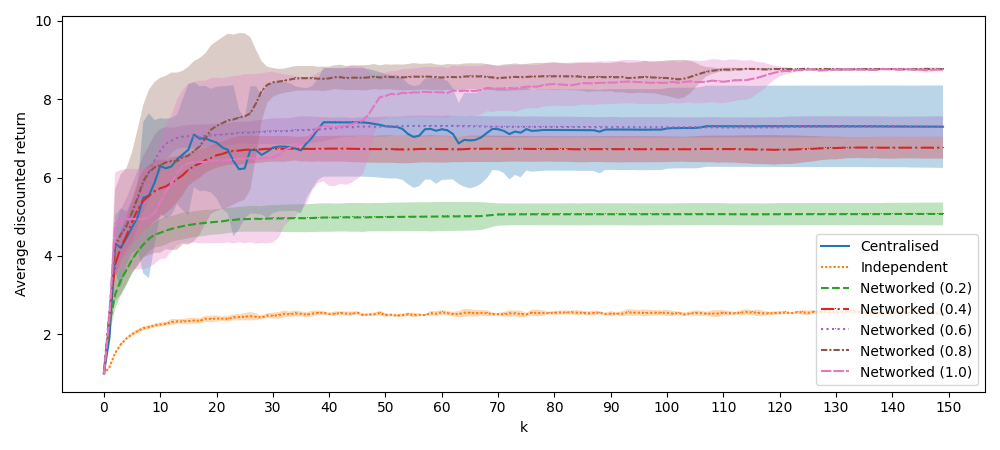}
        \caption{`Cluster' game.}
        \label{}
    \end{subfigure}
    \begin{subfigure}[b]{0.49\textwidth}
        \centering
        \includegraphics[width=\textwidth]{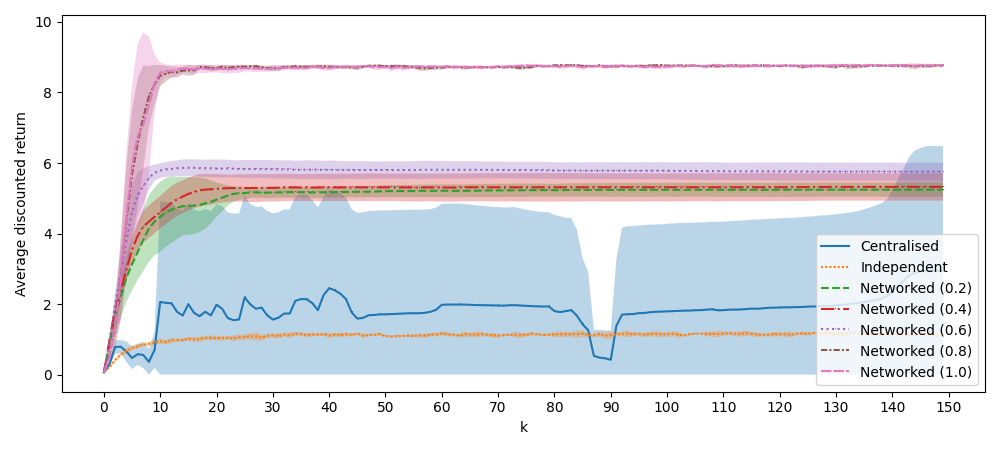}
        \caption{`Target selection' game.}
        \label{}
    \end{subfigure}
    \begin{subfigure}[b]{0.49\textwidth}
        \centering
        \includegraphics[width=\textwidth]{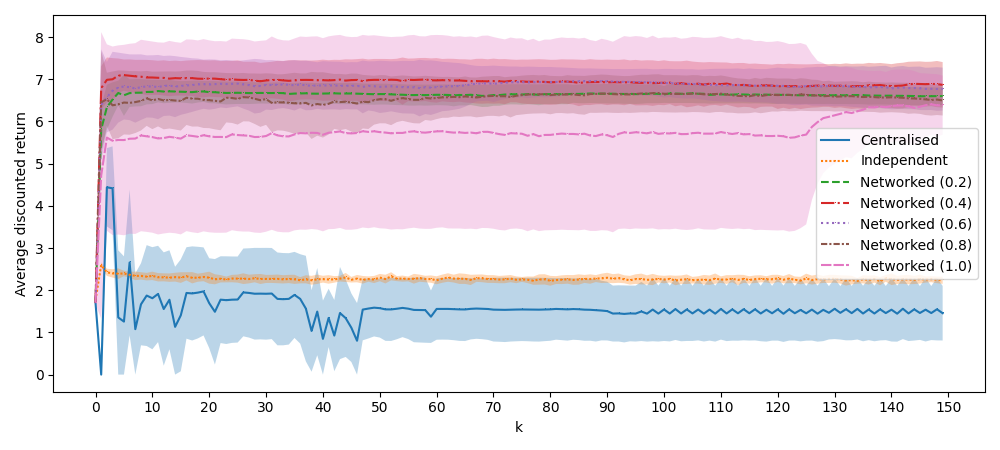}
        \caption{`Disperse' game.}
        \label{}
    \end{subfigure}
    \begin{subfigure}[b]{0.49\textwidth}
        \centering
        \includegraphics[width=\textwidth]{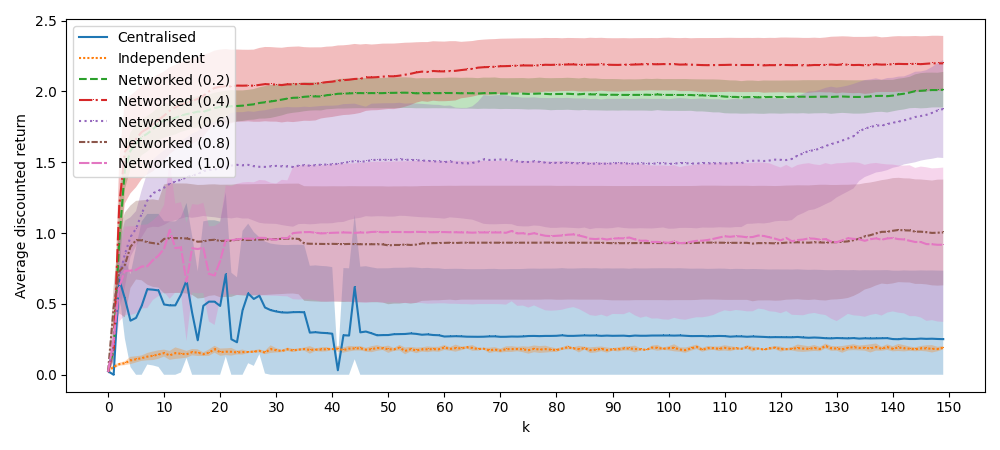}
        \caption{`Target coverage' game.}
        \label{}
    \end{subfigure}
    \begin{subfigure}[b]{0.49\textwidth}
        \centering
        \includegraphics[width=\textwidth]{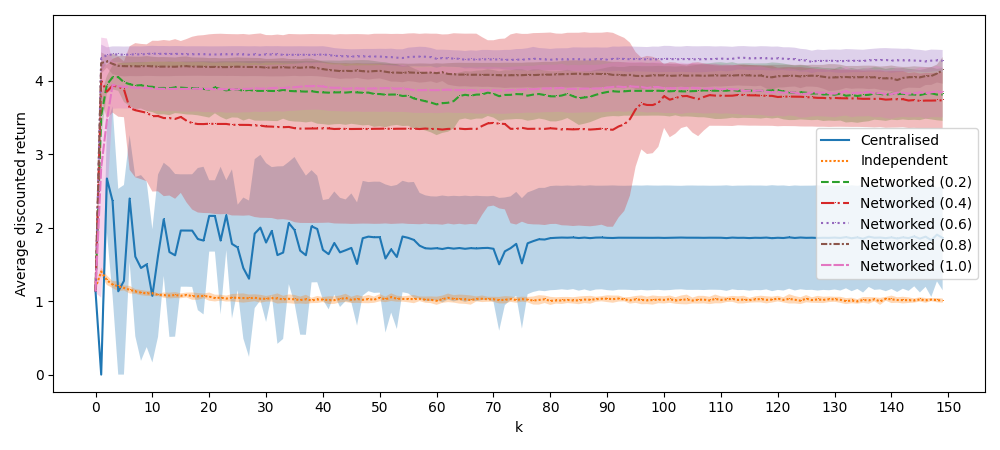}
        \caption{`Beach bar' game.}
        \label{}
    \end{subfigure}
    \begin{subfigure}[b]{0.49\textwidth}
        \centering
        \includegraphics[width=\textwidth]{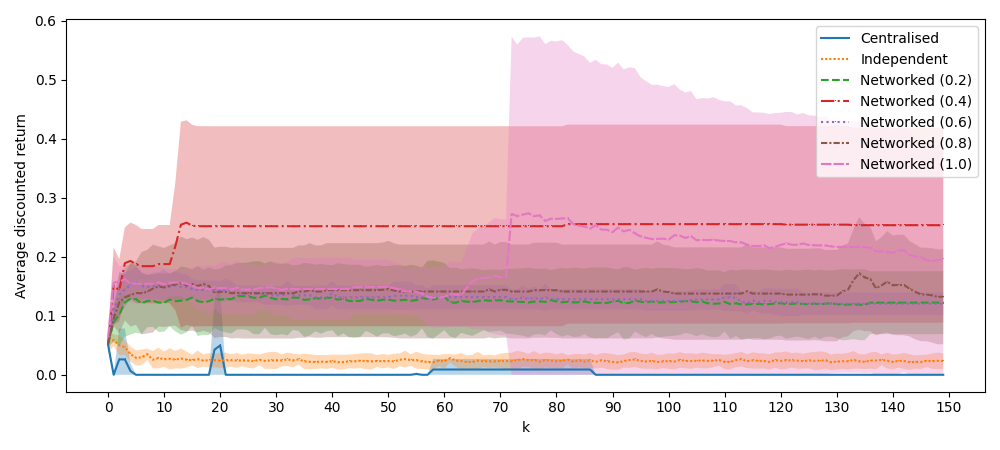}
        \caption{`Shape formation' game.}
        \label{}
    \end{subfigure}
    \caption{Ablation study on population-\textit{independent} policies. No agents, including centralised and networked ones, observe the empirical mean field, and all receive a vector of zeros in its place (so as to keep the neural networks the same size as in the standard setting). $C_r=C_p=1$. Networked populations do not appear to perform substantially differently to the standard population-dependent setting, though some radii (red, 0.4; pink, 1.0) appear to perform slightly better in the `shape formation' game. This is likely because all of our games have stationary solutions, such that observing the mean field is not actually necessary, even if it could potentially be useful (see Sec. \ref{prelim_mfc} for discussion of the conception of MFC as a central planner trying to guide the population to a distribution that maximises the expected return). Indeed, in the coordination games, and particularly the `target selection' game, the central-agent population receives a lower return in this setting, whereas our networked populations are robust to this change.}
    \label{population_independent_policies_fig}
\end{figure*}

\newpage

\begin{figure*}[t]
    \centering
    \begin{subfigure}[b]{0.49\textwidth}
        \centering
        \includegraphics[width=\textwidth]{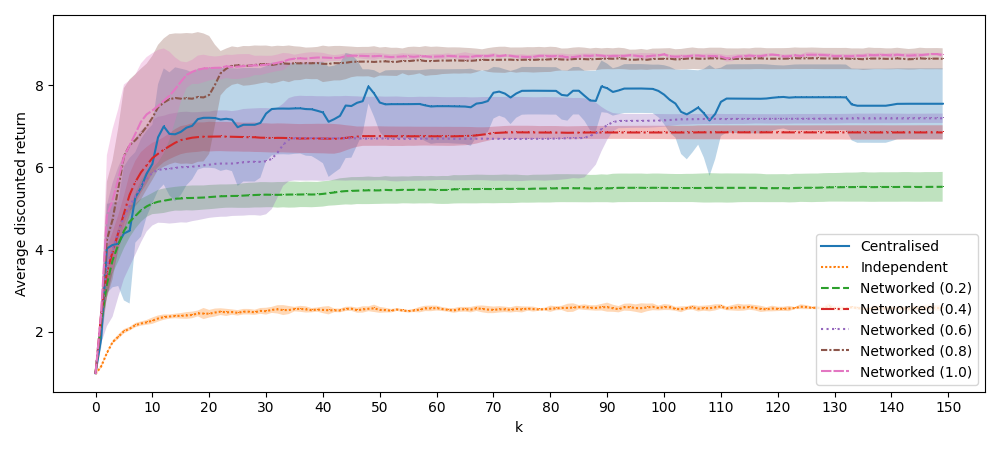}
        \caption{`Cluster' game.}
        \label{}
    \end{subfigure}
    \begin{subfigure}[b]{0.49\textwidth}
        \centering
        \includegraphics[width=\textwidth]{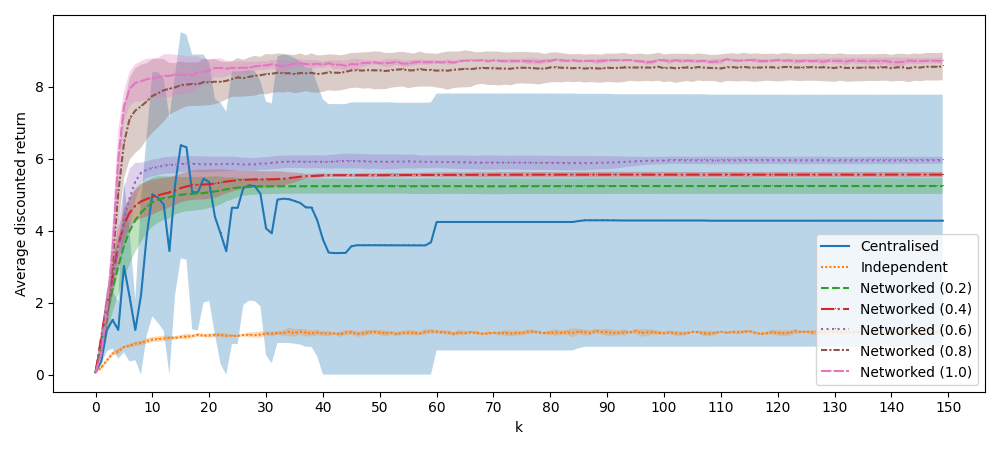}
        \caption{`Target selection' game.}
        \label{}
    \end{subfigure}
    \begin{subfigure}[b]{0.49\textwidth}
        \centering
        \includegraphics[width=\textwidth]{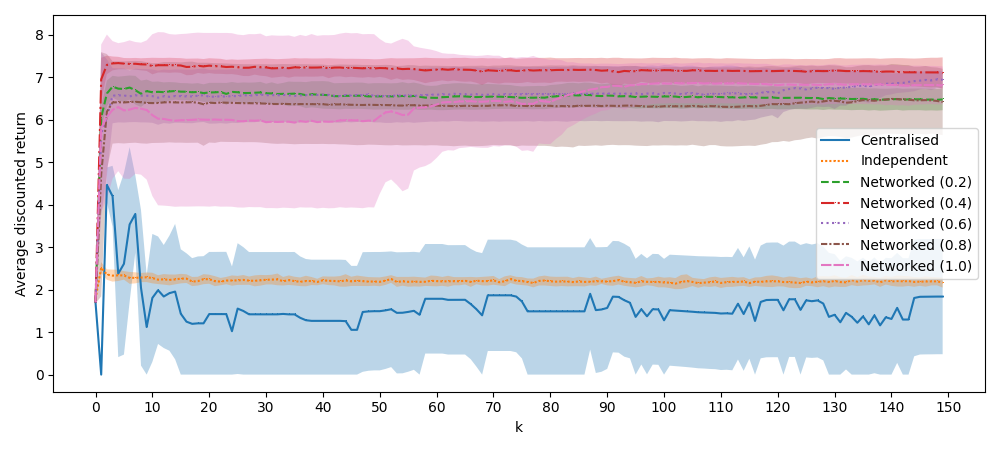}
        \caption{`Disperse' game.}
        \label{}
    \end{subfigure}
    \begin{subfigure}[b]{0.49\textwidth}
        \centering
        \includegraphics[width=\textwidth]{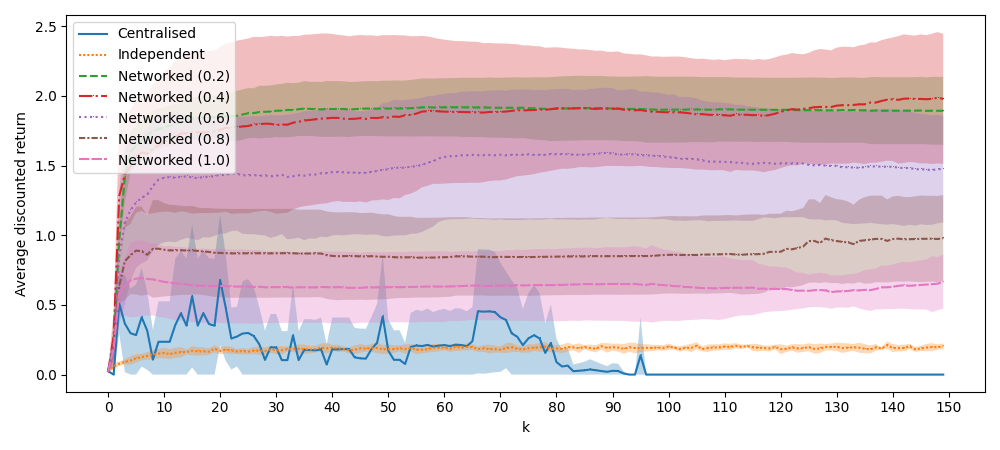}
        \caption{`Target coverage' game.}
        \label{}
    \end{subfigure}
    
    \begin{subfigure}[b]{0.49\textwidth}
        \centering
        \includegraphics[width=\textwidth]{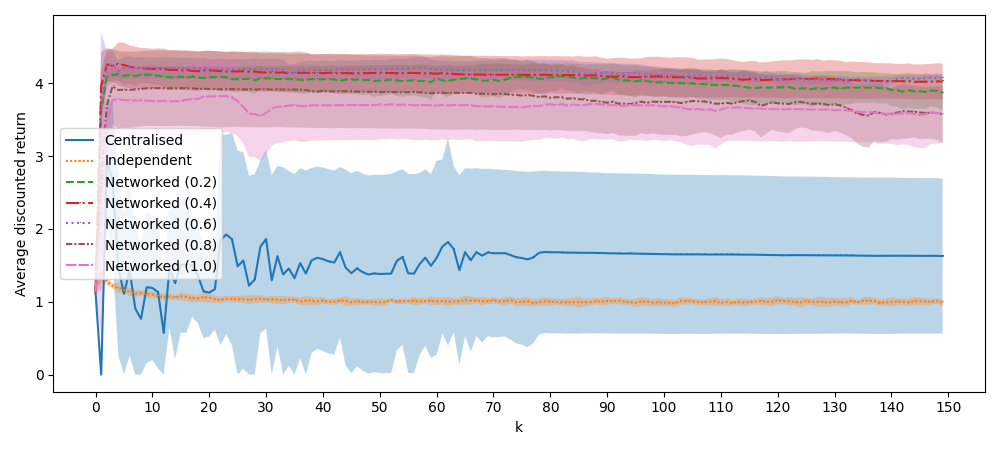}
        \caption{`Beach bar' game.}
        \label{}
    \end{subfigure}
    \begin{subfigure}[b]{0.49\textwidth}
        \centering
        \includegraphics[width=\textwidth]{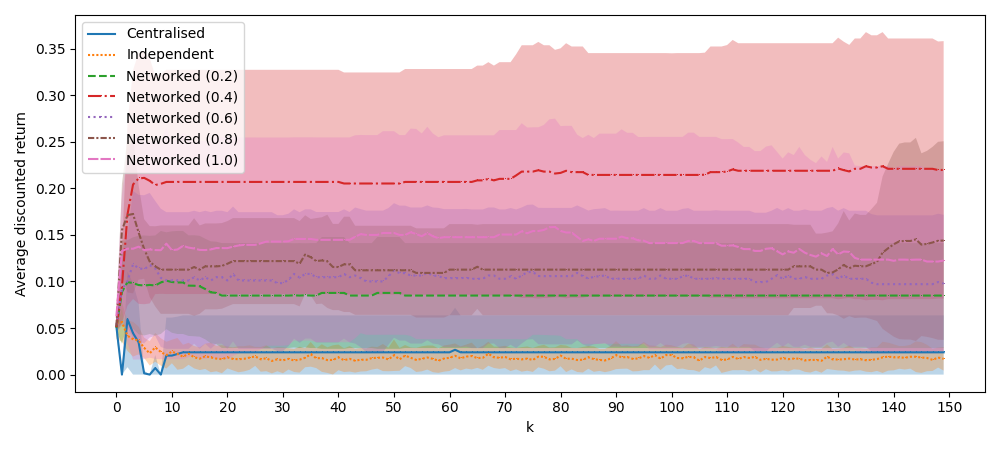}
        \caption{`Shape formation' game.}
        \label{}
    \end{subfigure}
    \caption{Ablation study of Alg. \ref{alg:mean_field_estimation_specific} for estimating the empirical mean field - all agents, including independent ones, directly receive the true global empirical mean field. $C_r=C_p=1$. This does not appear to change performance in the networked populations (apart from greater variance here in the `shape formation' game), nor does it help independent agents. This may be evidence that Alg. \ref{alg:mean_field_estimation_specific} enables networked agents to accurately estimate the global mean field from local observations. However, our ablation study on population-independent policies (Fig. \ref{population_independent_policies_fig}) suggests that not observing the mean field does not markedly disadvantage agents in our experimental settings in any case (apart from for the central-agent populations in the coordination games). This is likely because all of our games have stationary solutions, such that observing the mean field is not necessary. Therefore, in order to confirm the efficacy of Alg. \ref{alg:mean_field_estimation_specific} for estimating the mean field, further evidence is perhaps needed in MFC settings that require population-dependent policies, though \citet{benjamin2024networkedapproximation} has already confirmed this for non-stationary games in the non-cooperative MFG setting.}
    \label{global_mean_field_fig}
\end{figure*}

\newpage

\begin{figure*}[t]
    \centering
    \begin{subfigure}[b]{0.49\textwidth}
        \centering
        \includegraphics[width=\textwidth]{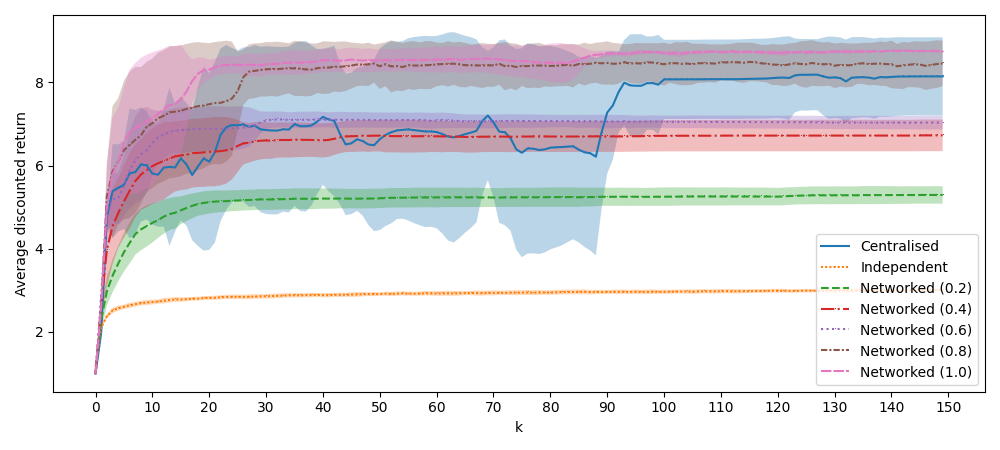}
        \caption{`Cluster' game.}
        \label{}
    \end{subfigure}
    \begin{subfigure}[b]{0.49\textwidth}
        \centering
        \includegraphics[width=\textwidth]{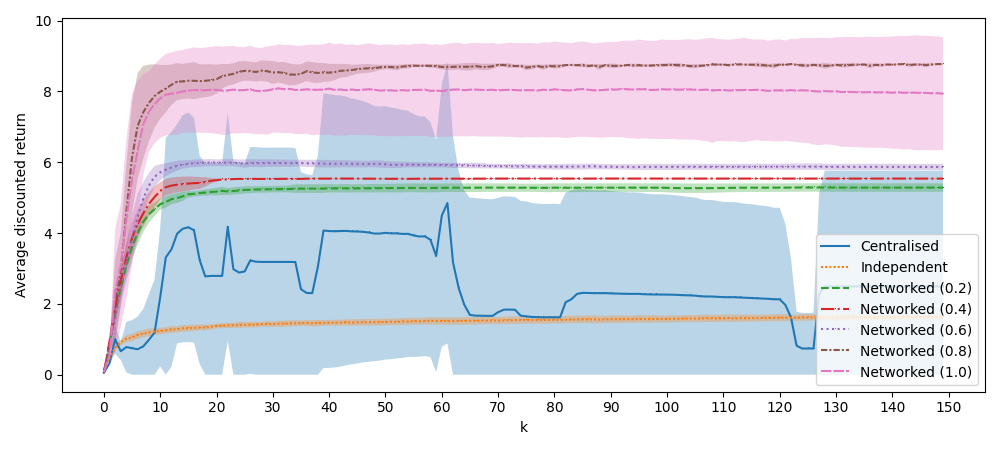}
        \caption{`Target selection' game.}
        \label{}
    \end{subfigure}
    \begin{subfigure}[b]{0.49\textwidth}
        \centering
        \includegraphics[width=\textwidth]{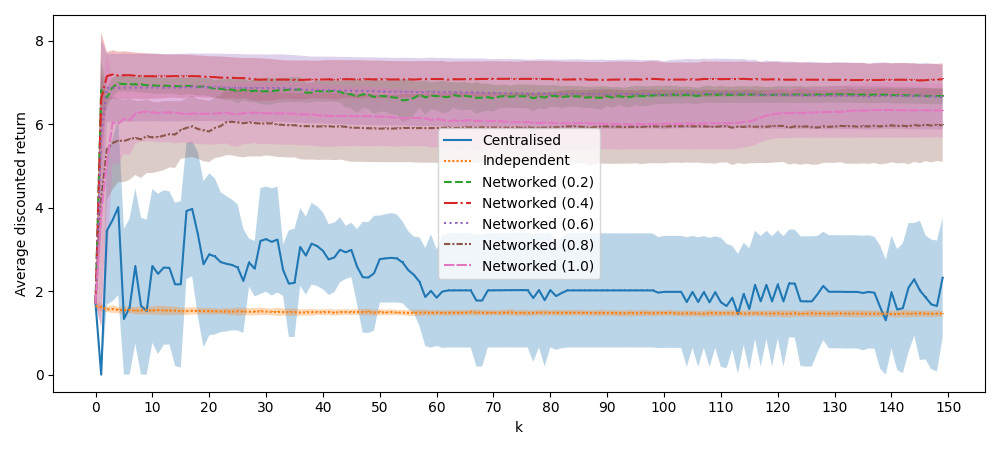}
        \caption{`Disperse' game.}
        \label{}
    \end{subfigure}
    \begin{subfigure}[b]{0.49\textwidth}
        \centering
        \includegraphics[width=\textwidth]{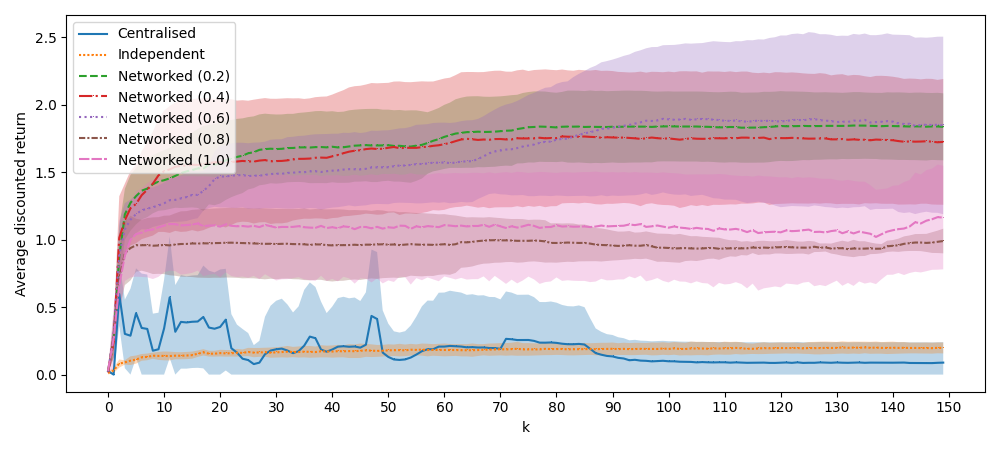}
        \caption{`Target coverage' game.}
        \label{}
    \end{subfigure}
    \begin{subfigure}[b]{0.49\textwidth}
        \centering
        \includegraphics[width=\textwidth]{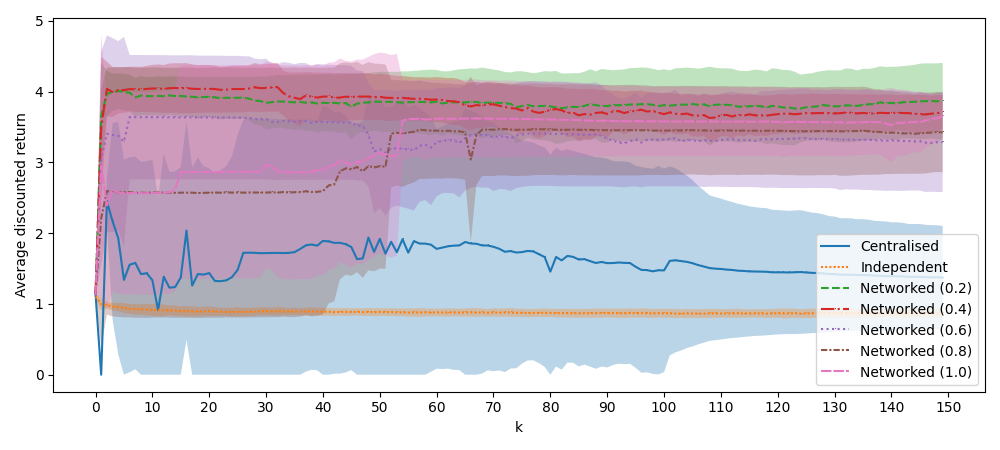}
        \caption{`Beach bar' game.}
        \label{}
    \end{subfigure}
    \begin{subfigure}[b]{0.49\textwidth}
        \centering
        \includegraphics[width=\textwidth]{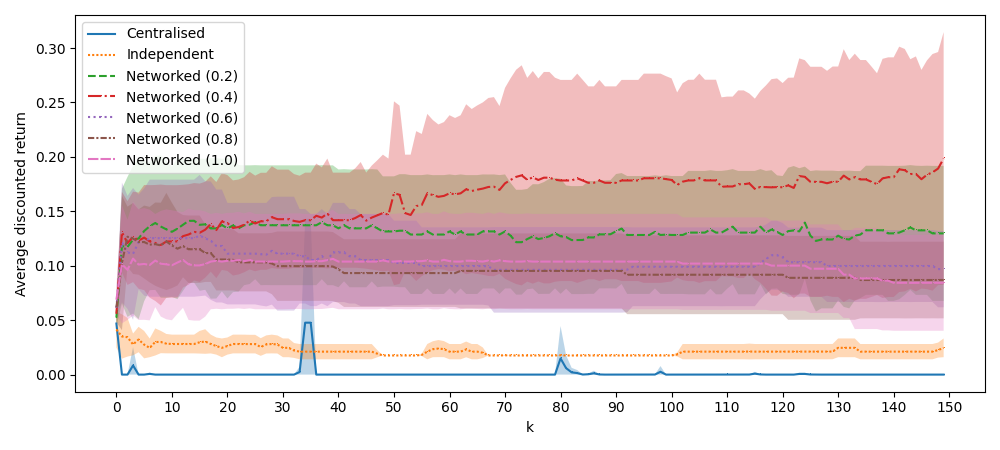}
        \caption{`Shape formation' game.}
        \label{}
    \end{subfigure}
    \caption{Ablation study for observation of true/estimated global average reward $\hat{r}$/$\tilde{\hat{r}}^i_t$, where all agents, including centralised ones, only have access to $r^i_t$, where in the central-agent case $i=1$. $C_e=C_p=1$. The greatest effect of this is on the {central-agent} (blue) populations, which perform much worse in the `target selection' game, and with higher variance in the `cluster' and `beach bar' games, i.e. they suffer without access to the global average reward. The networked agents appear more robust to the loss of the (estimated) average reward, pointing to an additional benefit of the policy communication scheme, though do experience a slight performance decrease, mostly among populations with the largest broadcast radii (pink, 1.0; brown, 0.8), i.e. those most similar to the central-agent case in terms of $\tilde{\hat{r}}^i_t$, as might be expected. In particular, note the greater variance of pink (1.0) in the `target selection' game; slower learning and higher variance of pink (1.0) and brown (0.8) in the `beach bar' game; lower returns for pink (1.0) and brown (0.8) in the `shape formation' game; and slower learning and convergence of the smallest radii (green, 0.2; red, 0.4) in the `target coverage' game. This all demonstrates the usefulness and efficacy of our novel Alg. \ref{average_reward_alg} for decentralised estimation of the global average reward.}
    \label{no_average_reward_fig}
\end{figure*}

\newpage

\begin{figure*}[t]
    \centering
    \begin{subfigure}[b]{0.49\textwidth}
        \centering
        \includegraphics[width=\textwidth]{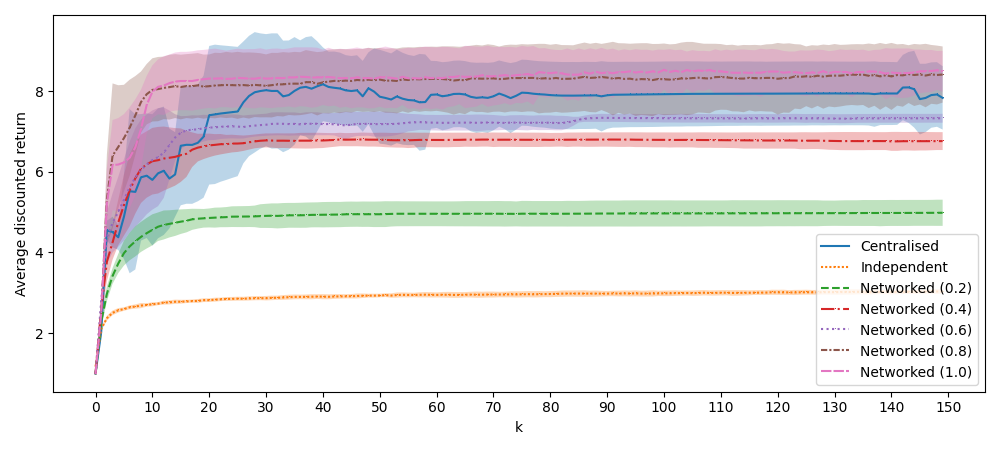}
        \caption{`Cluster' game.}
        \label{}
    \end{subfigure}
    \begin{subfigure}[b]{0.49\textwidth}
        \centering
        \includegraphics[width=\textwidth]{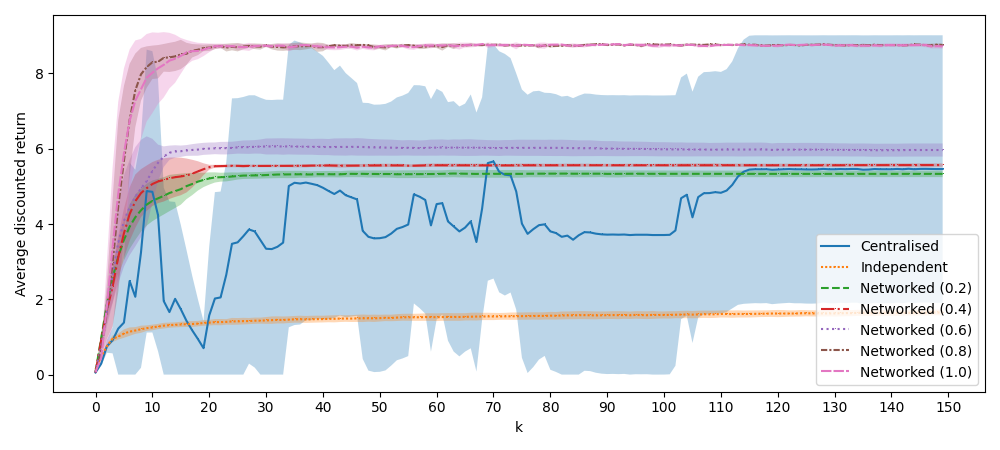}
        \caption{`Target selection' game.}
        \label{}
    \end{subfigure}
    \begin{subfigure}[b]{0.49\textwidth}
        \centering
        \includegraphics[width=\textwidth]{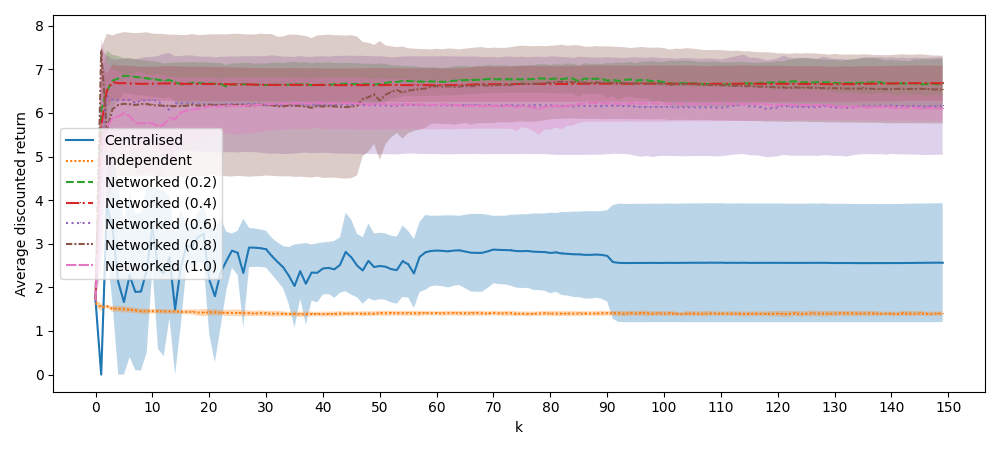}
        \caption{`Disperse' game.}
        \label{}
    \end{subfigure}
    \begin{subfigure}[b]{0.49\textwidth}
        \centering
        \includegraphics[width=\textwidth]{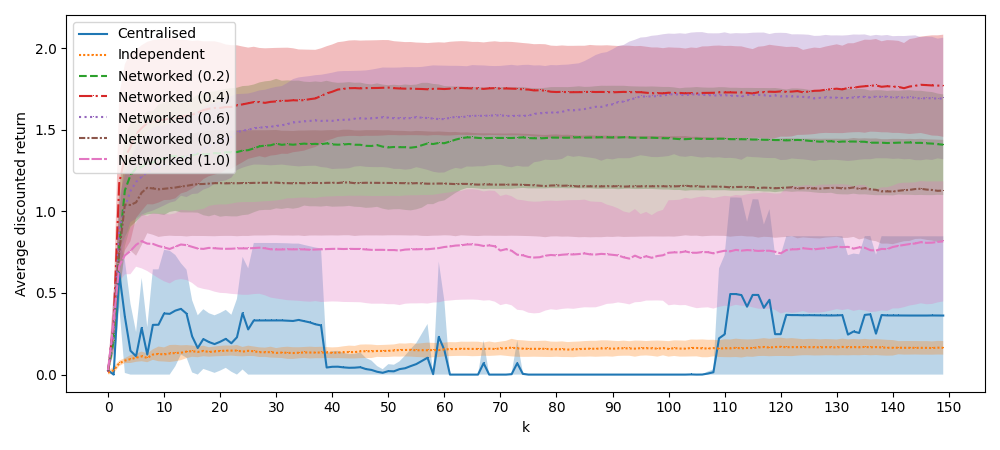}
        \caption{`Target coverage' game.}
        \label{}
    \end{subfigure}
    
    \begin{subfigure}[b]{0.49\textwidth}
        \centering
        \includegraphics[width=\textwidth]{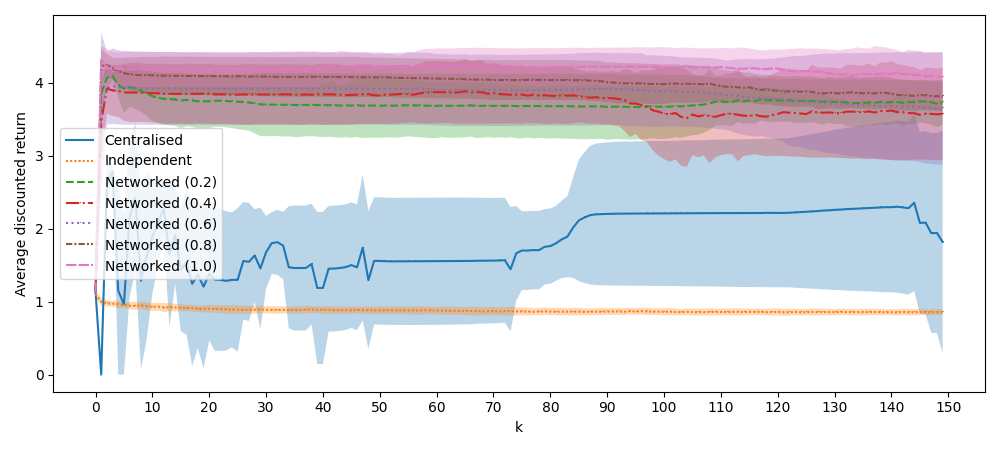}
        \caption{`Beach bar' game.}
        \label{}
    \end{subfigure}
    \begin{subfigure}[b]{0.49\textwidth}
        \centering
        \includegraphics[width=\textwidth]{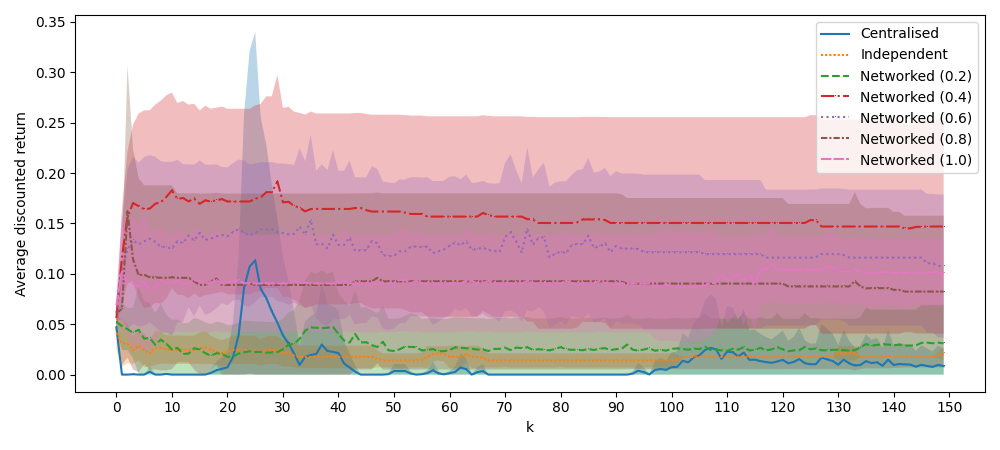}
        \caption{`Shape formation' game.}
        \label{}
    \end{subfigure}
    \caption{Ablation study for Alg. \ref{average_reward_alg} for estimating the true global average reward. All agents, including both networked and independent ones, directly receive the true global average reward such that $\tilde{\hat{r}}^i_t = \hat{r}$. $C_e=C_p=1$. Access to the true average reward does not help networked agents to improve their returns, demonstrating that our novel Alg. \ref{average_reward_alg} already affords networked populations robustness against the lack of access to this global information (having this global information would be an unrealistic assumption in practice). Access to the true average reward also does not help independent agents to improve their returns, suggesting the \textit{policy} communication scheme is the dominant factor in improving the performance of decentralised agents. 
    }
    \label{all_global_average_fig}
\end{figure*}

\newpage

\begin{figure*}[t]
    \centering
    \begin{subfigure}[b]{0.49\textwidth}
        \centering
        \includegraphics[width=\textwidth]{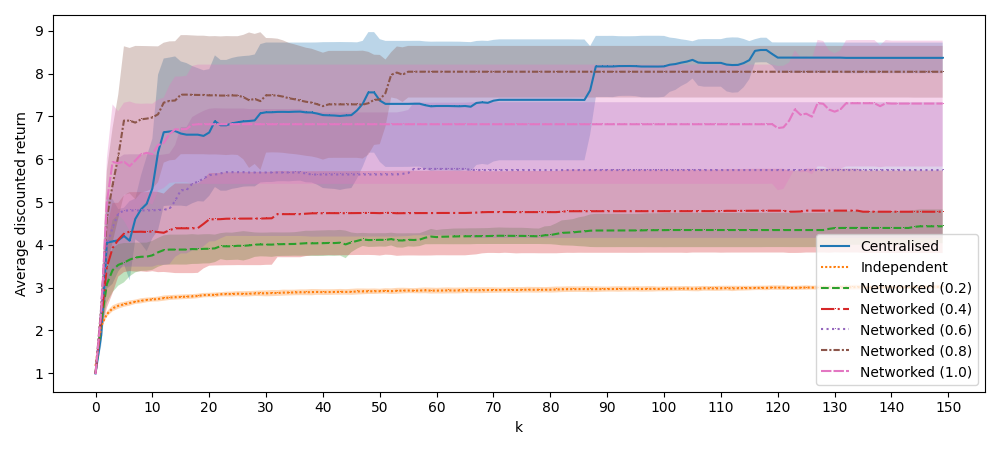}
        \caption{`Cluster' game.}
        \label{}
    \end{subfigure}
    \begin{subfigure}[b]{0.49\textwidth}
        \centering
        \includegraphics[width=\textwidth]{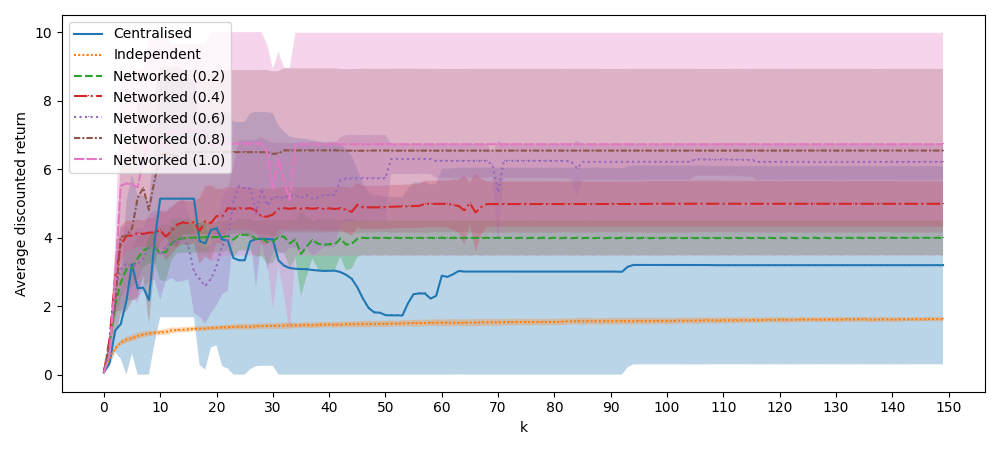}
        \caption{`Target selection' game.}
        \label{}
    \end{subfigure}
    \begin{subfigure}[b]{0.49\textwidth}
        \centering
        \includegraphics[width=\textwidth]{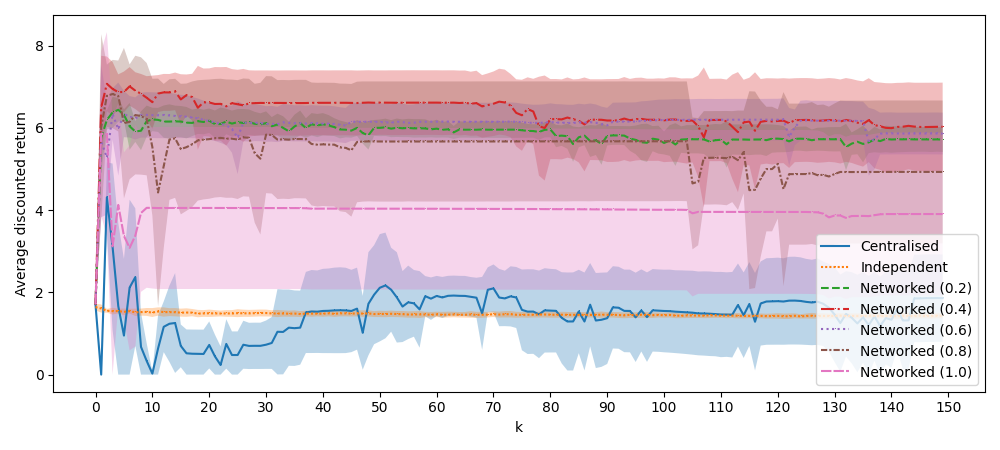}
        \caption{`Disperse' game.}
        \label{}
    \end{subfigure}
    \begin{subfigure}[b]{0.49\textwidth}
        \centering
        \includegraphics[width=\textwidth]{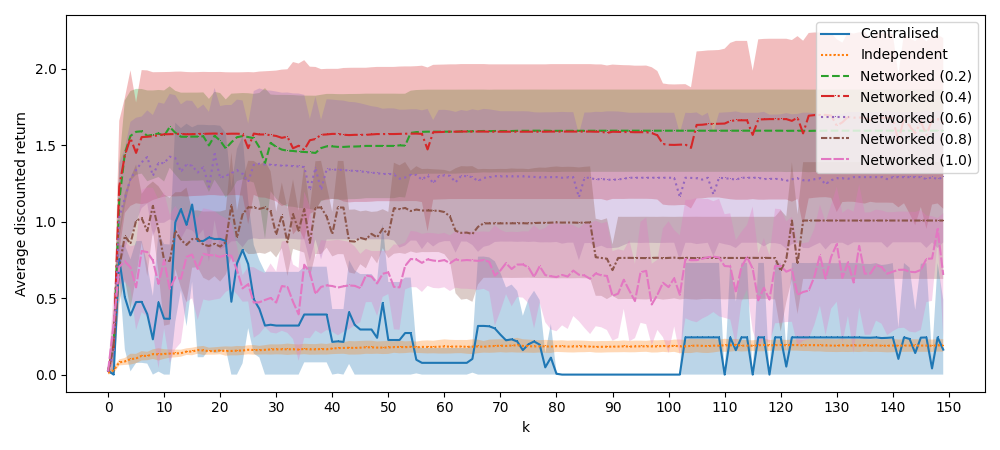}
        \caption{`Target coverage' game.}
        \label{}
    \end{subfigure}
    
    \begin{subfigure}[b]{0.49\textwidth}
        \centering
        \includegraphics[width=\textwidth]{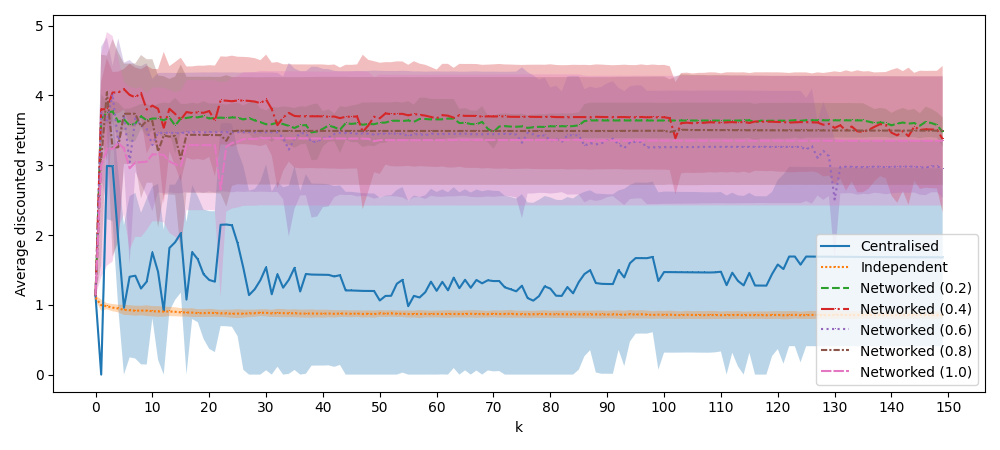}
        \caption{`Beach bar' game.}
        \label{}
    \end{subfigure}
    \begin{subfigure}[b]{0.49\textwidth}
        \centering
        \includegraphics[width=\textwidth]{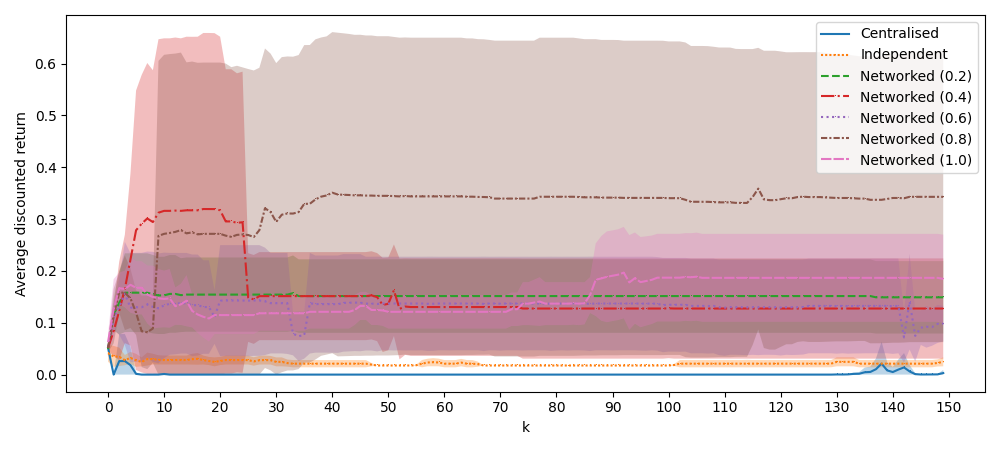}
        \caption{`Shape formation' game.}
        \label{}
    \end{subfigure}
    \caption{Ablation study of the choice of $\tau^{comm}_k$. Here $\forall k \;\tau^{comm}_k =$ 1e-18 (i.e. $\tau^{comm}_k$ is close to 0, turning the softmax into a max function), rather than linearly increasing from 0.001 to 1 across the $K$ iterations as in all other experiments (see Table \ref{Hyperparameters}). $C_e=C_r=C_p=1$. In this setting, networked agents continue to outperform the central-agent (blue) and independent (orange) populations in all games except the `cluster' game, but otherwise generally appear to receive lower average returns and with greater variance. This is because Assumption \ref{approximation_ordering_assumption} on the quality of the finite-step approximations \{$\sigma^i_{k+1}$\}$_{i=1}^{N}$ = $\{\hat{V}^i(\boldsymbol{\pi}_{k+1},\mu_t; E)\}_{i=1}^{N}$ may not always apply in practice, especially as policies get more deterministic closer to convergence. This means the policy estimated to perform the best may not actually be among the best updates, such that enforcing the adoption of this policy can lead to noisy, unstable learning. Using a higher temperature value smooths out this noise (while having a lower temperature at the beginning of training ensures faster learning when the difference in the quality of nascent policies is likely to be more stark, hence our inverse annealing scheme). Moreover, using $\tau^{comm}_k$ close to 0 more effectively enforces consensus on a single policy in the networked case, which in anti-coordination games may also reduce the average return (see Sec. \ref{results_and_discussion}). This all provides empirical evidence for our scheme for $\tau^{comm}_k$, but further optimising the choice might lead to additional performance increase.}
    \label{tau_ablation_fig}
\end{figure*}

\newpage 

\;

\newpage

\;

\newpage

\;

\newpage

\;

\newpage

\;

\newpage

\;

\newpage

\;

\newpage



\section{Conclusion}\label{conclusion_section}

We provided the first algorithms for decentralised training in MFC, as well as the first for online learning in MFC from a single non-episodic run of the empirical system. We did so by modifying existing algorithms for the MFG setting, and contributing a novel algorithm for estimating the global average reward via local communication. We proved theoretically that networked communication accelerates learning over both the independent and central-agent architectures. We supported this with extensive numerical results, accompanied by ablation studies and discussion of the empirical effects of communication radii. 

Our work follows the gold standard in MFGs by presenting experiments on grid world toy environments. Nevertheless future work includes experiments in other games, including non-stationary games, more realistic environments and ones where both the transition function and the reward function depend on the mean field. Please note, however, that \citet{benjamin2024networkedapproximation} already demonstrated in the MFG setting that the communication scheme affords faster learning when both the transition and reward functions depend on the mean field.

In Sec. \ref{theory_section} we give theoretical results showing that our networked algorithm can outperform the central-agent and independent alternative. We leave more general theoretical results, such as proofs of convergence and sample complexity, for future work. 

Our algorithms contain numerous inner loops and thus require synchronisation between communicating agents. Our ablation studies of the sub-routines and our experiment on robustness to communication failures (Fig. \ref{comm_failure_fig}) indicate that synchronisation failure is not necessarily a problem in practice, but future work nevertheless lies in simplifying the nested loops of our algorithms. 

In grid-world settings such as those in our experiments, passing the (estimated or true global) mean-field distribution as a flat vector to the Q-network ignores the geometric structure of the problem. \citet{perrin2022generalization} therefore proposes to create an embedding of the distribution by first passing the vector to a convolutional neural network, essentially treating the categorical distribution as an image. This technique is also followed in \citet{wu2024populationaware} (for their additional experiments, but not in the main body of their paper). As future work, we can test whether such a method increases the usefulness of observing the mean field in population-dependent policies, and therefore increases the importance of be able to accurately estimate the global mean field via Alg. \ref{alg:mean_field_estimation_specific}.

\subsubsection*{Broader Impact Statement}
We identified no specific ethical concerns regarding our work, which explores new game theoretical and machine learning algorithms in general settings.

\bibliography{main}
\bibliographystyle{tmlr}


\end{document}